\newcommand{\mypar}[1]{{\bf #1.}}
\theoremstyle{definition}
\newtheorem{defn}{Definition}
\newtheorem{myLem}{Lemma}
\newtheorem{myAlg}{Algorithm}
\newtheorem{myThm}{Theorem}
\newtheorem{myCorollary}{Corollary}
\newcommand{\R}{\ensuremath{\mathbb{R}}}
\newcommand{\eqlabel}[1]{ \stackrel{(#1)}{=} }
\newcommand{\leqlabel}[1]{ \stackrel{(#1)}{\leq} }
\DeclareMathOperator{\Id}{I}
\DeclareMathOperator{\BL}{BL}
\DeclareMathOperator{\BLT}{ABL}
\DeclareMathOperator{\GS}{GS}
\DeclareMathOperator{\Dd}{D}
\def\x{\mathbf{x}}
\def\xhat{\widehat{\x}}
\def\y{\mathbf{y}}
\def\u{\mathbf{u}}
\def\vv{\mathbf{v}}
\def\w{\mathbf{w}}
\def\N{\mathcal{N}}
\def\F{\mathcal{F}}
\def\V{\mathcal{V}}
\def\M{\mathcal{M}}
\DeclareMathOperator{\Adj}{A}
\DeclareMathOperator{\Pj}{P}
\DeclareMathOperator{\Um}{U}
\DeclareMathOperator{\Vm}{V}
\DeclareMathOperator{\W}{W}
\begin{document}
\title{Signal Recovery on Graphs: \\Fundamental Limits of Sampling Strategies}
\author{Siheng~Chen,~\IEEEmembership{Student~Member,~IEEE}, Rohan~Varma,~\IEEEmembership{Student~Member,~IEEE},
Aarti~Singh,
  Jelena~Kova\v{c}evi\'c,~\IEEEmembership{Fellow,~IEEE}
  \thanks{Copyright (c) 2016 IEEE. Personal use of this material is permitted.
  
  S. Chen and R. Varma are with the Department of Electrical and Computer
    Engineering, Carnegie Mellon University, Pittsburgh, PA, 15213
    USA. Emails: sihengc, rohanv@andrew.cmu.edu. A. Singh is with the Department of Machine Learning, Carnegie Mellon University, Pittsburgh, PA, 15213
    USA. Email: aarti@cs.cmu.edu. J. Kova\v{c}evi\'c is with the
      Departments of Electrical and
      Computer Engineering and Biomedical Engineering, Carnegie Mellon University, Pittsburgh,
      PA. Email: jelenak@cmu.edu.  }
}
\date{}
 \maketitle
\begin{abstract}
  This paper builds theoretical foundations for the recovery of a
  newly proposed class of smooth graph signals, approximately
  bandlimited graph signals, under three sampling strategies: uniform
  sampling, experimentally designed sampling and active sampling. We
  then state minimax lower bounds on the maximum risk for the
  approximately bandlimited class under these three sampling
  strategies and show that active sampling cannot fundamentally
  outperform experimentally designed sampling. We propose a recovery
  strategy to compare uniform sampling with experimentally designed
  sampling. As the proposed recovery strategy lends itself well to
  statistical analysis, we derive the exact mean square error for each
  sampling strategy. To study convergence rates, we introduce two
  types of graphs and find that (1) the proposed recovery strategy
  achieves the optimal rates; and (2) the experimentally designed
  sampling fundamentally outperforms uniform sampling for Type-2 class
  of graphs. To validate our proposed recovery strategy, we test it on
  five specific graphs: a ring graph with $k$ nearest neighbors, an
  Erd\H{o}s-R\'enyi graph, a random geometric graph, a small-world
  graph and a power-law graph and find that experimental results match
  the proposed theory well. This work also presents a comprehensive
  explanation for when and why sampling for semi-supervised learning
  with graphs works.
\end{abstract}
\begin{keywords}
  signal processing on graphs, signal recovery, experimentally
  designed sampling, active sampling, semi-supervised learning
 \end{keywords}

\section{Introduction}

The massive amounts of data being generated from various sources,
including online social networks, citation networks, biological
networks, and physical infrastructures has inspired an emerging field
of research for analyzing data supported on graphs~\cite{Jackson:08,
  Newman:10}. Signal processing on graphs is a theoretical framework
for the analysis of high-dimensional data with complex, nonregular
structures~\cite{ShumanNFOV:13,SandryhailaM:14}; it extends classical
discrete signal processing to signals with such structure, models it
by a graph and signals by graph signals, generalizing concepts and
tools from classical discrete signal processing to graph signal
processing.  Recent work involves sampling of graph
signals~\cite{ChenVSK:15, AnisGO:15, WangLG:14}, recovery of graph
signals~\cite{ChenSMK:14,NarangGO:13}, representations for graph
signals~\cite{ZhuM:12, ChenLVSK:16}, uncertainty principles on
graphs~\cite{AgaskarL:13, TsitsveroBL:15}, graph dictionary
construction~\cite{ThanouSF:14}, semi-supervised learning with
graphs~\cite{ChenCRBGK:13,EkambaramFAB:13}, graph
denoising~\cite{NarangO:12, ChenSMK:14a}, community detection and
clustering on graphs~\cite{Tremblay:14, DongFVN:14, ChenO:14},
graph-based filter banks~\cite{NarangO:12,NarangO:13,TremblayB:16} and
graph-based transforms~\cite{SandryhailaM:13,HammondVG:11,NarangSO:10,
  ShumanFV:16} , among others.

In this paper, we consider the classical signal processing task of
sampling and recovery~\cite{VetterliKG:12,KovacevicP:08}. As the
bridge connecting sequences and functions, classical sampling theory
shows that a bandlimited function can be perfectly recovered from its
sampled sequence if the sampling rate is high enough. In the 60 years
since Shannon, many new regular and irregular sampling and recovery
frameworks have emerged to recover signals with different properties. For
example, finite rate of innovation sampling considers sampling and
recovery of signals that have a finite number of degrees of freedom
per unit time~\cite{VetterliMB:02,DragottiVB:07}, compressed sensing
considers sampling and recovery of sparse signals or signals that can
be sparsely represented by some coherent and redundant
dictionary~\cite{ CandesRT:06a, CandesENR:10} and subspace sampling
considers sampling and recovery of signals from a union of
subspaces~\cite{LuD:08, BlumensathD:09}.

The interest in sampling and recovery of graph signals has increased
in the last few years~\cite{WangLG:14, Pesenson:08, ZhuM:12a,
  AnisGO:14, ChenSK:15a}.  In~\cite{ChenSMK:14}, authors proposed an
algorithm to recover graph signals that have small variation under
uniform sampling, with an upper bound on recovery
error. In~\cite{ChenVSK:15,AnisGO:15}, authors proposed a sampling
theory for graph signals and showed perfect recovery for bandlimited
graph signals under experimentally designed sampling. Extensions
include a fast distributed algorithm~\cite{WangLG:15a, ChenSK:15b},
local weighted measurements~\cite{WangCG:12}, local
aggregation~\cite{MarquesSGR:15} and percolation from seeding
nodes~\cite{SegarraMLR:15}.

In this paper, we focus on smooth graph signals, that is, signals
  whose coefficients at each node are similar to the coefficients of
  its neighbors. We propose a new class of smooth graph signals,
called~\emph{approximately bandlimited} and build a theoretical
foundation to understand sampling and recovery of this class under
uniform sampling, experimentally designed sampling and active
sampling. We propose a recovery strategy to compare uniform sampling
and experimentally designed sampling; this recovery strategy is an
unbiased estimator for low-frequency components and achieves the
optimal rate of convergence under some assumptions on the graph
structures. In spirit, our work follows previous work that studied the
theoretical capabilities of passive and active sampling for recovering
functions from samples~\cite{CastroWN:05,CastroN:08}; the difference
is that we consider a discrete setting and deal with irregular
structures. For a smooth function, active sampling, experimentally
designed sampling and uniform sampling have the same performance from
a statistical perspective~\cite{KorostelevT:03, CastroWN:05}. For
approximately bandlimited graph signals, however, we will see that
while active sampling achieves the same rate of convergence as
experimentally designed sampling, experimentally designed sampling
fundamentally outperforms uniform sampling when the graph is
irregular.

To validate the recovery strategy, we test it on five specific graphs:
a ring graph with $k$ nearest neighbors, an Erd\H{o}s-R\'enyi graph, a
random geometric graph, a small-world graph and a power-law graph, and
show that experimental results match the proposed theory well.

\mypar{Contributions} The contributions of the paper are as follows:
We propose
\begin{itemize}
\item a new class of smooth graph signals related to existing classes
  of smooth graph signals;
\item minimax lower bounds on the recovery error under three sampling
  strategies;
\item a recovery strategy that achieves optimal rates of convergence
  on two specific types of graphs;
\item a statistical analysis of graph structures; and
\item a comprehensive explanation of when and why experimentally
  designed sampling works for semi-supervised learning with graphs.
\end{itemize}

\mypar{Outline of the paper} Section~\ref{sec:DSPG} briefly reviews
graph signal processing; Section~\ref{sec:formulation} reviews smooth
graph signal models and formulates sampling and recovery
strategies.  We propose the minimax lower bounds on the recovery
  errors in Section~\ref{sec:lower} and a recovery strategy that works
  for both uniform sampling and experimentally designed sampling in
  Section~\ref{sec:upper}. Section~\ref{sec:convergence} combines the
  results from the two previous sections and shows the optimal
  convergence rates of recovery on two types of graphs.  The proposed
recovery strategy is evaluated in Section~\ref{sec:experiments} on
five known graph classes. Section~\ref{sec:conclusions} concludes the
paper and provides pointers to future directions.

\section{Signal Processing on Graphs}
\label{sec:DSPG}
We now briefly review signal processing on
graphs~\cite{SandryhailaM:14}, which lays the foundation for the
proposed work.  We consider a graph $G = (\V,\Adj)$, where $\V =
\{v_0,\ldots, v_{N-1}\}$ is the set of nodes and $\Adj \in \R^{N
    \times N}$ is the graph shift, or a weighted adjacency matrix,
  representing a discrete version of the graph. The edge weight
$\Adj_{i,j}$ between nodes $v_i$ and $v_j$ is a quantitative
expression of the underlying relation between them, such as a
similarity, a dependency, or a communication pattern. To guarantee
that the shifted signal is properly scaled for comparison with the
original one~\cite{SandryhailaM:131}, we normalize the graph shift
such that $|\lambda_{\max} (\Adj)| = 1$. After the node order is
fixed, the graph signal is written as a vector
\begin{equation}
  \label{eq:graph_signal}
  \nonumber
  \x \ = \ \begin{bmatrix}
    x_0 & x_1 & \ldots & x_{N-1}
  \end{bmatrix}^T \in \R^N.
\end{equation}
 
The Jordan decomposition of $\Adj$ is~\cite{SandryhailaM:131}
\begin{equation}
  \label{eq:eigendecomposition}
  \Adj=\Vm \Lambda \Um,
\end{equation}
where the generalized eigenvectors of $\Adj$ form the columns of
matrix $\Vm$ , $\Um = \Vm^{-1}$ (the norm of each column is normalized
to one), and the eigenvalue matrix $\Lambda\in\R^{N\times N}$ is
  the block diagonal matrix of corresponding eigenvalues $\lambda_0,
  \, \lambda_1, \ \ldots, \, \lambda_{N-1}$ of $\Adj$ in descending
  order ($1 = \lambda_0 \geq \lambda_1 \geq \ldots, \, \geq
  \lambda_{N-1} \geq -1)$. These eigenvalues represent frequencies on
the graph~\cite{SandryhailaM:131}.

The~\emph{graph Fourier transform} of $\x \in \R^N$ is
\begin{equation}
  \label{eq:graph_FT}
  \widehat{\x} = \Um \x,
\end{equation}
 and the~\emph{inverse graph Fourier transform} is
\begin{displaymath}
 \x  =  \Vm  \widehat{\x}  = \sum_{k=0}^{N-1}  \widehat{x}_k \vv_k,
\end{displaymath}
where $\vv_k$ is the $k$th column of $\Vm$ and $\widehat{x}_k $ is the
$k$th component in $\widehat{\x}$. The vector $\widehat{\x}$
in~\eqref{eq:graph_FT} represents the signal's expansion in the
eigenvector basis and describes the frequency components of the graph
signal $\x$. The inverse graph Fourier transform reconstructs the
graph signal by combining graph frequency components. In general,
$\Vm$ is not orthonormal; to restrict its behavior, we assume that
\begin{equation*}
  \alpha_1  \left\| \x  \right\|_2^2\leq  \left\| \Vm \x \right\|_2^2 \leq  \alpha_2 \left\| \x  \right\|_2^2, ~~{\rm for~all}~\x \in \R^N,
\end{equation*}
where $\alpha_1, \alpha_2 > 0$, that is, $\Vm$ is a Riesz basis with
stability constants $\alpha_1, \alpha_2$~\cite{VetterliKG:12}.  When
$\Adj$ represents an undirected graph, it is symmetric, we have
$\Um = \Vm^T$ and both $\Um$ and $\Vm$ are orthonormal. For
  simplicity, we consider $\Adj$ to represent an undirected graph in
  this paper, that is, we assume $\alpha_1= \alpha_2 = 1$, but the
  proposed graph signal models and sampling strategies apply equally
  well for directed graphs. Note that there exist other versions of
the graph Fourier transform based on different approaches to
normalize the adjacency matrix, such as the eigenvector matrix of the
graph Laplacian matrix~\cite{ShumanNFOV:13}; our proposed methods work
for all the versions of the graph Fourier transforms.
Table~\ref{table:parameters} lists the notations used in the paper.

\begin{table}[h]
  \footnotesize
  \begin{center}
    \begin{tabular}{@{}lll@{}}
      \toprule
      {\bf Symbol}  & {\bf Description} & {\bf Dimension}\\
      \midrule \addlinespace[1mm]
      $\V$ &  set of graph nodes &   $N$ \\
      $ \Adj $ &  graph shift (adjacency matrix) &  $N \times N$\\ 
      $ \x $ &  graph signal &  $N$\\
      $\Vm$  & graph Fourier basis &  $N \times N$\\             	  $ \widehat{\x}$ &  graph signal in the frequency domain &  $N$\\
      $\Psi $ & sampling operator &  $m \times N$\\ 
	  $ \M $ &  sample indices &  \\
	  $\x_\M $ & sampled signal coefficients of $\x$&  $m$\\ 
	  $\widehat{\x}_{(K)} $ & first $K$ coefficients of $\widehat{\x}$&  $K$\\ 
	  $\widehat{\x}_{(-K)}$ &  last $N-K$ coefficients of $\widehat{\x}$ &  $N-K$\\ 
	  $\Vm_{(K)} $ & first $K$ columns of $\Vm$&  $N \times K$\\ 
	  $\Vm_{(-K)} $ & last $N-K$ columns of $\Vm$&  $N \times (N-K)$\\ 
	  $\pi $ & sampling score &  $N$\\
      \bottomrule
    \end{tabular}
  \end{center}
  \caption{\label{table:parameters}
    Key notation used in the paper.
 }
\end{table}

\section{Problem Formulation}
\label{sec:formulation}
We now review three classes of smooth graph signals and show
connections between them. We then describe the sampling and recovery
strategies of interest, connecting this work to the previous work on
sampling theory on graphs.

\subsection{Graph Signal Model}
\label{sec:model}
 We consider a graph
signal to be smooth when coefficients at each node are
  similar to the coefficients of its neighbors. We have previously defined two classes of smooth graph signals in~\cite{ChenSMK:14,ChenVSK:15}.
\begin{defn}
  \label{df:GS}
  A graph signal $\x \in \R^N$ is~\emph{globally smooth} on a graph
  $\Adj \in \R^{N \times N}$ with parameter $\eta \geq 0 $, when
  \begin{equation}
    \label{eq:GS}
    \left\|  \x - \Adj \x \right\|_2^2 \ \leq \ \eta \left\| \x \right\|_2^2.
  \end{equation}
  Denote this class of graph signals by $\GS_{\Adj}( \eta)$.
\end{defn} 
In the above, $\Adj \x$ is the shifted version of $\x$ and $\x -
  \Adj \x$ gives the first-order difference~\cite{SandryhailaM:131}.
Since we normalized
the graph shift such that $|\lambda_{\max} (\Adj)| = 1$, 
\begin{eqnarray}
\nonumber
&&  \left\| \x - \Adj \x \right\|_2^2 \ = \   \left( \left\| \x \right\|_2 +  \left\| \Adj \x \right\|_2 \right)^2
\\
\label{eq:number4}
& = & \left( \left\| \x \right\|_2 +  \left\| \Adj  \right\|_2 \left\| \x  \right\|_2  \right)^2
\ \leq \   4 \left\| \x \right\|_2^2.
\end{eqnarray}
Thus, when $\eta \geq 4$, all graph signals satisfy~\eqref{eq:GS}.
\begin{defn}
  \label{df:BL}
  A graph signal $\x \in \R^N$ is~\emph{bandlimited} on a graph $\Adj \in \R^{N \times N}$ with parameter $K \in \{0, 1, \cdots,
  N-1\}$, when the graph frequency components satisfy
  \begin{displaymath}
    \widehat{x}_k  \ = \ 0 \quad {\rm for~all~}  \quad k \geq K.
  \end{displaymath}
  Denote this class of graph signals by $\BL_{\Adj}(K)$.
\end{defn}
Note that the original definition in~\cite{ChenVSK:15} just requires
$\xhat$ to be $K$-sparse, meaning that $\xhat$ is not necessarily
lowpass. Here, instead, we fix the ordering of frequencies and
requires the bandlimited graph signals to be lowpass. The following
theorem details the relationship between these two classes.
\begin{myThm}
  \label{thm:BLvsGS}
  For any $K \in \{0, 1, \cdots, N-1\}$, $\BL_{\Adj}(K)$ is a subset
  of $\GS_{\Adj}(\eta)$, when $\eta \geq (1-\lambda_{K-1})^2$.
\end{myThm}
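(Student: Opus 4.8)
The plan is to translate both sides of the defining inequality \eqref{eq:GS} into the graph Fourier domain, where the bandlimited assumption becomes a simple support restriction, and then bound a weighted sum of squared Fourier coefficients by its largest weight. First I would fix $\x \in \BL_{\Adj}(K)$ and use the eigendecomposition $\Adj = \Vm\Lambda\Um$ together with $\Um = \Vm^T$ (the undirected case assumed in the paper, so $\Vm$ is orthonormal). Then $\Adj\x = \Vm\Lambda\Um\x = \Vm\Lambda\widehat{\x}$, hence $\x - \Adj\x = \Vm(\Id - \Lambda)\widehat{\x}$, and orthonormality of $\Vm$ gives
\begin{equation*}
\left\| \x - \Adj\x \right\|_2^2 = \left\| (\Id-\Lambda)\widehat{\x} \right\|_2^2 = \sum_{k=0}^{N-1} (1-\lambda_k)^2\, \widehat{x}_k^2, \qquad \left\| \x \right\|_2^2 = \left\| \widehat{\x} \right\|_2^2 = \sum_{k=0}^{N-1} \widehat{x}_k^2 .
\end{equation*}

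Next I would invoke the bandlimited hypothesis $\widehat{x}_k = 0$ for $k \geq K$ to reduce both sums to their first $K$ terms, so that $\left\| \x - \Adj\x \right\|_2^2 = \sum_{k=0}^{K-1} (1-\lambda_k)^2\, \widehat{x}_k^2$. The key observation is that, because the eigenvalues are listed in descending order with $1 = \lambda_0 \geq \lambda_1 \geq \cdots \geq \lambda_{N-1} \geq -1$, the map $k \mapsto 1-\lambda_k$ is nonnegative and nondecreasing; therefore $(1-\lambda_k)^2 \leq (1-\lambda_{K-1})^2$ for every $k \in \{0,\ldots,K-1\}$. Substituting this bound coefficient by coefficient yields $\left\| \x - \Adj\x \right\|_2^2 \leq (1-\lambda_{K-1})^2 \sum_{k=0}^{K-1}\widehat{x}_k^2 \leq (1-\lambda_{K-1})^2 \left\| \x \right\|_2^2$. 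Hence any $\eta \geq (1-\lambda_{K-1})^2$ makes \eqref{eq:GS} hold, i.e.\ $\x \in \GS_{\Adj}(\eta)$, which is exactly the claimed inclusion.

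I do not anticipate a genuine obstacle here; the proof is essentially a change of basis plus a monotonicity remark. The only points that need a little care are (i) checking that $1-\lambda_k \geq 0$ for all $k$ (since $\lambda_k \leq \lambda_0 = 1$), so that squaring preserves the ordering and the maximum of $(1-\lambda_k)^2$ over $k \leq K-1$ is attained at $k = K-1$; and (ii) the degenerate index $K = 0$, where $\BL_{\Adj}(0)$ is the trivial subspace $\{\ZeroMatrix\}$ and the statement holds vacuously for every $\eta \geq 0$. For the more general directed / Riesz-basis setting one would additionally carry the stability constants $\alpha_1,\alpha_2$ through the two norm identities, replacing the threshold by $\eta \geq (\alpha_2/\alpha_1)(1-\lambda_{K-1})^2$, but under the paper's standing assumption $\alpha_1 = \alpha_2 = 1$ this refinement is unnecessary.
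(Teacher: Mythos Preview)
Your argument is correct and is in fact the cleaner of the two. The paper proceeds differently: it first argues \emph{pointwise}, writing $x_i-(\Adj\x)_i=\sum_{k<K}(1-\lambda_k)\widehat{x}_k(\vv_k)_i$ and claiming the node-by-node bound $|x_i-(\Adj\x)_i|\leq(1-\lambda_{K-1})|x_i|$, and only afterwards sums the squares over all $i$. That intermediate local step (the inequality labeled~(b) in the paper) is delicate --- replacing each weight $(1-\lambda_k)$ by the largest one inside an absolute value is not generally valid when the terms $\widehat{x}_k(\vv_k)_i$ have mixed signs --- so the paper's ``local smoothness'' conclusion should be read heuristically. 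Your route bypasses this by going straight to the $\ell_2$ level via Parseval, where the inequality $(1-\lambda_k)^2\leq(1-\lambda_{K-1})^2$ is applied termwise to nonnegative summands and no sign cancellation issue arises. What the paper's approach buys is the narrative that bandlimitedness implies smoothness at each individual node; what your approach buys is a rigorous, essentially one-line proof of the global inclusion that the theorem actually asserts. Your remarks on the $K=0$ boundary case and on how the Riesz constants would enter in the directed setting are also accurate and go slightly beyond what the paper states.
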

\begin{proof}
  To show when $\BL_{\Adj}(K) \subseteq \GS_{\Adj}(\eta)$, let $\x \in
  \BL_{\Adj}(K)$, that is,
    $
    \x = \sum_{k=0}^{K-1} \widehat{x}_k \vv_k.
    $
  Then,
  \begin{eqnarray}
    \label{eq:GSR}
    && \left| x_i - \sum_{j \in \mathcal{N}_i} \Adj_{i,j} x_j  \right|
    \nonumber
    \\
    \nonumber
    & = &   \left| \left(\sum_{k=0}^{K-1} \widehat{x}_k \vv_k \right)_i - \sum_{j \in \mathcal{N}_i} \Adj_{i,j} \left(\sum_{k=0}^{K-1} \widehat{x}_k \vv_k \right)_j \right|
    \\
    \nonumber
    & = &   \left| \sum_{k=0}^{K-1} \widehat{x}_k \left(  (\vv_k)_i - \sum_{j \in \mathcal{N}_i} \Adj_{i,j} (\vv_k)_j \right) \right|
    \\
    \nonumber
    & = &   \left| \sum_{k=0}^{K-1} \widehat{x}_k \left(  (\vv_k)_i - (\Adj \vv_k)_i \right) \right|
    \\
    \nonumber
    & \eqlabel{a} &   \left| \sum_{k=0}^{K-1} \widehat{x}_k  (1 - \lambda_k)  (\vv_k)_i \right|
    \\
    & \leqlabel{b} &  (1 - \lambda_{K-1})  \left| \sum_{k=0}^{K-1} \widehat{x}_k  (\vv_k)_i  \right| \ = \  (1 - \lambda_{K-1})  |x_i|,
  \end{eqnarray}
   where $\mathcal{N}_i$ denotes the neighbors of the $i$th node
    (basically all $j$ for which $\Adj_{i,j} \neq 0$), (a) follows
    from the fact that $\vv_k$ and $\lambda_k$ are the $k$th
    eigenvector and eigenvalue of $\Adj$, respectively and thus $(\Adj
    \vv_k)_i = \lambda_k (\vv_k)_i$; and (b) from the ordering of
    eigenvalues $\lambda_{k} \geq \lambda_{k-1}$. From the above, we
  see that for bandlimited graph signals, the signal coefficient at
  each node is close to the weighted average of all its neighbors; in
  other words, bandlimited graph signals are smooth locally, which
  implies global smoothness.

    Summing~\eqref{eq:GSR} over all nodes, we get
  \begin{eqnarray*}
    && \left\|  \x - \Adj \x \right\|_2^2  \ = \ \sum_{i=0}^{N-1} |x_i - \sum_{j \in \mathcal{N}_i} \Adj_{i,j} x_j |^2
    \\
    & \leq & \sum_{i=0}^{N-1}   (1 - \lambda_{K-1})^2  |x_i|^2 \ = \   (1 - \lambda_{K-1})^2 \left\| \x \right\|_2^2,
  \end{eqnarray*}
  proving that the bandlimited graph signals form a subset of globally
  smooth graph signals. The converse is not true, however, as globally
  smooth graph signals can have isolated high-frequency components and
  are thus not bandlimited.
\end{proof}

While the recovery of globally smooth graph signals has been studied
in~\cite{ChenSMK:14} (leading to graph signal inpainting), the
criterion of global smoothness is quite loose, making it hard to
provide further theoretical insight~\cite{SharpnackS:10}. Similarly,
while the recovery of bandlimited graph signals has been studied
in~\cite{ChenVSK:15} (leading to sampling theory on graphs), the
requirement of bandlimitedness is rather restrictive, making it
impractical in real-world applications.

We thus propose a third class of smooth graph signals that relaxes the requirement of bandlimitedness, but still promotes
smoothness\footnote{ We proposed the approximately bandlimited class
  in~\cite{ChenVSK:15a}.}.
\begin{defn}
  \label{df:BLT}
  A graph signal $\x \in \R^N$ is~\emph{approximately bandlimited} on
  a graph $\Adj \in \R^{N \times N}$ with parameters $\beta \geq 1$
  and $\mu \geq 0$ , when there exists a $K \in \{0, 1, \cdots, N-1\}$
  such that the graph Fourier transform $\widehat{\x}$ satisfies
  \begin{equation}
    \label{eq:BLT}
    \sum_{k = K}^{N-1} (1+k^{2\beta}) \widehat{x}_k^2 \leq \mu \left\| \x \right\|_2^2.
  \end{equation}
  Denote this class of graph signals by $\BLT_{\Adj}(K, \beta, \mu)$.
\end{defn}
The approximately bandlimited class allows for a tail after the first
$K$ frequency components, whose shape and decay are controlled by
  $\mu$ and $\beta$; the smaller the $\mu$, the less energy from the
  high-frequency components is allowed in the tail, and the larger the
  $\beta$, the higher the penalty on the energy from those
  high-frequency components. The class of $\BLT_{\Adj}(K)$ is similar
to the ellipsoid constraints in previous
literature~\cite{Johnstone:94}, where all the frequency components are
considered in the constraints; in other words, $\BLT_{\Adj}(K)$ poses
fewer restrictions on the low-frequency components. Many real
  graph signals exhibit the approximately bandlimited property;
    for example, Figures~\ref{fig:temp} and~\ref{fig:wind} show that
   the temperature readings across the U.S and wind
  speeds across Minnesota are approximately bandlimited graph
  signals. We have found that the approximately bandlimited class is
  more powerful than the bandlimited class when representing real
  graph signals.

\begin{figure}[htb]
  \begin{center}
    \begin{tabular}{cc}
      \includegraphics[width=0.5\columnwidth]{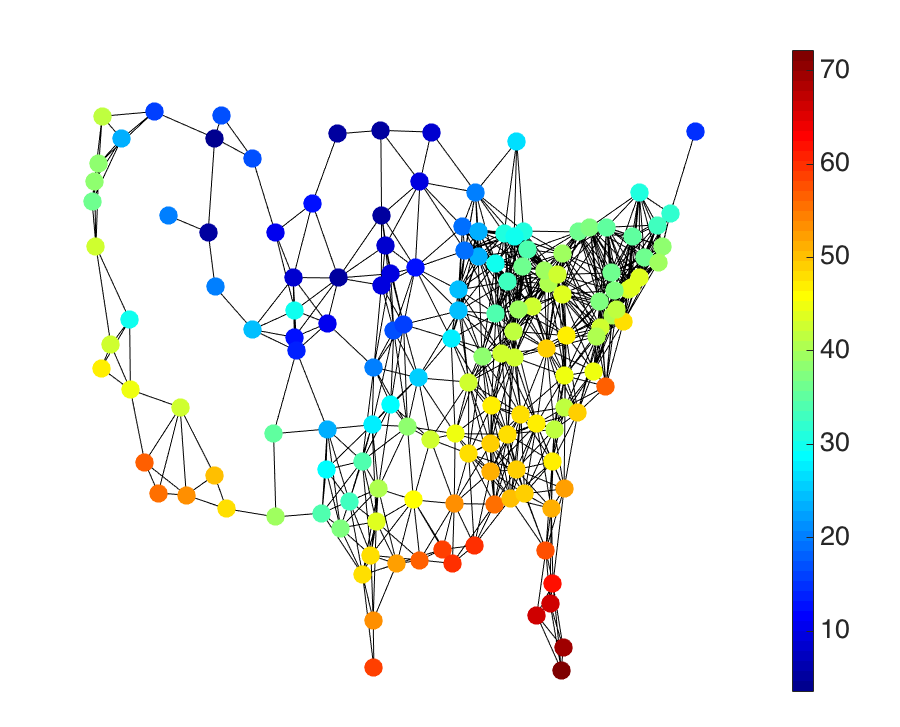}  & \includegraphics[width=0.4\columnwidth]{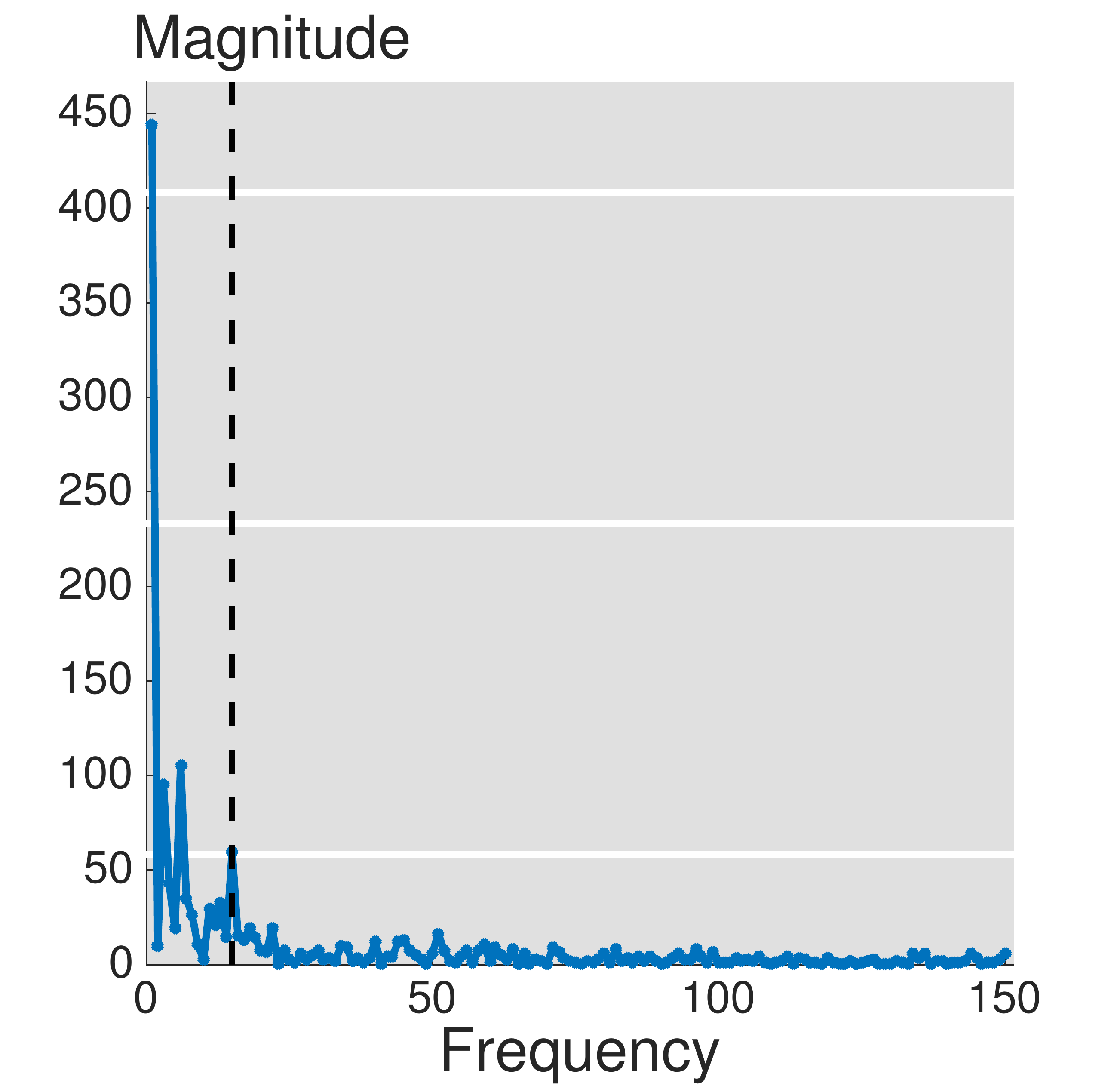} 
      \\
      {\small (a)  Temperature. } & {\small (b) Frequency content. } 
    \end{tabular}
  \end{center}
  \caption{\label{fig:temp} Temperature readings across the U.S
    is an approximately bandlimited graph signal. After the first 
      ten  frequency components (black dashed
    line), energy decays fast.  }
\end{figure}

\begin{figure}[htb]
  \begin{center}
    \begin{tabular}{cc}
\includegraphics[width=0.5\columnwidth]{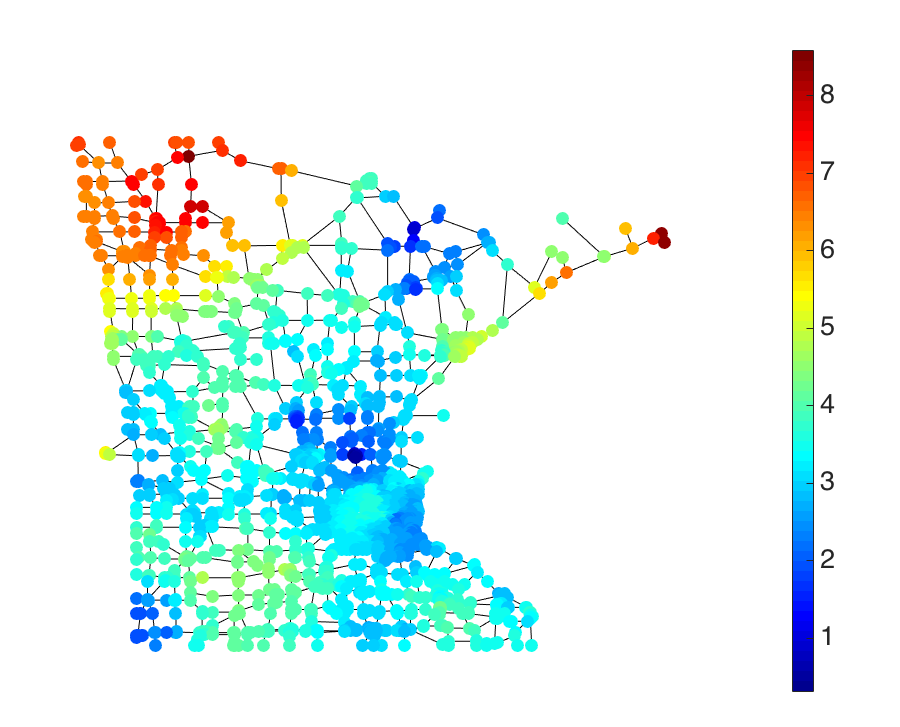}  & \includegraphics[width=0.4\columnwidth]{figures/toy/wind_frq.eps} 
\\
      {\small (a)  Wind. } & {\small (b) Frequency content. } 
\end{tabular}
  \end{center}
  \caption{\label{fig:wind} Wind speed across Minnesota is an
    approximately bandlimited graph signal. After the first 100
    frequency components ( black dashed line), energy decays
    fast.  }
\end{figure}

The following theorem details the relationship between $\BLT_{\Adj}(K,
\beta, \mu)$ and $\GS_{\Adj}(\eta)$.
\begin{myThm}
  \label{thm:BLKvsGS}
With $\beta \geq 1$, $\mu, \eta \geq 0$ and $K \in \{0, 1, \cdots,
    N-1\}$,
  \begin{itemize}
  \item $\BLT_{\Adj}(K, \beta, \mu)$ is a subset of
    $\GS_{\Adj}(\eta)$, when
    \begin{equation*}
      \eta \geq  \left( 1-\lambda_{K-1}+ \sqrt{    \frac{ 4 \mu} { (1+K^{2\beta})}  } \right)^2;
    \end{equation*}   
  \item $\GS_{\Adj}(\eta)$ is a subset of $\BLT_{\Adj}(K, \beta,
    \mu)$, when
    \begin{equation*}
      \mu \geq \frac{ 1+(N-1)^{2\beta}}{1-\lambda_K} \eta.
    \end{equation*} 
  \end{itemize}
\end{myThm}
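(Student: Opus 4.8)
The plan is to push everything into the graph spectral domain, where both inclusions become short estimates on a weighted sum of the $\widehat{x}_k^2$. Writing $\x=\sum_{k=0}^{N-1}\widehat{x}_k\vv_k$ and using $\Adj\vv_k=\lambda_k\vv_k$ together with our standing assumption that $\Vm$ is orthonormal ($\alpha_1=\alpha_2=1$), one has $\x-\Adj\x=\sum_{k=0}^{N-1}(1-\lambda_k)\widehat{x}_k\vv_k$, hence the identity $\|\x-\Adj\x\|_2^2=\sum_{k=0}^{N-1}(1-\lambda_k)^2\widehat{x}_k^2$, while $\|\x\|_2^2=\sum_{k=0}^{N-1}\widehat{x}_k^2$ by Parseval. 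Everything else rests on two monotonicity facts: $1-\lambda_k$ is nondecreasing in $k$ with $0\le 1-\lambda_k\le 2$ (from $1=\lambda_0\ge\cdots\ge\lambda_{N-1}\ge -1$), and $1+k^{2\beta}$ is nondecreasing in $k$.

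For the first inclusion, take $\x\in\BLT_\Adj(K,\beta,\mu)$ and split $\x=\Vm_{(K)}\widehat{\x}_{(K)}+\Vm_{(-K)}\widehat{\x}_{(-K)}$ into its low- and high-frequency parts. Applying the triangle inequality to $\|\x-\Adj\x\|_2$ along this split, the low part $\Vm_{(K)}\widehat{\x}_{(K)}$ lies in $\BL_\Adj(K)$, so the estimate already carried out inside the proof of Theorem~\ref{thm:BLvsGS} bounds its contribution by $(1-\lambda_{K-1})\|\widehat{\x}_{(K)}\|_2\le(1-\lambda_{K-1})\|\x\|_2$. For the high part, the spectral identity restricted to $k\ge K$ together with $1-\lambda_k\le 2$ gives a contribution of at most $2\bigl(\sum_{k\ge K}\widehat{x}_k^2\bigr)^{1/2}$, and since $1+k^{2\beta}\ge 1+K^{2\beta}$ there, the defining inequality~\eqref{eq:BLT} gives $\sum_{k\ge K}\widehat{x}_k^2\le\mu\|\x\|_2^2/(1+K^{2\beta})$, so the high part contributes at most $\sqrt{4\mu/(1+K^{2\beta})}\,\|\x\|_2$. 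Summing and squaring yields $\|\x-\Adj\x\|_2^2\le\bigl(1-\lambda_{K-1}+\sqrt{4\mu/(1+K^{2\beta})}\bigr)^2\|\x\|_2^2$, which is the stated condition on $\eta$. The one subtle point is that the triangle inequality must be applied to the norm, not to its square, so that the square root of $4\mu/(1+K^{2\beta})$ — rather than $4\mu/(1+K^{2\beta})$ itself — ends up inside the squared bracket.

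For the reverse inclusion, let $\x\in\GS_\Adj(\eta)$, i.e.\ $\sum_{k=0}^{N-1}(1-\lambda_k)^2\widehat{x}_k^2\le\eta\|\x\|_2^2$. On the tail indices $k\ge K$ one has $1-\lambda_k\ge 1-\lambda_K>0$ — the only place where one must assume $\lambda_K<1$, the statement degenerating if several top eigenvalues equal $1$ — so $\widehat{x}_k^2\le(1-\lambda_k)^2\widehat{x}_k^2/(1-\lambda_K)^2$, and also $1+k^{2\beta}\le 1+(N-1)^{2\beta}$ throughout. Chaining these with the global-smoothness bound gives
\begin{equation*}
\sum_{k=K}^{N-1}(1+k^{2\beta})\widehat{x}_k^2\ \le\ \frac{1+(N-1)^{2\beta}}{(1-\lambda_K)^2}\sum_{k=K}^{N-1}(1-\lambda_k)^2\widehat{x}_k^2\ \le\ \frac{1+(N-1)^{2\beta}}{(1-\lambda_K)^2}\,\eta\,\|\x\|_2^2 ,
\end{equation*}
so $\x\in\BLT_\Adj(K,\beta,\mu)$ whenever $\mu$ exceeds this constant times $\eta$, which matches the claimed sufficient condition up to the precise power of $1-\lambda_K$ (this route naturally produces $(1-\lambda_K)^{-2}$, which coincides with the stated $(1-\lambda_K)^{-1}$ when $\lambda_K=0$ and is implied by it whenever $\lambda_K\le 0$). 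I expect the main obstacle to be not any single inequality but the careful bookkeeping of constants: the square-root placement in the first part, and controlling the factor $1/(1-\lambda_K)$ as $\lambda_K\to1$ in the second.
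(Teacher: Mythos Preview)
Your proof is correct and follows essentially the same route as the paper: triangle-inequality split into low/high frequencies for the first inclusion (bounding the low part via Theorem~\ref{thm:BLvsGS} and the high part via $(1-\lambda_k)^2\le 4$ together with the $\BLT$ tail condition), and the multiply-and-divide trick $\frac{1+k^{2\beta}}{(1-\lambda_k)^2}\cdot(1-\lambda_k)^2\widehat{x}_k^2$ for the second. You are also right to flag the exponent on $1-\lambda_K$: the paper's own proof in fact yields $(1-\lambda_K)^{-2}$, so the $(1-\lambda_K)^{-1}$ in the statement appears to be a typo rather than something your argument fails to reach.
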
 
From Theorem~\ref{thm:BLKvsGS}, we see that depending on the
parameters, $\GS_{\Adj}(\eta)$ can be a subset of $\BLT_{\Adj}(K,
\beta, \mu)$ or vice versa.
\begin{proof}
  To show when  $\BLT_{\Adj}(K, \beta, \mu) \subseteq \GS_{\Adj}(\eta)$, let $\x \in \BLT_{\Adj}(K, \beta,
  \mu)$. Then,
  \begin{eqnarray*}
    && \left\|  \x - \Adj \x \right\|  
     =  \left\| \sum_{k=0}^{N-1} \widehat{x}_k \vv_k - \Adj \sum_{k=0}^{N-1} \widehat{x}_k \vv_k   \right\|
    \\
    & \leqlabel{a} & \left\| \sum_{k=0}^{K-1} \widehat{x}_k \vv_k - \Adj  \sum_{k=0}^{K-1} \widehat{x}_k \vv_k \right\| 
     \\  && 
+ \left\| \sum_{k=K}^{N-1} \widehat{x}_k \vv_k - \Adj  \sum_{k=K}^{N-1}  \widehat{x}_k \vv_k \right\| 
    \\
    & \leqlabel{b}  & (1-\lambda_{K-1})  \left\| \x \right\| + \left\| \sum_{k=K}^{N-1} \widehat{x}_k \vv_k - \Adj  \sum_{k=K}^{N-1}  \widehat{x}_k \vv_k \right\| 
    \\
    & \eqlabel{c}  & (1-\lambda_{K-1})  \left\| \x \right\| + \left\| \sum_{k=K}^{N-1} (1-\lambda_k) \widehat{x}_k \vv_k \right\|
    \\
    &  \eqlabel{d} & (1-\lambda_{K-1})  \left\| \x \right\| + \sqrt{  \sum_{k=K}^{N-1} (1-\lambda_k)^2 \widehat{x}_k^2   }
\end{eqnarray*}
  \begin{eqnarray*}
    & = & (1-\lambda_{K-1})  \left\| \x \right\| + \sqrt{    \sum_{k=K}^{N-1} \frac{ (1-\lambda_k)^2 } { (1+k^{2\beta})} (1+k^{2\beta}) \widehat{x}_k^2   }
    \\
    &  \leq  & (1-\lambda_{K-1})  \left\| \x \right\| + 
    \\
    & + & 
    \sqrt{ \left(\max_{k \in \{K, \cdots, N-1 \} } \frac{ (1-\lambda_k)^2 } { (1+k^{2\beta})}\right) \sum_{k=K}^{N-1}(1+k^{2\beta}) \widehat{x}_k^2 }
    \\
    &  \leqlabel{ e} &  (1-\lambda_{K-1})  \left\| \x \right\| + 
    \sqrt{ \frac{ 4 } { (1+K^{2\beta})} \sum_{k=K}^{N-1}(1+k^{2\beta}) \widehat{x}_k^2  }
    \\
    & \leqlabel{f} & \left(  1-\lambda_{K-1}+ \sqrt{    \frac{ 4 \mu} { (1+K^{2\beta})}   }  \right) \left\| \x \right\| ,
  \end{eqnarray*}
  where (a) follows from the triangle inequality; (b) from
    Theorem~\ref{thm:BLvsGS}; (c) from $\Adj \vv_k = \lambda_k \vv_k$;
    (d) from the fact that $\left\| \Vm \alpha \right\|_2 =
    \|\alpha\|_2$ by Parseval's equality since $\Vm$ is orthonormal,
    with $\alpha_i = (1-\lambda_i) \widehat{x}_i$, for $i=K, \, K+1,
    \, \ldots, \, N-1$ and $0$ otherwise; (e) from $-1 \leq \lambda_K
    \leq 1$; and (f) from $\x \in \BLT_{\Adj}(K, \beta, \mu)$.

  To show the second statement when $\GS_{\Adj}(\eta) \subseteq
  \BLT_{\Adj}(K, \beta, \mu)$, let $\x \in \GS_{\Adj}(\eta)$. Then,
  \begin{eqnarray*}
    && \sum_{k = K}^{N-1} (1+k^{2\beta}) \widehat{x}_k^2 
     = \sum_{k = K}^{N-1} \frac{  1+k^{2\beta} } { (1-\lambda_k)^2 }  (1-\lambda_k)^2 \widehat{x}_k^2 
    \\
    & \leq & \left(\max_{k \in \{K, \cdots, N-1 \} } \frac{  1+k^{2\beta} }{(1-\lambda_k)^2 }\right) \sum_{k = K}^{N-1}   (1-\lambda_k)^2  \widehat{x}_k^2 
    \\
    & = &  \frac{  1+(N-1)^{2\beta} }{(1-\lambda_K)^2 } \sum_{k = K}^{N-1}   (1-\lambda_k)^2  \widehat{x}_k^2
    \\
    & \leqlabel{a} & \frac{  1+(N-1)^{2\beta} }{(1-\lambda_K)^2 } \eta \left\| \x \right\|^2,
  \end{eqnarray*}
  where (a) follows from $\sum_{k = K}^{N-1} (1-\lambda_k)^2
  \widehat{x}_k^2 \leq \sum_{k = 0}^{N-1} (1-\lambda_k)^2
  \widehat{x}_k^2 = \left\| \x -\Adj \x \right\|_2^2 \leq \eta \left\|
    \x \right\|_2^2$, where the last inequality follows from
  Definition~\ref{df:GS}.
\end{proof}
From Theorem~\ref{thm:BLKvsGS}, we see that $\BLT_{\Adj}(K, \beta,
\mu)$ is not only more general than $\BL_{\Adj}(K)$, but describes
$\GS_{\Adj}(\eta)$ in a more controlled fashion. In this paper, we
focus on $\BLT_{\Adj}(K, \beta, \mu)$, and study the recovery
performance of this class under three sampling strategies.

\subsection{Sampling \& Recovery Strategies}
We consider the procedure of sampling and recovery as follows: we
sample $m$ coefficients from a graph signal $\x \in \R^N$ with noise
to produce a noisy sampled signal $\y \in \R^m$, that is,
\begin{eqnarray}
\label{eq:sampling}
\y  & = & \Psi \x + \epsilon \ \equiv \  \x_\M + \epsilon,
\end{eqnarray}
where $\epsilon \sim \N( 0, \sigma^2 \Id_{m \times m})$, and $\M =
(\M_1, \ldots, \M_{m})$ denotes the sequence of sample indices called
the~\emph{sampling set}, with $\M_i \in \{0,1, \cdots, N-1\}$, $|\M| =
m$ and the sampling operator $\Psi$ is a linear mapping from $\R^N$ to
$\R^m$, defined as
\begin{equation}
\label{eq:Psi}
 \Psi_{i,j} = 
  \left\{ 
    \begin{array}{rl}
      1, & j = \M_i;\\
      0, & \mbox{otherwise}.
  \end{array} \right. 
\end{equation}
We then interpolate $\y$ to get $\x^* \in \R^N$, which recovers $\x$
either exactly or approximately.
 
We consider three different sampling strategies:
\begin{itemize}
\item \emph{uniform sampling} means that sample indices are chosen
  from $\{0, 1, \ldots, N-1\}$ independently and randomly;
\item \emph{experimentally designed sampling} means that sample
  indices can be chosen beforehand; and
\item \emph{active sampling} means that sample indices can be chosen
  as a function of the sample points and the samples collected up to
  that instance, that is, $\M_i$ depends only on $\{\M_j, y_j\}_{j <
    i}$.
\end{itemize}
It is clear that uniform sampling is a subset of experimentally
designed sampling, which is again a subset of active sampling.

{\bf The goal in this paper is to study the fundamental limitations of
  these three sampling strategies when recovering approximately
  bandlimited graph signals.}  This study is related to many
  real-world applications. For example, in semi-supervised learning,
  datasets are modeled as a graph with data samples as nodes and
  similarities between those data samples as edges. Features and
  labels associated with data samples form approximately bandlimited
  graph signals. We aim to select data samples as labeled data and
  recover the labels for the unlabeled data. The sampling strategy
  helps select the most informative data samples and minimizes the
  recovery error. Some other applications include route planning based
  on wind speed~\cite{JiCVK:15}, sensor position
  selection~\cite{SakiyamaTTO:16} and compressive spectral
  clustering~\cite{TremblayPGV:16}.

\section{Fundamental Limits of Sampling Strategies}
\label{sec:lower}
In this section, we study the fundamental limitations of the three
sampling strategies for recovering $\BLT_{\Adj}(K, \beta, \mu)$ by
showing minimax lower bounds. We do this by following the minimax
  decision rule and finding tight lower bounds for the minimax risk
  over all possible recovery strategies~\cite{KorostelevT:93}. In
  other words, we try to minimize the recovery error in the worst case
  and use a tight lower bound to describe this minimax error.

We start by introducing some notation~\cite{CastroWN:05}.
\begin{defn}
  For a recovery strategy $(\x^*, \M)$, and a vector $\x \in
  \R^N$,  the recovery strategy risk is
  \begin{equation*}
    R( \x^*, \M, \x)  \ = \  \mathbb{E}_{\x, \M} [d^2( \x^*, \x)],
  \end{equation*}
  where $\mathbb{E}_{\x, \M}$ is the expectation with respect to the
  probability measure of $\{ x_i, y_i \}_{i \in \M}$ and $d(\x^*,
    \x)$ is the error metric. Here we use the $\ell_2$ norm $\left\|
      \x^* - \x \right\|_2$.  The maximum risk of a recovery strategy
  is $\sup_{\x \in \R^N} R( \x^*, \M, \x)$.
\end{defn} 

The lower bounds we present will be of the form
\begin{equation}
  \label{eq:lower}
  \inf_{(\x^*, \M) \in \Theta} \sup_{\x \in \R^N} \mathbb{E}_{\x, \M} [d^2( \x^*, \x)] \geq  c \phi_m^2, \text{ for all } m \geq m_0,
\end{equation}
where  $\inf$ is the infimum (greatest lower
  bound) and $\sup$ is the supremum (least upper bound), 
$m$ is the number of samples, $m_0 \in \mathbb{N}$, $c > 0$ is a
constant, $\phi_m$ is a positive sequence converging to zero, and
$\Theta$ is the set of all recovery strategies. The sequence
$\phi_m^2$ is denoted as a lower rate of convergence.

The upper bounds on the maximum risk are usually obtained through
explicit recovery strategies, as will be shown in
Section~\ref{sec:upper}.  The upper bounds we present will be of the
form
\begin{equation}
  \label{eq:upper}
  \inf_{(\x^*, \M) \in \Theta} \sup_{\x \in \R^N} \mathbb{E}_{\x, \M} [d^2( \x^*, \x)] \leq  C \phi_m^2, \text{ for all }m \geq m_0,
\end{equation}
where $C > 0$ is a constant. If~\eqref{eq:lower} and~\eqref{eq:upper}
both hold, then $\phi_m$ is said to be the optimal rate of convergence
and there is no recovery strategy that asymptotically outperforms the
proposed one.  When talking about optimal rates of convergence, we
bound the sequence by a polynomial and make statements of the form:
Given $\gamma_1 < \gamma < \gamma_2$, a rate of convergence $\phi_m^2$
is equivalent to $m^{-\gamma}$ if and only if $m^{-\gamma_2} <
\phi_m^2 < m^{-\gamma_1} $ for $n$ large enough.

Note that the general bounds are for arbitrary graphs and thus involve
parameters that depend on the graph structure; given the graph
structure, we can specify the parameters and show explicit rates of
convergence, as we will do in Section~\ref{sec:convergence}.
 
Let $\Vm_{(2,K)}$ be the sub-matrix of $\Vm$, consisting of the $K$th
to the $(2K-1)$th columns of $\Vm$.

\begin{myThm}
  \label{thm:lower}
  Given the class $\BLT_{\Adj}(K, \beta, \mu)$: \\
  (1) Under uniform sampling,
  \begin{eqnarray*}
    && \inf_{(\x^*, \M) \in \Theta_{\rm u} } \sup_{\x \in \BLT_{\Adj}(K, \beta, \mu)} \mathbb{E}_{\x, \M} \left(  \frac{ \left\|  \x^* - \x \right\|_2^2 } { \left\|  \x \right\|_2^2  } \right) 
    \\
    & \geq &  \max_{K \leq \kappa_0 \leq N}  \frac{c_1 \mu  } {\kappa_0^{2\beta}}  \left( 1 -   \frac{ c \mu  \left\| \x \right\|_2^2 } { \sigma^2 \kappa_0^{2\beta+2} N} \left\| \Vm_{(2,\kappa_0)} \right\|_{F}^2 m \right),
  \end{eqnarray*}
  where $c_1 >0$ , $0 < c < 1$, and $\Theta_{\rm u}$ denotes the set
  of all recovery strategies based on uniform sampling;
  \\
  (2) Under experimentally designed sampling,
  \begin{eqnarray*}
    && \inf_{(\x^*, \M) \in \Theta_{\rm e} } \sup_{\x \in \BLT_{\Adj}(K, \beta, \mu)} \mathbb{E}_{\x, \M}   \left(  \frac{ \left\|  \x^* - \x \right\|_2^2 } { \left\|  \x \right\|_2^2  } \right) 
    \\
    & \geq &  \max_{K \leq \kappa_0 \leq N} \frac{c_1 \mu } {\kappa_0^{2\beta}}  \left( 1 - \frac{ c \mu \left\| \x \right\|_2^2  } { \sigma^2 \kappa_0^{2\beta+2}}  \left\| \Vm_{(2,\kappa_0)} \right\|_{\infty,2}^2 m \right),
  \end{eqnarray*}
  where $c_1 >0$ , $0 < c < 1$, and $\Theta_{\rm e}$ denotes the set
  of all recovery strategies based on experimentally designed
  sampling;
  \\
  (3) Under active sampling,
  \begin{eqnarray*}
    && \inf_{(\x^*, \M) \in \Theta_{\rm a}} \sup_{\x \in \BLT_{\Adj}(K, \beta, \mu)} \mathbb{E}_{\x, \M}  \left(  \frac{ \left\|  \x^* - \x \right\|_2^2 } { \left\|  \x \right\|_2^2  } \right)  
    \\
    & \geq &   \max_{K \leq \kappa_0 \leq N} \frac{c_1 \mu  } {\kappa_0^{2\beta}}  \left( 1 - \frac{ c \mu  \left\| \x \right\|_2^2 } { \sigma^2 \kappa_0^{2\beta+2}}  \left\| \Vm_{(2,\kappa_0)} \right\|_{\infty,2}^2 m \right),
  \end{eqnarray*}
  where $c_1 >0$ , $0 < c < 1$, and $\Theta_{\rm a}$ denotes the set
  of all recovery strategies based on active sampling.
\end{myThm}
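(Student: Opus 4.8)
The plan is to prove all three bounds by a single argument: lower‑bound the minimax risk by a Bayes risk over a Gaussian prior supported (essentially) on $\BLT_{\Adj}(K,\beta,\mu)$, with the three sampling models entering only through how much the $m$ noisy samples can shrink the posterior covariance. Fix any $\kappa_0$ with $K\le\kappa_0\le N$ (to be optimized at the end, which produces the maxima in the statement; if $2\kappa_0-1>N-1$ one truncates the frequency block below, affecting only constants), and consider the family of signals $\x=a\vv_0+\Vm_{(2,\kappa_0)}\theta$, where $a$ is a fixed large scalar and $\theta\in\R^{\kappa_0}$ is drawn from the prior $\theta\sim\N(0,\tau^2\Id)$ with $\tau^2$ of order $\mu a^2/\kappa_0^{2\beta+1}$. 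The tail of $\widehat{\x}$ is supported on frequencies $\{\kappa_0,\ldots,2\kappa_0-1\}$, on which the weights $1+k^{2\beta}$ are all comparable to $\kappa_0^{2\beta}$, so that $\sum_{k=K}^{N-1}(1+k^{2\beta})\widehat{x}_k^2\asymp\kappa_0^{2\beta}\|\theta\|_2^2$, which concentrates around $\kappa_0^{2\beta}\cdot\kappa_0\tau^2\asymp\mu a^2\asymp\mu\|\x\|_2^2$; choosing the constant hidden in $\tau^2$ small enough places $\x$ in $\BLT_{\Adj}(K,\beta,\mu)$ with overwhelming probability, and the exceptional event is handled in the standard way (truncating the Gaussian prior, or absorbing it into $c_1$). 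Since $\|\x\|_2^2=a^2+\|\theta\|_2^2$ concentrates around $a^2$, the quantity $\|\x\|_2^2$ appearing in the statement may be treated as a parameter (the squared norm of the extremal signal) throughout.

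Next I would compute the Bayes risk. Since $\Vm$ is orthonormal, for any estimator $\x^*$ one has $\|\x^*-\x\|_2^2\ge\|\Vm_{(2,\kappa_0)}^{T}\x^*-\theta\|_2^2$, so the minimax risk is at least the Bayes risk of estimating $\theta$ from $\y=\Psi\x+\epsilon$. Subtracting the (known) component $a\,\Psi\vv_0$ reduces this to the Gaussian linear model $\tilde\y=B\theta+\epsilon$ with $B=\Psi\Vm_{(2,\kappa_0)}\in\R^{m\times\kappa_0}$, whose posterior for $\theta$ is Gaussian with covariance $\Sigma=(\tau^{-2}\Id+\sigma^{-2}B^{T}B)^{-1}$ and whose Bayes risk is $\trace(\Sigma)$. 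The elementary operator inequality $(\tau^{-2}\Id+M)^{-1}\succeq\tau^{2}\Id-\tau^{4}M$ for $M\succeq 0$ (both sides are functions of $M$, so it reduces to $\frac{1}{1+x}\ge 1-x$) gives
\[
\trace(\Sigma)\;\ge\;\tau^{2}\kappa_0-\frac{\tau^{4}}{\sigma^{2}}\|B\|_F^{2}\;=\;\tau^{2}\kappa_0\Bigl(1-\frac{\tau^{2}}{\sigma^{2}\kappa_0}\|B\|_F^{2}\Bigr).
\]
Dividing by $\|\x\|_2^2$ and substituting $\tau^{2}\asymp\mu\|\x\|_2^2/\kappa_0^{2\beta+1}$ turns the prefactor into $c_1\mu/\kappa_0^{2\beta}$ and the correction into $\bigl(c\mu\|\x\|_2^2/(\sigma^{2}\kappa_0^{2\beta+2})\bigr)\|B\|_F^{2}$, which is exactly the shape required; what differs between the three cases is only the bound on $\|B\|_F^{2}=\sum_{i=1}^{m}\|(\Vm_{(2,\kappa_0)})_{\M_i,:}\|_2^{2}$.

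For uniform sampling each $\M_i$ is uniform on $\{0,\ldots,N-1\}$ and independent of the signal; the whole lower bound is affine in $\|B\|_F^{2}$, so taking $\mathbb{E}_\M$ replaces it by $\mathbb{E}_\M\|B\|_F^{2}=\frac{m}{N}\|\Vm_{(2,\kappa_0)}\|_F^{2}$, giving (1). For experimentally designed sampling the rows of $B$ are fixed in advance but each contributes at most $\|\Vm_{(2,\kappa_0)}\|_{\infty,2}^{2}$, so $\|B\|_F^{2}\le m\|\Vm_{(2,\kappa_0)}\|_{\infty,2}^{2}$, giving (2). For active sampling the same deterministic bound still holds for every realization: since $\M_i$ depends on the data only through $\{\M_j,y_j\}_{j<i}$, the design rule contributes no factor involving $\theta$, and the posterior of $\theta$ given the observed transcript $(\M_i,y_i)_{i=1}^{m}$ is still Gaussian with precision $\tau^{-2}\Id+\sigma^{-2}\sum_{i}(\Vm_{(2,\kappa_0)})_{\M_i,:}^{T}(\Vm_{(2,\kappa_0)})_{\M_i,:}$, exactly as if the realized design had been chosen beforehand; hence $\|B\|_F^{2}\le m\|\Vm_{(2,\kappa_0)}\|_{\infty,2}^{2}$ again, and we obtain (3) with the same constants as (2). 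Taking the best $\kappa_0\in\{K,\ldots,N\}$ yields the stated maxima.

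The main obstacle is the active‑sampling step --- establishing that adaptivity cannot help, i.e., that a data‑dependent design leaves the posterior exactly as in a fixed design, so the deterministic per‑row bound $\|\Vm_{(2,\kappa_0)}\|_{\infty,2}^{2}$ still controls $\|B\|_F^{2}$. Everything else (the calibration of $\tau^{2}$, membership of the prior in the approximately‑bandlimited ellipsoid, and the concentration of $\|\x\|_2^2$) is routine constant‑chasing.
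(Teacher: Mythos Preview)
Your proposal is correct and takes a genuinely different route from the paper. The paper proceeds via Fano's method: it builds a Varshamov--Gilbert--pruned hypercube of signals $\x^{(\w)}=\sum_{k=\kappa_0}^{2\kappa_0-1} w_k\psi_k$ with $\psi_k=(\pm)^k\sqrt{c\mu\|\x\|_2^2/(\kappa_0(1+k^{2\beta}))}\,\vv_k$, obtains packing separation $d^2(\x^{(\w)},\x^{(\u)})\gtrsim\mu\kappa_0^{-2\beta}\|\x\|_2^2$, and upper-bounds the conditional KL divergence by $KL(p_{\w},p_{\w_0}\mid\M)\lesssim(\mu\|\x\|_2^2/\sigma^2)\,\kappa_0^{-(2\beta+1)}\sum_{i\in\M}\sum_{k=\kappa_0}^{2\kappa_0-1}\Vm_{ik}^2$; the three sampling regimes then differ only in whether this sum is averaged over $\M$ (uniform, giving $\|\Vm_{(2,\kappa_0)}\|_F^2/N$) or maximized (designed and active, giving $\|\Vm_{(2,\kappa_0)}\|_{\infty,2}^2$), exactly the same branching as your $\|B\|_F^2$ step. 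Your Gaussian-prior/posterior-trace argument reaches the same endpoint more directly: the operator inequality $(\tau^{-2}\Id+M)^{-1}\succeq\tau^2\Id-\tau^4 M$ produces the ``$1-(\cdots)\, m$'' form in one line, avoiding the cross-term $\delta$ the paper discards heuristically. Your handling of the active case is also sharper than the paper's one-sentence assertion that ``we cannot get more benefit from signal coefficients, so the KL divergence is the same''; your likelihood factorization makes explicit why conditioning on the realized transcript leaves the posterior precisely as under a fixed design. The price of your approach is the truncation step (the Gaussian prior is not strictly supported on $\BLT_{\Adj}(K,\beta,\mu)$), whereas the paper's hypercube family lies in the class deterministically; as you note, this only perturbs constants.
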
 
See Appendix A in the supporting document for the proof of these
results.  From Theorem~\ref{thm:lower}, we see that experimentally
designed sampling has the same minimax lower bound as active sampling,
which means that collecting the feedback before taking samples does
not fundamentally improve the recovery performance. We also see that
the three minimax lower bounds depend on the properties of $\Vm_{(2,
  \kappa_0)}$, which depend on the graph structure. When each row of
$\Vm_{(2,\kappa_0)}$ has roughly similar energy, $ \left\|
  \Vm_{(2,\kappa_0)} \right\|_{F}^2$ and $ N \left\|
  \Vm_{(2,\kappa_0)} \right\|_{\infty, 2}^2$ are similar; when the
energy is concentrated in a few rows, $ N \left\| \Vm_{(2,\kappa_0)}
\right\|_{\infty, 2}^2$ is much larger than $ \left\|
  \Vm_{(2,\kappa_0)} \right\|_{F}^2$; in other words, the minimax
lower bound for experimentally designed sampling is tighter than that
for uniform sampling. This happens in many real-world graphs that have
complex, irregular structure. The minimax lower bounds thus show the
potential advantage of experimentally designed sampling and active
sampling over uniform sampling. We will elaborate on this in
Sections~\ref{sec:convergence} and~\ref{sec:experiments}.

\section{Recovery Strategy}
\label{sec:upper}
In the previous section, we presented the minimax lower bounds for
each of the three sampling strategies and showed that active sampling
cannot fundamentally perform better than experimentally designed
sampling. We now propose a recovery strategy for both uniform sampling
and experimentally designed sampling and evaluate its statistical
properties.

\subsection{Algorithm}
To analyze uniform sampling and experimentally designed sampling in a
similar manner, we consider~\emph{sampling score-based sampling} that
unifies both sampling strategies. Sampling score-based sampling means
that the sample indices are chosen from an importance sampling
distribution that is proportional to some sampling score. Let
$\{\pi_{i}\}_{i=1}^N$ be a set of sampling scores, where $\pi_i$
denotes the probability to choose the $i$th sample in each random
trial. When the sampling score for each node is the same, we get uniform sampling; when the sampling score for each node
is designed based on the graph structure,  we get
experimentally designed sampling.

For $\BLT_{\Adj}(K)$, most of the energy is concentrated in the first
$K$ frequency components and the graph signal can be approximately
recovered by using those. We consider the following recovery strategy
to estimate those frequency components by projecting samples onto the bandlimited space spanned by the first $K$
  frequency components.

\begin{myAlg}
  \label{alg:subsampleProj}
  We sample a graph signal $m$ times. Each time, we independently
  choose a node $\M_j = i$, $j = 1, \ldots, m$, with probability
  $\pi_{i}$, and take a measurement $y_{\M_j}$. Let $\Psi \in \R^{m
    \times N}$ be the sampling operator, $\Vm_{(K)} \in \R^{N
      \times K}$ be the first $K$ columns of $\Vm$, and $\Dd \in
  \R^{N \times N}$ be a diagonal rescaling matrix with $\Dd_{i,i} =
  1/\sqrt{ m \pi_i }$. We recover the original graph signal by
    solving the following optimization problem: 
  \begin{eqnarray}
    \x^*_{\rm SP} & = &  \Vm_{(K)} \arg \min_{\widehat{\x}_{(K)}}  \left\|  \Psi^T \Psi \Dd^2  \Psi^T \y - \Vm_{(K)} \widehat{\x}_{(K)} \right\|_2^2
     \nonumber
    \\ 
    \label{eq:SubsampleProj}
    \\
     \label{eq:SubsampleProj_sol}
    & = &  \Vm_{(K)} \Um_{(K)} \Psi^T \Psi  \Dd^2  \Psi^T \y,
  \end{eqnarray}
  where $\y$ is a vector representation of the
  samples~\eqref{eq:sampling}.
\end{myAlg}

We call Algorithm~\ref{alg:subsampleProj}~\emph{sampled projection}
(SampleProj) because we reweight and project the samples onto the
bandlimited space. $\Psi^T \y \in \R^N$ rescales the sampled signal $\y \in \R^m$ through zero padding. The rescaling matrix $\Dd^2$ compensates for non-uniform weights in sampling and equalizes samples from different sampling scores. The term $\Psi^T \Psi \Dd^2 \Psi^T \y$ denotes the equalized and zero-padding samples.  The objective function~\eqref{eq:SubsampleProj} minimizes the distance between our estimate and the samples. The intuition behind the solution~\eqref{eq:SubsampleProj_sol} is that we project the equalized and zero-padding samples onto a bandlimited space spanned by $\Vm_{(K)}$ to remove aliasing. The term $\Um_{(K)}  \Psi^T \Psi \Dd^2   \Psi^T \y$ is an unbiased estimator for the first $K$
frequency components, which will be shown later. Thus, in expectation,
the recovered graph signal~$\x^*_{\rm SP} $ is a linear approximation
of any graph signal~$\x$. When $\x$ is bandlimited, $\x^*_{\rm SP} $
perfectly recovers $\x$; when $\x$ is approximately bandlimited,
$\x^*_{\rm SP} $ has a bias due to the existence of
high-frequency components.
 
It is also intuitive to consider the following recovery
strategy~\cite{MaMY:15},
\begin{eqnarray}
  \label{eq:SubsampleLS}
  \nonumber
  \x^*_{\rm LS}	& = &  
  \Vm_{(K)}   \arg \min_{ \widehat{\x}_{(K)} }  
  \left\| \Dd \Psi^T \y -  \Dd \Psi^T \Psi \Vm_{(K)} \widehat{\x}_{(K)} \right\|_2^2
  \nonumber
  \\
  \nonumber
  & = &  \Vm_{(K)} \left( \Dd^2 \Psi^T \Psi \Vm_{(K)} \right)^{\dagger} \Dd \Psi^T \y,
\end{eqnarray}
where we fit the sampled elements of the recovered graph signal to the
samples by solving the least squares problem and
  $(\cdot)^{\dagger}$ is the pseudo-inverse
  operator~\cite{VetterliKG:12}. The corresponding expected recovered
graph signal is
\begin{displaymath}
  \mathbb{E} \x^*_{\rm LS} = \Vm_{(K)}  \widehat{\x}_{(K)} + \Vm_{(K)} \Pj \Vm_{(-K)}  \widehat{\x}_{(-K)},
\end{displaymath}
where $\Pj = \left( \Dd^2 \Psi^T \Psi \Vm_{(K)} \right)^{\dagger} \Dd$
$\Psi^T \Psi = \left(\Um_{(K)} \Psi^T \Psi \Dd^2 \Psi^T \Psi \Vm_{(K)}
\right)^{-1}$ $\Um_{(K)} \Psi^T \Psi \Dd^2 \Psi^T \Psi $,
  $\Vm_{(-K)} $ chooses the last $K$ columns of $\Vm$, and
  $\xhat_{(-K)}$ chooses the last $K$ columns of $\x$. When $\x$ is
bandlimited, $\x^*_{\rm LS}$ perfectly recovers $\x$; when $\x$ is
approximately bandlimited, however, $\x^*_{\rm LS}$ is a mixture of
low-frequency and high-frequency components and it is hard to show its
statistical properties. Moreover, it is less computationally
  efficient to compute $\x^*_{\rm LS}$ than $\x^*_{\rm SP}$ because
of the presence of the inverse term.

Since the definitions of the bandlimited class and the
approximately bandlimited class, and the proposed sampling strategies
are all based on the graph Fourier transform instead of on graph
shift, all the proposed methods work for other
versions of the graph Fourier transform as well.

\subsection{Statistical Analysis}
\label{sec:statistical}
We study the statistical properties of
Algorithm~\ref{alg:subsampleProj} by providing the bias, covariance,
mean square error (MSE), and an optimal sampling distribution.

The following lemma shows that the
sampled projection estimator is an unbiased estimator of the first $K$
frequency components for any sampling scores.
\begin{myLem} 
  \label{lem:unbias}
  The sampled projection estimator with bandwidth $K$ and arbitrary
  sampling scores is an unbiased estimator of the first $K$ frequency
  components, that is,
  \begin{equation*}
    \mathbb{E} \x^* \ = \ \Vm_{(K)} \Um_{(K)} \x,
    \quad \text{for all } \x,
  \end{equation*}
  where $\x^*$ is the solution of Algorithm~\ref{alg:subsampleProj}.
\end{myLem}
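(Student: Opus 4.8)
The plan is to compute $\mathbb{E}\,\x^{*}$ directly from the closed form~\eqref{eq:SubsampleProj_sol}: substitute the measurement model, peel off the noise, and show that the random ``sample--and--rescale'' operator averages to the identity, so that only the band-limiting operator $\Vm_{(K)}\Um_{(K)}$ survives acting on $\x$.

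First I would plug $\y = \Psi\x + \epsilon$ from~\eqref{eq:sampling} into~\eqref{eq:SubsampleProj_sol} and split $\x^{*}$ into a signal part and a noise part,
\begin{equation*}
  \x^{*} \ = \ \Vm_{(K)}\Um_{(K)}\,\Psi^{T}\Psi\Dd^{2}\Psi^{T}\Psi\,\x \;+\; \Vm_{(K)}\Um_{(K)}\,\Psi^{T}\Psi\Dd^{2}\Psi^{T}\,\epsilon .
\end{equation*}
The expectation is over the sampling set $\M$ (equivalently, over the random operator $\Psi$; note that $\Dd$ is deterministic once the sampling scores are fixed) and over the noise $\epsilon$. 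Conditioning on $\M$ first and using that $\epsilon$ is zero-mean and independent of $\M$ makes the second term vanish, so $\mathbb{E}\,\x^{*} = \Vm_{(K)}\Um_{(K)}\,\mathbb{E}_{\M}\!\big[\Psi^{T}\Psi\Dd^{2}\Psi^{T}\Psi\big]\,\x$, and the lemma reduces to the single identity $\mathbb{E}_{\M}\big[\Psi^{T}\Psi\Dd^{2}\Psi^{T}\Psi\big] = \Id_{N}$. To establish this I would write $\Psi^{T}\Psi = \sum_{j=1}^{m}\e_{\M_{j}}\e_{\M_{j}}^{T}$, which makes it transparent that $\Psi^{T}\Psi\Dd^{2}\Psi^{T}\Psi$ is a diagonal matrix, so it is enough to evaluate its $i$th diagonal entry; that entry is a quadratic expression in the i.i.d.\ trial indicators $\mathbf{1}\{\M_{j}=i\}$ weighted by $\Dd_{i,i}^{2}=1/(m\pi_{i})$. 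Since each of the $m$ independent draws selects node $i$ with probability $\pi_{i}$, the weight $1/(m\pi_{i})$ is precisely the importance-reweighting correction that forces the $i$th diagonal entry to have expectation $1$, for any choice of sampling scores $\{\pi_{i}\}$. Summing over $i$ yields $\mathbb{E}_{\M}[\Psi^{T}\Psi\Dd^{2}\Psi^{T}\Psi]=\Id_{N}$, hence $\mathbb{E}\,\x^{*}=\Vm_{(K)}\Um_{(K)}\x$; reading off the argmin variable of~\eqref{eq:SubsampleProj} gives the equivalent frequency-domain statement $\mathbb{E}\big[\Um_{(K)}\Psi^{T}\Psi\Dd^{2}\Psi^{T}\y\big]=\Um_{(K)}\x=\widehat{\x}_{(K)}$, i.e.\ unbiasedness of the first $K$ components. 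Notice that neither $K$ nor any graph-dependent quantity entered, so the result holds for every bandwidth and every sampling score.

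The step I expect to be the main obstacle is the bookkeeping around the two independent sources of randomness: one must check carefully that $\Psi^{T}\Psi\Dd^{2}\Psi^{T}\Psi$ genuinely collapses to a diagonal matrix, keep the expectation over $\M$ cleanly separated from the (vanishing) expectation over $\epsilon$, and carry out the per-coordinate expectation using the i.i.d.\ structure of the $m$ draws. Once the sampling operator is reduced to its diagonal, the designed rescaling $\Dd$ does all the remaining work and the conclusion is immediate; no further assumptions on $\Adj$, $K$, $\beta$ or $\mu$ are needed.
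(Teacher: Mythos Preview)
Your overall plan---split off the zero-mean noise and show that the random sample--and--rescale operator averages to the identity---is exactly the right idea and matches the paper's strategy. But the specific identity you invoke,
\[
\mathbb{E}_{\M}\!\big[\Psi^{T}\Psi\,\Dd^{2}\,\Psi^{T}\Psi\big] \;=\; \Id_{N},
\]
is \emph{false} as stated, and this is precisely the step you flagged as the ``main obstacle''. The $i$th diagonal entry of $\Psi^{T}\Psi\,\Dd^{2}\,\Psi^{T}\Psi$ is $n_{i}^{2}/(m\pi_{i})$, where $n_{i}=\sum_{j=1}^{m}\mathbf{1}\{\M_{j}=i\}$ is $\mathrm{Binomial}(m,\pi_{i})$. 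Hence
\[
\mathbb{E}\!\left[\frac{n_{i}^{2}}{m\pi_{i}}\right]
\;=\; \frac{m\pi_{i}(1-\pi_{i})+m^{2}\pi_{i}^{2}}{m\pi_{i}}
\;=\; 1-\pi_{i}+m\pi_{i},
\]
which equals $1$ only in degenerate cases. The importance-reweighting intuition you quote is correct for the \emph{linear} count $n_{i}$, namely $\mathbb{E}[n_{i}/(m\pi_{i})]=1$, but not for the quadratic $n_{i}^{2}$ that your matrix product produces.

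The paper avoids this trap by never forming $\Psi^{T}\Psi\,\Dd^{2}\,\Psi^{T}\Psi$. Instead it expands the $k$th estimated frequency as a sum over the $m$ i.i.d.\ draws,
\[
\sum_{j=1}^{m} \Um_{k,\M_{j}}\,\Dd^{2}_{\M_{j},\M_{j}}\,y_{\M_{j}},
\]
so each summand is \emph{linear} in a single sampling indicator. By independence the expectation is $m\cdot\mathbb{E}_{\ell}\big[\Um_{k,\ell}\,x_{\ell}/(m\pi_{\ell})\big]=\widehat{x}_{k}$. In matrix language this is the (true) identity $\mathbb{E}_{\M}[\Dd^{2}\Psi^{T}\Psi]=\Id_{N}$, i.e.\ only one factor of $\Psi^{T}\Psi$ appears. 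The extra $\Psi^{T}\Psi$ on the left of~\eqref{eq:SubsampleProj_sol} is treated in the paper as a support mask (it is genuinely redundant when the $m$ draws are distinct) and is dropped in the analysis. To repair your argument, either work sample-by-sample as the paper does, or replace your target identity by $\mathbb{E}_{\M}[\Dd^{2}\Psi^{T}\Psi]=\Id_{N}$ and justify dropping the leading $\Psi^{T}\Psi$.
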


Lemma~\ref{lem:covariance} gives the exact covariance of the sampled
projection estimator.
\begin{myLem} 
  \label{lem:covariance}
  The covariance of sampled projection estimator $\x^*$ has the
  following property:
  \begin{eqnarray*}
    {\rm  Tr} ( {\rm Covar}  \left[  \x^*   \right] ) 
    & = &  \mathbb{E}  \left\|  \x^* - \mathbb{E}   \left[   \x^*  \right]   \right\|_2^2 
    \\
    & = &  {\rm  Tr} \left( \Um_{(K)}  \W_C \Vm_{(K)} \right) - \frac{1}{m} \left\|  \widehat{\x}_{(K)} \right\|_2^2,
  \end{eqnarray*}
  where Tr($\cdot$) is the trace operator and $\W_C$ is a
  diagonal matrix with $\left(\W_C \right)_{i,i}= (x_{i}^2 +
  \sigma^2)/(m \pi_{i}) $.
\end{myLem}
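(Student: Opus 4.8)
The plan is to carry everything over to the graph Fourier domain, where the estimator is an average of $m$ independent, identically distributed random vectors, and then to compute the covariance of a single summand by conditioning on which node was drawn and invoking the Gaussian noise model. First I would set up the reduction: write $\widehat{\x}^*_{(K)} := \Um_{(K)} \Psi^T \Psi \Dd^2 \Psi^T \y$, so that $\x^* = \Vm_{(K)} \widehat{\x}^*_{(K)}$. Since $\Vm$ is orthonormal, the $K$ columns of $\Vm_{(K)}$ are orthonormal, i.e. $\Um_{(K)} \Vm_{(K)} = \Vm_{(K)}^T \Vm_{(K)} = \Id_K$. Hence $\x^* - \Ep[\x^*] = \Vm_{(K)}\big(\widehat{\x}^*_{(K)} - \Ep[\widehat{\x}^*_{(K)}]\big)$ is an isometric image, which immediately gives the first equality $\Ep\|\x^* - \Ep[\x^*]\|_2^2 = \trace(\mathrm{Covar}[\widehat{\x}^*_{(K)}])$, and by cyclicity of the trace $\trace(\mathrm{Covar}[\x^*]) = \trace\big(\Vm_{(K)} \mathrm{Covar}[\widehat{\x}^*_{(K)}] \Um_{(K)}\big) = \trace(\mathrm{Covar}[\widehat{\x}^*_{(K)}])$. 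So everything reduces to computing $\mathrm{Covar}[\widehat{\x}^*_{(K)}]$.

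Next I would expose the i.i.d.\ structure. Because the $j$th trial independently picks $\M_j = i$ with probability $\pi_i$ and returns $y_j = x_{\M_j} + \epsilon_j$ with $\epsilon_j \sim \N(0,\sigma^2)$ independent across $j$, one unfolds $\widehat{\x}^*_{(K)} = \tfrac{1}{m}\sum_{j=1}^{m} T_j$ with $T_j := \tfrac{y_j}{\pi_{\M_j}}\, \Um_{(K)} \e_{\M_j}$, where $\e_i \in \R^N$ is the $i$th standard basis vector and the $T_j$ are i.i.d. Conditioning on $\M_j = i$ and using $\Ep[y_j \mid \M_j = i] = x_i$ gives $\Ep[T_j] = \Um_{(K)} \sum_i \pi_i \tfrac{x_i}{\pi_i}\e_i = \Um_{(K)}\x = \widehat{\x}_{(K)}$ (this is exactly Lemma~\ref{lem:unbias}), while using $\Ep[y_j^2 \mid \M_j = i] = x_i^2 + \sigma^2$ gives
\begin{equation*}
  \Ep[T_j T_j^T] \;=\; \sum_i \pi_i\, \frac{x_i^2+\sigma^2}{\pi_i^2}\, \Um_{(K)} \e_i \e_i^T \Um_{(K)}^T \;=\; \Um_{(K)}\, \W\, \Vm_{(K)},
\end{equation*}
where $\W$ is the $N\times N$ diagonal matrix with $\W_{i,i} = (x_i^2+\sigma^2)/\pi_i = m\,(\W_C)_{i,i}$, and we used $\Um_{(K)}^T = \Vm_{(K)}$ and $\sum_i \W_{i,i}\e_i\e_i^T = \W$.

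Finally, since the $T_j$ are i.i.d.\ the cross-covariances vanish and $\mathrm{Covar}[\widehat{\x}^*_{(K)}] = \tfrac1m \mathrm{Covar}[T_1] = \tfrac1m\big(\Ep[T_1 T_1^T] - \widehat{\x}_{(K)}\widehat{\x}_{(K)}^T\big) = \Um_{(K)} \W_C \Vm_{(K)} - \tfrac1m \widehat{\x}_{(K)}\widehat{\x}_{(K)}^T$; taking the trace and using $\trace(\widehat{\x}_{(K)}\widehat{\x}_{(K)}^T) = \|\widehat{\x}_{(K)}\|_2^2$ yields the stated identity. The hard part will be the second paragraph: rewriting the matrix expression $\Um_{(K)}\Psi^T\Psi\Dd^2\Psi^T\y$ as the clean average $\tfrac1m\sum_j \tfrac{y_j}{\pi_{\M_j}}\Um_{(K)}\e_{\M_j}$ and checking that its $m$ summands are genuinely independent so that only $\tfrac1m\mathrm{Covar}[T_1]$ survives, together with the care needed in taking $\mathrm{Covar}[T_1]$ over the joint law of the index $\M_1$ and the noise $\epsilon_1$ (handled by the tower property). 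Everything after that is bookkeeping with $\Vm_{(K)}^T\Vm_{(K)} = \Id_K$.
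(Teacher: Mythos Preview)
Your argument is correct. The paper's own proof (Appendix~B) uses the same two ingredients --- independence of the $m$ draws and conditioning on the sampled node to compute second moments --- but organizes them differently: it works coordinate-by-coordinate in the signal domain and splits the deviation $x^*_i - \Ep x^*_i$ into a noise part $\Delta_i^{(1)}$ (terms involving $\epsilon_{\M_j}$) and a pure-sampling part $\Delta_i^{(2)}$ (terms involving $x_{\M_j}$), computes $\Ep|\Delta_i^{(1)}|^2$ and $\Ep|\Delta_i^{(2)}|^2$ separately, and then sums over $i$. Your route stays in the Fourier domain, treats $\widehat{\x}^*_{(K)}$ directly as an i.i.d.\ average $\tfrac1m\sum_j T_j$, computes $\mathrm{Covar}[T_1]$ in one shot, and transfers back via the isometry $\Vm_{(K)}$. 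The mechanics coincide (your $\Ep[T_jT_j^T]$ bundles together what the paper separates into $\Delta^{(1)}$ and $\Delta^{(2)}$), but your packaging is more compact and yields the full covariance matrix rather than only its trace; the paper's split, on the other hand, makes the individual noise and sampling contributions visible, which can be useful for later variance-minimization arguments.
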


Theorem~\ref{thm:mse} shows the exact MSE of sampled projection
estimator and an upper bound.
\begin{myThm}
  \label{thm:mse}
  For $\x \in \BLT_{\Adj}(K, \beta, \mu)$, let $\x^*$ be the sampled
  projection estimator with bandwidth $\kappa \geq K$. Then,
  \begin{eqnarray}
    && \mathbb{E} \left\|  \x^* - \x  \right\|_2^2  
   \nonumber \\ \nonumber
    & = &   \left\| \Vm_{(-\kappa)} \widehat{\x}_{(-\kappa)} \right\|_2^2 +   {\rm Tr} \left( \Um_{(\kappa)}  \W_C \Vm_{(\kappa)} \right) - \frac{1}{m} \left\|  \widehat{\x}_{(\kappa)} \right\|_2^2 
    \\
    \label{eq:MSE_upper}
    & \leq  & \frac{\mu }{1 + \kappa^{2\beta}}  \left\| \x \right\|_2^2 + {\rm Tr} \left( \Um_{(\kappa)}  \W_C \Vm_{(\kappa)} \right).
  \end{eqnarray}
\end{myThm}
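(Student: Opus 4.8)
The plan is to compute $\mathbb{E}\|\x^* - \x\|_2^2$ exactly by splitting the error into a bias term and a variance term via the standard bias–variance decomposition, then apply the two preceding lemmas. First I would write $\x^* - \x = (\x^* - \mathbb{E}\x^*) + (\mathbb{E}\x^* - \x)$; since $\mathbb{E}[\x^* - \mathbb{E}\x^*] = 0$, the cross term vanishes in expectation and we get $\mathbb{E}\|\x^* - \x\|_2^2 = \mathbb{E}\|\x^* - \mathbb{E}\x^*\|_2^2 + \|\mathbb{E}\x^* - \x\|_2^2$. The first summand is exactly ${\rm Tr}({\rm Covar}[\x^*])$, which Lemma~\ref{lem:covariance} evaluates as ${\rm Tr}(\Um_{(\kappa)}\W_C\Vm_{(\kappa)}) - \frac{1}{m}\|\widehat{\x}_{(\kappa)}\|_2^2$ (with bandwidth $\kappa$ in place of $K$). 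For the second summand, Lemma~\ref{lem:unbias} gives $\mathbb{E}\x^* = \Vm_{(\kappa)}\Um_{(\kappa)}\x = \Vm_{(\kappa)}\widehat{\x}_{(\kappa)}$, so $\mathbb{E}\x^* - \x = -\Vm_{(-\kappa)}\widehat{\x}_{(-\kappa)}$, whence $\|\mathbb{E}\x^* - \x\|_2^2 = \|\Vm_{(-\kappa)}\widehat{\x}_{(-\kappa)}\|_2^2$. Adding the two pieces yields the claimed exact identity in the first line of~\eqref{eq:MSE_upper}.

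For the upper bound, I would bound the bias term and drop the (nonnegative-subtracted) correction in the variance term. Since $\Vm$ is orthonormal, $\|\Vm_{(-\kappa)}\widehat{\x}_{(-\kappa)}\|_2^2 = \sum_{k=\kappa}^{N-1}\widehat{x}_k^2$. Using $\x \in \BLT_{\Adj}(K,\beta,\mu)$ together with the monotonicity $1 + k^{2\beta} \geq 1 + \kappa^{2\beta}$ for $k \geq \kappa \geq K$, we get
\begin{equation*}
  \sum_{k=\kappa}^{N-1}\widehat{x}_k^2 \;\leq\; \frac{1}{1+\kappa^{2\beta}}\sum_{k=\kappa}^{N-1}(1+k^{2\beta})\widehat{x}_k^2 \;\leq\; \frac{1}{1+\kappa^{2\beta}}\sum_{k=K}^{N-1}(1+k^{2\beta})\widehat{x}_k^2 \;\leq\; \frac{\mu}{1+\kappa^{2\beta}}\|\x\|_2^2 ,
\end{equation*}
where the middle step uses $\kappa \geq K$ so the tail sum from $\kappa$ is dominated by the tail sum from $K$. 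Finally, since $\frac{1}{m}\|\widehat{\x}_{(\kappa)}\|_2^2 \geq 0$, discarding it only increases the expression, giving $\mathbb{E}\|\x^* - \x\|_2^2 \leq \frac{\mu}{1+\kappa^{2\beta}}\|\x\|_2^2 + {\rm Tr}(\Um_{(\kappa)}\W_C\Vm_{(\kappa)})$, which is~\eqref{eq:MSE_upper}.

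The proof is essentially bookkeeping once the two lemmas are in hand, so there is no deep obstacle; the main point requiring care is the consistent substitution of the working bandwidth $\kappa$ for $K$ throughout Lemmas~\ref{lem:unbias} and~\ref{lem:covariance} — those lemmas are stated for bandwidth $K$, and one must verify that their proofs go through verbatim with any $\kappa \in \{K,\ldots,N-1\}$ (the unbiasedness and covariance computations never use the $\BLT$ tail condition, only the projection structure, so this is immediate). The one genuinely substantive inequality is the tail-energy bound above, where the approximately bandlimited constraint is invoked, and it is important that the penalty weight $1 + k^{2\beta}$ is nondecreasing in $k$ so that evaluating it at the smaller index $\kappa$ gives a valid lower bound on every term of the tail.
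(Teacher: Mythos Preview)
Your proposal is correct and follows essentially the same route as the paper: the paper's Appendix~B performs exactly the bias--variance split $\mathbb{E}\|\x^*-\x\|_2^2 = \|\mathbb{E}\x^*-\x\|_2^2 + \mathbb{E}\|\x^*-\mathbb{E}\x^*\|_2^2$, identifies the two pieces with Lemmas~\ref{lem:unbias} and~\ref{lem:covariance} (proved there in one block with bandwidth $\kappa$), and then bounds the bias by the same tail-energy inequality $\sum_{k\geq\kappa}\widehat{x}_k^2 \leq (1+\kappa^{2\beta})^{-1}\sum_{k\geq\kappa}(1+k^{2\beta})\widehat{x}_k^2 \leq \mu(1+\kappa^{2\beta})^{-1}\|\x\|_2^2$ while dropping the nonnegative $\tfrac{1}{m}\|\widehat{\x}_{(\kappa)}\|_2^2$ term. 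Your observation about substituting $\kappa$ for $K$ in the lemmas is exactly how the paper handles it as well.
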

We merge the proofs of Lemmas~\ref{lem:unbias},~\ref{lem:covariance}
and Theorem~\ref{thm:mse} in Appendix B in the supporting
document. The main idea follows from the bias-variance tradeoff.  The
bias term $\mu/(1 + \kappa^{2\beta}) \left\| \x \right\|_2^2$ comes
from the high-frequency components and ${\rm Tr} \left( \Um_{(\kappa)}
  \W_C \Vm_{(\kappa)} \right)$ comes from the covariance. The
  last inequality in Theorem~\ref{thm:mse} comes from relaxing the
  bias term and omitting a constant, which has nothing to do with the
  sampling scores. Thus, optimizing the upper bound over sampling
  scores is equivalent to optimizing the exact MSE.

We next study the MSEs of sampled projection estimator based on both
uniform sampling and experimentally designed sampling.
\begin{myCorollary}
  \label{cor:random}
  The upper bound of MSE of uniform sampling is
  \begin{eqnarray*}
    \mathbb{E} \left\| \x^* - \x \right\|_2^2
    & \leq & \frac{\mu }{1 + \kappa^{2\beta}}  \left\| \x \right\|_2^2 + \frac{N}{m} \sum_{k=1}^{\kappa} \sum_{i=1}^N \Um_{k,i}^2  \left( x_i^2 + \sigma^2 \right).
\end{eqnarray*}
\end{myCorollary}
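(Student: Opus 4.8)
The plan is to specialize the upper bound~\eqref{eq:MSE_upper} of Theorem~\ref{thm:mse} to uniform sampling, which corresponds to the choice $\pi_i = 1/N$ for every node $i$.

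First I would substitute this choice of sampling scores into the diagonal matrix $\W_C$ appearing in Theorem~\ref{thm:mse}: its entries become $(\W_C)_{i,i} = (x_i^2 + \sigma^2)/(m\pi_i) = (N/m)(x_i^2 + \sigma^2)$.

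Next I would expand the covariance term ${\rm Tr}(\Um_{(\kappa)} \W_C \Vm_{(\kappa)})$. Because the graph is undirected we have $\Um = \Vm^T$, so $\Um_{(\kappa)}$ (the first $\kappa$ rows of $\Um$) is the transpose of $\Vm_{(\kappa)}$ (the first $\kappa$ columns of $\Vm$); in entries, $(\Um_{(\kappa)})_{k,i} = (\Vm_{(\kappa)})_{i,k} = \Um_{k,i}$. Writing out the diagonal entries of the $\kappa\times\kappa$ matrix $\Um_{(\kappa)} \W_C \Vm_{(\kappa)}$ and using that $\W_C$ is diagonal, the trace collapses to $\sum_{k=1}^{\kappa}\sum_{i=1}^N \Um_{k,i}^2\,(\W_C)_{i,i}$; inserting the value of $(\W_C)_{i,i}$ from the previous step gives $\tfrac{N}{m}\sum_{k=1}^{\kappa}\sum_{i=1}^N \Um_{k,i}^2(x_i^2 + \sigma^2)$.

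Finally I would plug this expression back into~\eqref{eq:MSE_upper}, which produces the claimed inequality verbatim. There is no genuine obstacle here: the argument is a direct substitution into Theorem~\ref{thm:mse}. The only step requiring any care is the indexing bookkeeping that identifies $\Um_{(\kappa)}$ with $\Vm_{(\kappa)}^{T}$ and thereby turns the diagonal of the trace into the squared entries $\Um_{k,i}^2$.
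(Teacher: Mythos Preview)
Your proposal is correct and follows exactly the paper's own approach: the paper merely remarks that the upper bound ``just specifies that the sampling score be $\pi_i = 1/N$ for all $i$,'' i.e., it substitutes uniform scores into Theorem~\ref{thm:mse} as you do. Your explicit computation of ${\rm Tr}(\Um_{(\kappa)}\W_C\Vm_{(\kappa)})$ via $\Um_{(\kappa)}=\Vm_{(\kappa)}^{T}$ fills in the only omitted detail.
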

The upper bound just specifies that the sampling score be $\pi_i =
1/N$ for all $i$. For experimentally designed sampling, we are allowed
to study the graph structure and design sample indices. In the
following corollary, we show a set of optimal sampling scores of the
sampled projection estimator by minimizing the upper bound of MSE.

\begin{myCorollary}
  \label{cor:optimal}
  The optimal sampling score of the sampled projection estimator with
  bandwidth $\kappa \geq K$ is
  \begin{eqnarray*}
    \pi_i & \propto & \sqrt{  \left( \sum_{k=1}^{\kappa}\Um_{k,i}^2 \right)  \left( x_i^2 + \sigma^2 \right)  }.
  \end{eqnarray*}
  The corresponding upper bound of MSE is
  \begin{eqnarray*}
    && \mathbb{E} \left\|     \x^* - \mathbb{E} \x^*  \right\|_2^2 
    \\
    & \leq & \frac{\mu }{1 + \kappa^{2\beta}}  \left\| \x \right\|_2^2 +  \frac{1}{m} \left( \sum_{i=1}^N \sqrt{  \sum_{k=1}^{\kappa} \Um_{k,i}^2  \left( x_i^2 + \sigma^2 \right)  }  \right)^2.
  \end{eqnarray*}
\end{myCorollary}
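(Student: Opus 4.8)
The plan is to derive Corollary~\ref{cor:optimal} as a one-line consequence of Theorem~\ref{thm:mse} followed by an elementary optimization over the probability simplex. Starting point: for any sampling scores $\{\pi_i\}_{i=1}^N$ with $\pi_i\ge 0$ and $\sum_i \pi_i = 1$, Theorem~\ref{thm:mse} bounds $\mathbb{E}\|\x^*-\x\|_2^2$ by $\frac{\mu}{1+\kappa^{2\beta}}\|\x\|_2^2 + \mathrm{Tr}(\Um_{(\kappa)}\W_C\Vm_{(\kappa)})$, and only the second term depends on $\{\pi_i\}$, through $\W_C = \mathrm{diag}\big((x_i^2+\sigma^2)/(m\pi_i)\big)$. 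So the first step is to write this trace explicitly in the $\pi_i$: since $\Um=\Vm^T$ in the undirected case and $\W_C$ is diagonal, expanding entrywise gives $\mathrm{Tr}(\Um_{(\kappa)}\W_C\Vm_{(\kappa)}) = \sum_{k=1}^{\kappa}\sum_{i=1}^{N}\Um_{k,i}^{2}\,\frac{x_i^2+\sigma^2}{m\pi_i}$, and after interchanging the sums this is $\frac{1}{m}\sum_{i=1}^{N}\frac{a_i}{\pi_i}$ with $a_i := \big(\sum_{k=1}^{\kappa}\Um_{k,i}^{2}\big)(x_i^2+\sigma^2)$.

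The second step minimizes $\sum_i a_i/\pi_i$ over $\{\pi_i\ge 0,\ \sum_i\pi_i=1\}$. By Cauchy--Schwarz, $\big(\sum_i\sqrt{a_i}\big)^2 = \big(\sum_i \tfrac{\sqrt{a_i}}{\sqrt{\pi_i}}\sqrt{\pi_i}\big)^2 \le \big(\sum_i \tfrac{a_i}{\pi_i}\big)\big(\sum_i\pi_i\big) = \sum_i\tfrac{a_i}{\pi_i}$, with equality if and only if $\sqrt{a_i}/\pi_i$ is independent of $i$, i.e.\ $\pi_i\propto\sqrt{a_i}$; the same minimizer drops out of a Lagrange-multiplier computation, which also confirms it is interior and unique when $\sigma^2>0$ (so that $a_i>0$). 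Substituting back the definition of $a_i$ yields the claimed optimal score $\pi_i\propto\sqrt{\big(\sum_{k=1}^{\kappa}\Um_{k,i}^{2}\big)(x_i^2+\sigma^2)}$, and plugging that $\pi_i$ into $\frac{1}{m}\sum_i a_i/\pi_i$ collapses the variance term to $\frac{1}{m}\big(\sum_{i=1}^{N}\sqrt{\sum_{k=1}^{\kappa}\Um_{k,i}^{2}(x_i^2+\sigma^2)}\big)^2$; combining with the bias term $\frac{\mu}{1+\kappa^{2\beta}}\|\x\|_2^2$ from Theorem~\ref{thm:mse} (which uses $\kappa\ge K$ and $\x\in\BLT_{\Adj}(K,\beta,\mu)$) gives the stated bound.

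There is no genuinely hard step; the only things requiring care are the entrywise trace expansion --- the cross terms vanish precisely because $\W_C$ is diagonal, leaving the $\Um_{k,i}^{2}$ --- and checking that the Cauchy--Schwarz minimizer is feasible, which it is since $a_i>0$ makes $\pi_i\propto\sqrt{a_i}$ strictly positive. A useful sanity check is that substituting the uniform choice $\pi_i=1/N$ into $\frac{1}{m}\sum_i a_i/\pi_i$ reproduces Corollary~\ref{cor:random}, and that $\big(\sum_i\sqrt{a_i}\big)^2\le N\sum_i a_i$ (Cauchy--Schwarz once more) shows the optimized bound is never worse than the uniform one, as expected.
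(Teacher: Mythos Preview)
Your proof is correct and follows essentially the same approach as the paper: both identify that only the trace term $\mathrm{Tr}(\Um_{(\kappa)}\W_C\Vm_{(\kappa)})$ depends on the sampling scores, expand it as $\frac{1}{m}\sum_i a_i/\pi_i$, minimize over the simplex, and substitute back. The only difference is that you carry out the elementary optimization via Cauchy--Schwarz (with equality giving $\pi_i\propto\sqrt{a_i}$) whereas the paper writes down the Lagrangian and sets the derivative to zero; both yield the same minimizer and the same final bound.
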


\begin{proof}
  To obtain the optimal sampling score for the estimator, we minimize
  the MSE given in Theorem~\ref{thm:mse} and solve the following
  optimization problem.
  \begin{eqnarray*}
    &&  \min_{\pi_{i}}  {\rm Tr} \left( \Um_{(\kappa)}  \W_C \Vm_{(\kappa)} \right),  
    \\
    && {\rm subject~to}~ \sum_{i}  \pi_{i} = 1, \pi_{i} \geq 0.
  \end{eqnarray*}
  The objective function is the variance term of the MSE derived
    in Theorem~\ref{thm:mse}. Since the bias term has nothing to do
    with the sampling score, minimizing the variance term is
    equivalent to minimizing the MSE. The constraints require $\{
  \pi_i\}_{i=1}^N$ to be a valid probability distribution. The
  Lagrangian function is then
  \begin{eqnarray*}
    L( \pi_{i}, \eta_0, \eta_{i}  ) & = &   \sum_{i=1}^N \left( \frac{ x_i^2 +  \sigma^2}{ m \pi_{i} }  
      \sum_{k=1}^{\kappa} \Um_{k,i}^2  \right)
    \\
    && + \eta_0 \left( \sum_{i=1}^N  \pi_i - 1 \right) + \sum_{i=1}^N  \eta_i \pi_i,
  \end{eqnarray*}
  where $\eta_0, \eta_i$ are Lagrangian multipliers. We set the
  derivative of the Lagrangian function to zero,
  \begin{eqnarray*}
    \frac{d L}{d \pi_i} & = &  -  \frac{ x_i^2 +  \sigma^2 }{ m \pi_l^2 }  
    \sum_{k=1}^{\kappa} \Um_{k,i}^2  + \eta_0 + \eta_i = 0,
  \end{eqnarray*}
  and obtain the optimal sampling score 
  \begin{eqnarray}
    \label{eq:opt_sampling_score}
    \pi_i & \propto & \sqrt{  \left( \sum_{k=1}^{\kappa}\Um_{k,i}^2 \right)  \left( x_i^2 + \sigma^2 \right)  }.
  \end{eqnarray}
  Substituting the optimal sampling score $\pi_i$
  to the upper bound of the MSE~\eqref{eq:MSE_upper},
  \begin{eqnarray*}
    && \mathbb{E} \left\|     \x^* - \mathbb{E} \x^*  \right\|_2^2
    \ \stackrel{(a)}{\leq} \ \frac{\mu }{1 + \kappa^{2\beta}}  \left\| \x \right\|_2^2 
    \\
    &+& \frac{1}{m} \sum_{i=1}^{N}   \frac{  \sum_{k=1}^{\kappa}  \Um_{k,i}^2 (x_{i}^2+\sigma^2 ) }{ \sqrt{  \sum_{k=1}^{\kappa} \Um_{k,i}^2   \left( x_i^2 + \sigma^2 \right)  } } \sum_{i=1}^N \sqrt{  \sum_{k=1}^{\kappa} \Um_{k,i}^2  \left( x_i^2 + \sigma^2 \right)  } 
    \\
    & = & \frac{\mu }{1 + \kappa^{2\beta}}  \left\| \x \right\|_2^2 +  \frac{1}{m} \left( \sum_{i=1}^N \sqrt{  \sum_{k=1}^{\kappa} \Um_{k,i}^2  \left( x_i^2 + \sigma^2 \right)  }  \right)^2,
  \end{eqnarray*}
 where (a) follows from substituting the optimal sampling score~\eqref{eq:opt_sampling_score} into the upper bound of the MSE~\eqref{eq:MSE_upper}. 
\end{proof}

We see that the optimal sampling score includes a trade-off between signal and  noise. In practice, we
cannot access $\x$ and thus need to approximate the ratio between each
$x_i$ and $\sigma^2$. For active sampling, we can collect
 feedback to approximate each signal coefficient $x_i$;
for experimentally designed sampling, we approximate beforehand; one
way is to use the graph structure to sketch the shape of $\x$. We
first use the Cauchy-Schwarz inequality to bound $x_i$,
\begin{eqnarray*}
  |x_i|  & = &  | \vv_i^T \widehat{\x} | 
  \  \leq \  \left\| \vv_{i} \right\|_2  \left\| \widehat{\x} \right\|_2 
  \ = \  \left\| \vv_{i} \right\|_2  \left\| \x \right\|_2,
\end{eqnarray*}
where $\vv_{i} $ is the $i$th row of $\Vm$. Thus, in the upper bound,
we have a tradeoff between signal and noise: when
the signal-to-noise ratio (SNR) $\left\| \x \right\|_2/\sigma^2$ is
small, the approximate optimal sampling score is
\begin{eqnarray*}
  \pi_i & \propto &  \left\| \vv_{i, (K)} \right\|_2 \ = \ \sqrt{ \sum_{k=1}^{\kappa}\Um_{k,i}^2 },
\end{eqnarray*}
which is the square root of the leverage score of $\Vm_{(K)}$. 

When the SNR $\left\| \x \right\|_2/\sigma^2$ is large, the
approximate optimal sampling score is
\begin{eqnarray*}
  \pi_i & \propto & \sqrt{  \left( \sum_{k=1}^{\kappa}\Um_{k,i}^2 \right)  \left\| \vv_{i} \right\|_2^2  }.
\end{eqnarray*}

For approximately bandlimited signals, when $\beta$ is large and $\mu$
is small, the main energy concentrates in the first $K$ frequency
components, $\left\| \vv_{i} \right\|_2$ is concentrated in $\left\|
  \vv_{i, (K)} \right\|_2$, where $\vv_{i, (K)}$ is the first $K$
elements in $\vv_{i}$. Specifically,
\begin{eqnarray*}
  |x_i|  & = &  | \vv_i^T \widehat{\x} | = | \vv_{i, (K)}^T \widehat{\x}_{(K)} + \vv_{i, (-K)}^T \widehat{\x}_{(-K)}  |
  \\
  &  \leq &  \left\| \vv_{i, (K)} \right\|_2  \left\| \widehat{\x}_{(K)}  \right\|_2 +  \left\| \vv_{i, (-K)} \right\|_2  \left\| \widehat{\x}_{(-K)}  \right\|_2
  \\
  &  \leq &  \left\| \vv_{i, (K)} \right\|_2  \left\| \widehat{\x}_{(K)}  \right\|_2 +  \left\| \vv_{i, (-K)} \right\|_2   \sqrt{ \frac{ \mu }{1 + K^{2\beta}} } \left\| \x \right\|_2
  \\
  &  \approx &  \left\| \vv_{i, (K)} \right\|_2  \left\| \widehat{\x}_{(K)}  \right\|_2.
\end{eqnarray*}
In this case, when the SNR $\left\| \x \right\|_2/\sigma^2$ is large,
the approximate optimal sampling score is
\begin{eqnarray*}
  \pi_i & \propto &   \left\| \vv_{i, (K)} \right\|_2^2  \ = \   \sum_{k=1}^{\kappa}\Um_{k,i}^2,
\end{eqnarray*}
which is the leverage score of $\Vm_{(K)}$.

\subsection{Relation to the Sampling Theory on Graphs}
Sampling theory on graphs considers a bandlimited graph signal under
the experimentally designed sampling~\cite{ChenVSK:15}. It shows that
for a noiseless bandlimited graph signal, 
experimentally designed sampling guarantees perfect recovery while
uniform sampling cannot, which also implies that active sampling
cannot perform better than experimentally designed sampling. The
recovery strategy is to solve the following optimization problem,
\begin{eqnarray}
  \label{eq:st}
  \x^*_{\rm ST} = \arg \min_{\x \in \BL_{\Adj}(K) } \,  \left\|  \Psi \x -  \y \right\|_2^2 
  = \Vm_{(K)}  (\Psi  \Vm_{(K)}  )^{\dagger} \y,
\end{eqnarray}
where $\Psi$ is the sampling operator~\eqref{eq:Psi} and $\y$ is a
vector representation of the samples~\eqref{eq:sampling}.  When
  the original graph signal is bandlimited, the
  estimator~\eqref{eq:st} is unbiased and its MSE comes solely from
  the variance term caused by noise, that is,
\begin{eqnarray*}
  && \mathbb{E} \left\| \x^*_{\rm ST} - \x \right\|_2^2  
  \\ 
  & = &   \mathbb{E} \left\|  \Vm_{(K)}  (\Psi  \Vm_{(K)}  )^{\dagger} (\Psi \x + \epsilon)   - \x \right\|_2^2  
  \\
  & =  & \mathbb{E} \left\|  \Vm_{(K)}  (\Psi  \Vm_{(K)}  )^{\dagger}  \epsilon \right\|_2^2 \ = \   \mathbb{E} \left\| (\Psi  \Vm_{(K)}  )^{\dagger}  \epsilon \right\|_2^2.
\end{eqnarray*}
In~\cite{ChenVSK:15}, the authors propose an optimal sampling operator
based on minimizing $\left\| (\Psi \Vm_{(k)} )^{\dagger}
\right\|_2^2$. The sample set is deterministic and the procedure is
computationally efficient when the sample size is small. Compared with
sampling score-based sampling, however, the optimal sampling operator
is computationally inefficient when the sample size is large. When the
original graph signal is not bandlimited, similarly to $\x^*_{\rm
  LS}$, the solution of~\eqref{eq:st} is biased, that is,
\begin{eqnarray*}
  \mathbb{E} \x^* & = & \Vm_{(K)}  (\Psi  \Vm_{(K)}  )^{\dagger} \mathbb{E} (\Psi \x + \epsilon)
  \\ 
  & = & \Vm_{(K)}  (\Psi  \Vm_{(K)}  )^{\dagger} \Psi \left(  \Vm_{(K)} \xhat_{(K)}  + \Vm_{(-K)} \xhat_{(-K)}  \right)
  \\ 
  & = & \Vm_{(K)}  \xhat_{(K)}  + \Vm_{(K)}  (\Psi  \Vm_{(K)}  )^{\dagger}  \Vm_{(-K)} \xhat_{(-K)}.
\end{eqnarray*}
 We see that the high-frequency components $\Vm_{(-K)} \xhat_{(-K)}$ are projected on the low-frequency space spanned by $\Vm_{(K)}$, which causes aliasing.

\section{Optimal Rates of Convergence}
\label{sec:convergence}
In this section, we introduce two types of graphs and show that the
proposed recovery strategies achieve the optimal rates of convergence
on both. Type-1 graphs model regular graphs, where the
  corresponding graph Fourier bases are not sparse and elements have
  similar magnitudes; examples are circulant graphs and
  nearest-neighbor graphs~\cite{SandryhailaKP:11a}.  Since the energy
  is evenly spread over the graph Fourier basis, each element contains
  similar amount of information; for such graphs, experimentally
  designed sampling performs similarly to uniform sampling. Type-2
  graphs model irregular graphs, where the corresponding graph Fourier
  bases are sparse and elements do not have similar magnitudes;
  examples are small-world graphs and scale-free
  graphs~\cite{Newman:10}. Since the energy is concentrated in a few
  elements in the graph Fourier basis, these elements are more
  informative and should be selected. For such graphs, experimentally
  designed sampling outperforms uniform sampling.

\subsection{Type-1 Graphs}
\begin{defn}
  \label{df:regular}
  A graph $\Adj \in \R^{N \times N}$ is of~\emph{Type-1}, when its
  graph Fourier basis satisfies
  \begin{equation*}
    | \Vm_{i,j} | = O(N^{-1/2}),~~~{\rm for~all}~ i,j = 0, 1, \cdots, N-1.
  \end{equation*}
\end{defn}
For a Type-1 graph, elements in $\Vm$ have roughly similar magnitudes,
that is, the energy is evenly spread over $\Vm$.

Based on Theorem~\ref{thm:mse}, we can specify the parameters for a
Type-1 graph and show the following results.
\begin{myCorollary}
  \label{cor:type1_upp}
  Let $\Adj \in \R^{N \times N}$ be a Type-1 graph, for the class
  $\BLT_{\Adj }(K, \beta, \mu)$.
  \begin{itemize}
  \item Let $\x^*$ be the sampled projection estimator with the
    bandwidth $\kappa \geq K$ and uniform sampling; we have
    \begin{eqnarray*}
      \mathbb{E}  \left(  \frac{  \left\| \x^* - \x \right\|_2^2  } { \left\| \x \right\|_2^2 }   \right) \ \leq \  C \, m^{-\frac{2\beta}{2\beta+1}},
    \end{eqnarray*}
    where $C$ is a positive constant,  $m$ is the number of samples~\eqref{eq:Psi},
      $\beta$ is the spectral decay factor in the approximately
      bandlimited class~\eqref{eq:BLT}, and the rate is achieved when
    $\kappa$ is of the order of $m^{1/(2\beta+1)}$ and
    upper bounded by $N$;
  \item Let $\x^*$ be the sampled projection estimator with the
    bandwidth $\kappa \geq K$ and the optimal sampling score in
    Corollary~\ref{cor:optimal}; we have
    \begin{eqnarray*}
      \mathbb{E}  \left(  \frac{  \left\| \x^* - \x \right\|_2^2  } { \left\| \x \right\|_2^2 }   \right)  \ \leq \ C \, m^{-\frac{2\beta}{2\beta+ 1}},
    \end{eqnarray*}
    where $C$ is a positive constant, and the rate is achieved when $\kappa$ is of the order of $m^{1/(2\beta+1)}$ and upper
    bounded by $N$.
  \end{itemize}
\end{myCorollary}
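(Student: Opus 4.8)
The plan is to reduce both bullets to one bias--variance balancing computation. The starting point is the MSE estimate of Theorem~\ref{thm:mse}, specialized to the two regimes of interest: for uniform sampling I use Corollary~\ref{cor:random}, and for the optimal sampling score Corollary~\ref{cor:optimal}. In each case the relative risk $\mathbb{E}\big(\|\x^*-\x\|_2^2/\|\x\|_2^2\big)$ splits into a bias term $\mu/(1+\kappa^{2\beta})$, which is $O(\kappa^{-2\beta})$ for $\kappa\ge 1$ and is independent of the graph, plus a variance term which is exactly where the Type-1 hypothesis will be used.

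For the Type-1 estimate I use that $\Adj$ is undirected, so $\Um=\Vm^T$ and $\Um_{k,i}=\Vm_{i,k}$; Definition~\ref{df:regular} then gives $\Um_{k,i}^2=O(1/N)$ uniformly, hence $\sum_{k=1}^{\kappa}\Um_{k,i}^2=O(\kappa/N)$ for every $i$. For uniform sampling this turns the variance term into
$\frac Nm\sum_{k=1}^{\kappa}\sum_{i=1}^N\Um_{k,i}^2(x_i^2+\sigma^2)
\le \frac Nm\,\big(\max_i\sum_{k=1}^\kappa\Um_{k,i}^2\big)\sum_{i=1}^N(x_i^2+\sigma^2)
= O(\kappa/m)\,(\|\x\|_2^2+N\sigma^2)$.
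For the optimal score the variance term is $\frac1m\big(\sum_{i=1}^N\sqrt{\sum_{k=1}^\kappa\Um_{k,i}^2\,(x_i^2+\sigma^2)}\big)^2$; substituting $\sum_k\Um_{k,i}^2=O(\kappa/N)$ and applying Cauchy--Schwarz in the form $\big(\sum_{i=1}^N\sqrt{a_i}\big)^2\le N\sum_{i=1}^N a_i$ with $a_i=(\kappa/N)(x_i^2+\sigma^2)$ gives $O(\kappa/m)\,(\|\x\|_2^2+N\sigma^2)$ as well. Thus on a Type-1 graph the optimized score buys nothing over uniform sampling, which is precisely the qualitative point. Dividing by $\|\x\|_2^2$ and absorbing $\mu$, $\sigma^2$ and the per-coordinate signal-to-noise ratio (equivalently $N\sigma^2/\|\x\|_2^2=O(1)$) into the constant, both bullets reduce to $\mathbb{E}\big(\|\x^*-\x\|_2^2/\|\x\|_2^2\big)\le C_1\kappa^{-2\beta}+C_2\,\kappa/m$.

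The final step is to optimize over the free bandwidth $\kappa\ge K$: balancing $\kappa^{-2\beta}\asymp \kappa/m$ gives $\kappa\asymp m^{1/(2\beta+1)}$, and with this choice each term is of order $m^{-2\beta/(2\beta+1)}$, yielding the claimed rate with $C=C_1+C_2$. One only has to check that this $\kappa$ is a legal bandwidth: $\kappa\ge K$ holds once $m\ge m_0$, and $\kappa\le N$ holds as long as $m\lesssim N^{2\beta+1}$, which is the regime implicit in the phrase ``upper bounded by $N$'' in the statement.

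I expect the only step that is not purely mechanical to be the optimal-sampling-score case: one must see that $\big(\sum_{i=1}^N\sqrt{a_i}\big)^2\le N\sum_{i=1}^N a_i$ is the right inequality and that it is tight up to constants exactly when the $a_i$ are comparable --- which the Type-1 hypothesis forces --- so that the refined score cannot improve the exponent. Everything else (the $O(1/N)$ entrywise bound and the choice of $\kappa$) is the standard nonparametric bias--variance tradeoff that produces rates of the form $m^{-2\beta/(2\beta+1)}$.
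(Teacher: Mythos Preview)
Your argument is correct and follows the same overall scheme as the paper's Appendix~C: start from the MSE bounds of Corollaries~\ref{cor:random} and~\ref{cor:optimal}, control the variance term using the Type-1 hypothesis, and balance $\kappa^{-2\beta}$ against $\kappa/m$ to extract the rate $m^{-2\beta/(2\beta+1)}$.

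There is one mild but genuine difference in how the variance is bounded. The paper pulls the signal out first, writing $\sum_{k,i}\Um_{k,i}^2(x_i^2+\sigma^2)\le(\max_i x_i^2+\sigma^2)\sum_{k,i}\Um_{k,i}^2=(\max_i x_i^2+\sigma^2)\kappa$, and then invokes an extra consequence of Type-1, namely that signal coefficients on a Type-1 graph have comparable magnitudes so that $N\max_i x_i^2\asymp\|\x\|_2^2$. You instead apply Definition~\ref{df:regular} directly to the matrix entries to get $\sum_{k=1}^{\kappa}\Um_{k,i}^2=O(\kappa/N)$ uniformly in $i$, which is more direct and avoids reasoning about the signal's shape; the Cauchy--Schwarz step $(\sum_i\sqrt{a_i})^2\le N\sum_i a_i$ in the optimal-score case is the natural counterpart. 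Both routes ultimately need the same per-coordinate SNR assumption $N\sigma^2/\|\x\|_2^2=O(1)$ to close the bound, which you state explicitly and the paper leaves implicit in its ``$\le c\kappa/m$'' step.
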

When $m \gg N$, we set $\kappa = N$, the bias term is then zero, and
both upper bounds are actually $C \, m^{-1}$. We see that uniform
sampling and optimal sampling score based sampling have the same
convergence rate, that is, experimentally designed sampling does not
perform better than uniform sampling for the Type-1 graphs.

Based on Theorem~\ref{thm:lower} and Corollary~\ref{cor:type1_upp}, we
conclude the following.
\begin{myCorollary}
  \label{cor:type1_opt}
  Let $\Adj \in \R^{N \times N}$ be a Type-1 graph, for the class
  $\BLT_{\Adj }(K, \beta, \mu)$.
  \begin{itemize}
  \item Under uniform sampling,
    \begin{eqnarray*}
      c \, m^{-\frac{2\beta}{2\beta+1}} \ \leq \ ~~~~~~~~~~~~~~~~~~~~~~~~~~~~~~~~~~~~~~~~
      \\
      \inf_{(\x^*, \M) \in \Theta_{\rm u} } \sup_{\x \in \BLT_{\Adj}(K, \beta, \mu)} \mathbb{E}_{\x, \M} \left(  \left\| \x^* - \x \right\|_2^2 \right) 
      \\
      \ \leq \  C \, m^{-\frac{2\beta}{2\beta+1}},
    \end{eqnarray*}
    where constant $C > c > 0$, and the rate is achieved when $\kappa$
    is  of the order of $m^{1/(2\beta+1)}$ and upper bounded by $N$;
  \item Under experimentally designed sampling,
    \begin{eqnarray*}
      c  \, m^{-\frac{2\beta}{2\beta+1}} \ \leq \
      ~~~~~~~~~~~~~~~~~~~~~~~~~~~~~~~~~~~~~~~~
      \\
      \inf_{(\x^*, \M) \in \Theta_{\rm e} } \sup_{\x \in \BLT_{\Adj} (K, \beta, \mu)} \mathbb{E}_{\x, \M} \left(  \left\| \x^* - \x \right\|_2^2 \right) 
      \\
      \ \leq \  C  \, m^{-\frac{2\beta}{2\beta+1}},
    \end{eqnarray*}
    where constant $C > c > 0$, and the rate is achieved when $\kappa$
    is  of the order of $m^{1/(2\beta+1)}$ and upper bounded by $N$.
  \end{itemize}
\end{myCorollary}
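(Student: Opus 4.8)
The plan is to sandwich the minimax risk between matching bounds: the upper inequalities are inherited essentially verbatim from Corollary~\ref{cor:type1_upp}, and the matching lower inequalities come from specializing the general minimax bounds of Theorem~\ref{thm:lower} to a Type-1 graph. Throughout I would work in the regime $m \ge m_0$ for which the balancing bandwidth $\kappa$ of order $m^{1/(2\beta+1)}$ still satisfies $K \le \kappa \le N$; the complementary regime $m \gg N$ collapses, as already observed after Corollary~\ref{cor:type1_upp}, to the parametric rate $m^{-1}$ and would be treated separately. I would also fix the normalization of $\x$ so that the normalized risk appearing in Theorems~\ref{thm:lower} and~\ref{thm:mse} and the unnormalized risk in the present statement agree up to a bounded factor that can be absorbed into $c$ and $C$.

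For the upper bounds, since the infimum over recovery strategies is at most the risk of any single one, it suffices to exhibit one strategy per sampling model. For uniform sampling I would take the sampled projection estimator of Algorithm~\ref{alg:subsampleProj} with $\pi_i = 1/N$ and an integer bandwidth $\kappa$ of order $m^{1/(2\beta+1)}$; Corollary~\ref{cor:type1_upp} then delivers the bound $C\,m^{-2\beta/(2\beta+1)}$ directly. For experimentally designed sampling I would take the same estimator with the optimal sampling score of Corollary~\ref{cor:optimal}, for which Corollary~\ref{cor:type1_upp} again gives $C\,m^{-2\beta/(2\beta+1)}$. Multiplying through by the (now constant-order) value of $\|\x\|_2^2$ completes this direction.

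The substance lies in the lower bound, where I would evaluate the graph-dependent norms $\|\Vm_{(2,\kappa_0)}\|_F^2$ and $\|\Vm_{(2,\kappa_0)}\|_{\infty,2}^2$ from Theorem~\ref{thm:lower} for a Type-1 graph. Since $\Vm_{(2,\kappa_0)}$ is a block of at most $\kappa_0$ orthonormal columns, $\|\Vm_{(2,\kappa_0)}\|_F^2 \le \kappa_0$; and since every entry of $\Vm$ is $O(N^{-1/2})$, each of its $N$ rows contributes squared norm $O(\kappa_0/N)$, whence $\|\Vm_{(2,\kappa_0)}\|_{\infty,2}^2 = O(\kappa_0/N)$. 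Thus both $N^{-1}\|\Vm_{(2,\kappa_0)}\|_F^2$ and $\|\Vm_{(2,\kappa_0)}\|_{\infty,2}^2$ are $O(\kappa_0/N)$, so parts (1) and (2) of Theorem~\ref{thm:lower} both reduce, up to constants, to the single bound
\begin{equation*}
  \max_{K \le \kappa_0 \le N}\ \frac{c_1\mu}{\kappa_0^{2\beta}}\left(1-\frac{c'\mu\|\x\|_2^2\,m}{\sigma^2\,N\,\kappa_0^{2\beta+1}}\right)
\end{equation*}
for some constant $c'>0$; in particular, on a Type-1 graph the uniform and experimentally designed lower bounds have the same order. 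I would then pick $\kappa_0$ to be a suitable constant multiple of $\min\{N,(m/N)^{1/(2\beta+1)}\}$, which makes the parenthesized factor at least $\tfrac12$ once $m\ge m_0$ and gives $\kappa_0^{-2\beta}\asymp m^{-2\beta/(2\beta+1)}$ up to a constant depending only on $N,\mu,\sigma,\beta$ and $\|\x\|_2$; this produces the lower inequalities $c\,m^{-2\beta/(2\beta+1)}$ for both strategies.

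The main obstacle is not any individual estimate but the joint bookkeeping in the lower bound: the chosen $\kappa_0$ must simultaneously lie in $[K,N]$, keep the bracket bounded away from zero, and make $\kappa_0^{-2\beta}$ a polynomial of the advertised order — and it is exactly this triple constraint that forces the hypothesis $m\ge m_0$ and pins down the hidden dependence of $c$ on the graph size and the remaining parameters. Matching the two constants is a non-issue, since the statement asserts only $c<C$, not $c\approx C$, and both multiply the same power of $m$; the only remaining care is the cosmetic passage between normalized and unnormalized risk, which is why $\|\x\|_2$ is fixed to constant order at the outset.
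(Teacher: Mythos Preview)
Your plan is correct and matches the paper's own argument: the upper bounds are read off from Corollary~\ref{cor:type1_upp}, and the lower bounds are obtained by plugging the Type-1 estimates $\|\Vm_{(2,\kappa_0)}\|_F^2=\kappa_0$ and $\|\Vm_{(2,\kappa_0)}\|_{\infty,2}^2=O(\kappa_0/N)$ into Theorem~\ref{thm:lower} and balancing. The only cosmetic difference is that the paper first replaces $\|\x\|_2^2/N$ in the bracket by the (trivially larger) quantity $\max_i x_i^2$ before balancing, which lets it take $\kappa_0\asymp m^{1/(2\beta+1)}$ directly rather than your $\kappa_0\asymp (m/N)^{1/(2\beta+1)}$ with the $N$-dependence pushed into $c$; both choices yield the same rate.
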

We merge the proofs of Corollaries~\ref{cor:type1_upp}
and~\ref{cor:type1_opt} in Appendix C in the supporting document.

We see that under both random and
experimentally designed sampling, the lower and
  upper bounds have the same rate of convergence, which achieves the
  optimum. In addition, random and experimentally designed sampling
  have the same optimal rate of convergence and we can thus conclude
  that experimentally designed sampling does not perform
  asymptotically better than uniform sampling for the Type-1
  graphs. The sampled projection estimator attains the optimal
  rate~of~convergence.

\subsection{Type-2 Graphs}
\begin{defn}
  \label{df:irregular}
  A graph $\Adj \in \R^{N \times N}$ is of~\emph{Type-2}, when its
  graph Fourier basis satisfies
  \begin{eqnarray*}
    | \vv_{i, T}  | = O(1), {\rm and}~| \vv_{i, T+1}  | \ll O(1),~{\rm for~all}~ i = 0, \cdots, N-1,
  \end{eqnarray*}
  where $\vv_{i, T}$ is the $T$th largest elements in the $i$th column
  of $\Vm$ and $T \ll N$ is some constant.
 \end{defn}
 A Type-2 graph requires that each column vector of $\Vm$ be
 approximately sparse. When we take a few columns from $\Vm$ to form a
 submatrix, the energy in the submatrix concentrates in a few rows of
 the submatrix. This is equivalent to the sampling scores being
 approximately sparse. Simulations show that star graphs, scale-free
 graphs and small-world graphs approximately fall into this type of
 graphs.

 Based on Theorem~\ref{thm:mse}, we can specify the parameters for a
 Type-2 graph and show the following results.
 \begin{myCorollary}
   \label{cor:type2_upp}
   Let $\Adj \in \R^{N \times N}$ be a Type-2 graph, for the class
   $\BLT_{\Adj }(K, \beta, \mu)$.
   \begin{itemize}
   \item Let $\x^*$ be the sampled projection estimator with the
     bandwidth $\kappa \geq K$ and uniform sampling; we have
     \begin{eqnarray*}
       \mathbb{E}  \left( \frac{ \left\| \x^* - \x \right\|_2^2 } { \left\| \x \right\|_2^2  } \right) \ \leq \  C \, m ^{-\frac{2\beta}{2\beta+\gamma}},
     \end{eqnarray*}
     where $C$ is a positive constant, and the rate is achieved when $\kappa$ is of the order of $m^{1/(2\beta+\gamma)}$, $\gamma =
     \log(N)/\log(\kappa) \geq 1$;

   \item Let $\x^*$ be the sampled projection estimator with the
     bandwidth $\kappa \geq K$ and optimal sampling score based
     sampling; we have
     \begin{eqnarray*}
       \mathbb{E} \left(  \frac{ \left\| \x^* - \x \right\|_2^2 } { \left\| \x \right\|_2^2  }  \right) \ \leq \  C  \, m^{-\frac{2\beta}{2\beta+1}},
     \end{eqnarray*}
     where  $C$ is a positive constant, the rate is achieved when $\kappa$ is of the order of $m^{1/(2\beta+1)}$ and upper
     bounded by $N$.
   \end{itemize}
\end{myCorollary}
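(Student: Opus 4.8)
The plan is to obtain both bounds from the exact/upper--bound MSE formula of Theorem~\ref{thm:mse}, specialized to the two sampling scores via Corollaries~\ref{cor:random} and~\ref{cor:optimal}, and then to optimize the bias--variance trade--off over the working bandwidth $\kappa$. Theorem~\ref{thm:mse} bounds $\mathbb{E}\|\x^*-\x\|_2^2$ by the bias term $\frac{\mu}{1+\kappa^{2\beta}}\|\x\|_2^2$ (valid whenever $\kappa\geq K$ and $\x\in\BLT_{\Adj}(K,\beta,\mu)$, since then $\sum_{k=\kappa}^{N-1}(1+k^{2\beta})\widehat{x}_k^2\leq\mu\|\x\|_2^2$) plus the variance term ${\rm Tr}(\Um_{(\kappa)}\W_C\Vm_{(\kappa)})$. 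Because $\Um=\Vm^T$ for an undirected graph, this trace equals $\frac1m\sum_i\frac{(x_i^2+\sigma^2)\,\|\vv_{i,(\kappa)}\|_2^2}{\pi_i}$, where $\|\vv_{i,(\kappa)}\|_2^2=\sum_{k=1}^{\kappa}\Um_{k,i}^2$ is the leverage score of node $i$ with respect to $\Vm_{(\kappa)}$. After dividing through by $\|\x\|_2^2$, each statement reduces to (i) bounding the variance term by $\mathrm{const}\cdot m^{-1}\kappa^{\alpha}\|\x\|_2^2$ for the relevant exponent ($\alpha=\gamma$ under uniform sampling, $\alpha=1$ under the optimal score), and (ii) choosing $\kappa$ to equate bias and variance.

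For part~(1), uniform sampling takes $\pi_i=1/N$, so the variance term becomes $\frac{N}{m}\sum_i(x_i^2+\sigma^2)\|\vv_{i,(\kappa)}\|_2^2$, exactly as in Corollary~\ref{cor:random}. Using only that the rows of the orthonormal matrix $\Vm$ have unit norm, so $\|\vv_{i,(\kappa)}\|_2^2\leq 1$, together with $\sum_i x_i^2=\|\x\|_2^2$, the variance is at most $\frac{N}{m}\bigl(\|\x\|_2^2+N\sigma^2\bigr)$, which is $O(N\|\x\|_2^2/m)$ in the bounded--SNR regime $\|\x\|_2^2\gtrsim N\sigma^2$. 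Writing $N=\kappa^{\gamma}$ with $\gamma:=\log N/\log\kappa$, the normalized MSE is bounded by $\mu\kappa^{-2\beta}+\mathrm{const}\cdot\kappa^{\gamma}/m$; equating the two terms gives the self--consistent choice $\kappa\asymp m^{1/(2\beta+\gamma)}$ (well defined, and with $\kappa\leq N$, precisely in the regime $\gamma\geq 1$, i.e. for $m$ no larger than order $N^{2\beta+1}$) and the rate $m^{-2\beta/(2\beta+\gamma)}$. It is worth noting that the Type--2 hypothesis of Definition~\ref{df:irregular} is not actually needed for this upper bound --- the trivial leverage bound suffices --- but it is what makes the rate essentially unimprovable for uniform sampling, in contrast to the faster $m^{-2\beta/(2\beta+1)}$ available on Type--1 graphs, where every leverage score is only $O(\kappa/N)$.

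For part~(2), the optimal sampling score of Corollary~\ref{cor:optimal} makes the variance term equal to $\frac1m\bigl(\sum_i\|\vv_{i,(\kappa)}\|_2\sqrt{x_i^2+\sigma^2}\bigr)^2$. Applying Cauchy--Schwarz and using that the leverage scores of $\Vm_{(\kappa)}$ sum to $\kappa$, $\sum_i\|\vv_{i,(\kappa)}\|_2\sqrt{x_i^2+\sigma^2}\leq\bigl(\sum_i\|\vv_{i,(\kappa)}\|_2^2\bigr)^{1/2}\bigl(\sum_i(x_i^2+\sigma^2)\bigr)^{1/2}=\kappa^{1/2}\bigl(\|\x\|_2^2+N\sigma^2\bigr)^{1/2}$. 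Hence the variance is $O(\kappa\|\x\|_2^2/m)$ in the same bounded--SNR regime, the normalized MSE is at most $\mu\kappa^{-2\beta}+\mathrm{const}\cdot\kappa/m$, and equating the terms gives $\kappa\asymp m^{1/(2\beta+1)}$ (clipped at $N$, beyond which the bias vanishes and the rate is $m^{-1}$) and the claimed rate $m^{-2\beta/(2\beta+1)}$.

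The principal obstacle is bookkeeping rather than any hard inequality: one must keep track of the fact that the variance bounds carry a $\|\x\|_2^2+N\sigma^2$ rather than $\|\x\|_2^2$, so the clean rates require the bounded--signal--to--noise normalization (the same implicit normalization underlying the Type--1 corollaries), and one must check that the self--referential choice $\kappa\asymp m^{1/(2\beta+\gamma)}$ with $\gamma=\log N/\log\kappa$ indeed admits a solution with $K\leq\kappa\leq N$, which is exactly the side condition $\gamma\geq1$. A secondary point to spell out is why part~(2) uses no structural assumption on the graph at all --- Cauchy--Schwarz alone delivers the Type--1 rate --- so that experimentally designed sampling attains the optimal rate on every graph, and it is only uniform sampling whose rate degrades to $m^{-2\beta/(2\beta+\gamma)}$ on irregular (Type--2) graphs, which is precisely the contrast the two corollaries are meant to exhibit.
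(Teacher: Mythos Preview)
Your proposal is correct and reaches the same rates, but the route differs from the paper's in a way worth noting. Both arguments start from Theorem~\ref{thm:mse} specialized via Corollaries~\ref{cor:random} and~\ref{cor:optimal} and then balance bias against variance. The paper, however, bounds the variance by first pulling out $\max_i x_i^2$ and then invoking the Type--2 structure twice: it uses the signal--concentration heuristic $\kappa\,\max_i x_i^2 \asymp \|\x\|_2^2$ (energy of an approximately bandlimited signal on a Type--2 graph sits on $O(\kappa)$ nodes) in both parts, and in the optimal--score part it additionally uses leverage sparsity to get $\|\Um_{(\kappa)}\|_{2,1}^2 = O(\kappa^2)$. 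Your argument replaces these structural steps by universal inequalities: the trivial bound $\|\vv_{i,(\kappa)}\|_2^2\le 1$ in part~(1), and Cauchy--Schwarz together with $\sum_i\|\vv_{i,(\kappa)}\|_2^2=\kappa$ in part~(2). The upshot is that your upper bounds hold on \emph{any} graph, not just Type--2, which is exactly the point you flag; the price is a slightly stronger SNR normalization ($\|\x\|_2^2\gtrsim N\sigma^2$ in your version, versus effectively $\|\x\|_2^2\gtrsim \kappa\sigma^2$ in the paper's, since its noise term is $\kappa\sigma^2$ rather than $N\sigma^2$). Your reading that the Type--2 hypothesis is really doing work only in the matching lower bounds (Corollary~\ref{cor:type2_opt}) is accurate, and your Cauchy--Schwarz step in part~(2) is arguably cleaner than the paper's $\|\Um_{(\kappa)}\|_{2,1}^2\le \kappa^2$ estimate.
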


Based on Theorem~\ref{thm:lower} and Corollary~\ref{cor:type2_upp}, we
conclude the following.
\begin{myCorollary}
  \label{cor:type2_opt}
  Let $\Adj \in \R^{N \times N}$ be a Type-2 graph with parameter
  $K_0$, for the class $\BLT_{\Adj }(K, \beta, \mu)$.
  \begin{itemize}
  \item Under uniform sampling,
    \begin{eqnarray*}
      c \, m^{-\frac{2\beta}{2\beta+1}} \ \leq \ ~~~~~~~~~~~~~~~~~~~~~~~~~~~~~~~~~~~~~~~~
      \\
      \inf_{(\x^*, \M) \in \Theta_{\rm u} } \sup_{\x \in \BLT_{\Adj}(K, \beta, \mu)} \mathbb{E}_{\x, \M} \left( \frac{ \left\| \x^* - \x \right\|_2^2 } {  \left\| \x \right\|_2^2  } \right) 
      \\
      \ \leq \ C \, m^{-\frac{2\beta}{2\beta+\gamma}},
    \end{eqnarray*}
    where constant $C > c > 0$, and the rate is achieved when $\kappa$
    is of the order of $m^{1/(2\beta+\gamma)}$ and $\gamma =
    \log(N)/\log(\kappa) \geq 1$;

  \item Under experimentally designed sampling, there exists a $\gamma
    > 1$,
    \begin{eqnarray*}
      c  \, m  ^{-\frac{2\beta}{2\beta+1}} \ \leq \ 
      ~~~~~~~~~~~~~~~~~~~~~~~~~~~~~~~~~~~~~~~~
      \\
      \inf_{(\x^*, \M) \in \Theta_{\rm e} } \sup_{\x \in \BLT_{\Adj} (K, \beta, \mu)} \mathbb{E}_{\x, \M} \left(   \frac{ \left\| \x^* - \x \right\|_2^2 } {  \left\| \x \right\|_2^2  }   \right)
      \\
      \ \leq \  C  \ m  ^{-\frac{2\beta}{2\beta+1}},
    \end{eqnarray*}
    where $C$ is a positive constant, the rate is achieved when $\kappa$ is  of
    the order of $m^{1/(2\beta+1)}$ and upper bounded by $N$.
  \end{itemize}
\end{myCorollary}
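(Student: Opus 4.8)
The statement packages, for each of the two sampling regimes, a lower bound on the minimax risk together with an upper bound. The plan is to obtain the lower bounds by specialising Theorem~\ref{thm:lower} and the upper bounds by invoking Corollary~\ref{cor:type2_upp}; essentially no new estimate is needed, the content being entirely in those two prior results plus some bookkeeping.

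\textbf{Upper bounds.} The sampled projection estimator of Algorithm~\ref{alg:subsampleProj} run with uniform sampling is a particular element of $\Theta_{\rm u}$, and run with (a beforehand-computable approximation of) the optimal sampling score of Corollary~\ref{cor:optimal} is a particular element of $\Theta_{\rm e}$. Since an infimum over recovery strategies is at most the worst-case risk (over $\x\in\BLT_{\Adj}(K,\beta,\mu)$) of any one strategy, Corollary~\ref{cor:type2_upp} gives at once $\inf_{\Theta_{\rm u}}\sup_{\x}\mathbb{E}_{\x,\M}(\cdot)\le C\,m^{-2\beta/(2\beta+\gamma)}$ and $\inf_{\Theta_{\rm e}}\sup_{\x}\mathbb{E}_{\x,\M}(\cdot)\le C\,m^{-2\beta/(2\beta+1)}$, with $\kappa$ taken of the indicated order and clamped at $N$. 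These are the right-hand inequalities of the two bullets.

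\textbf{Lower bounds.} Start from Theorem~\ref{thm:lower}. The submatrix $\Vm_{(2,\kappa_0)}$ is made of $\kappa_0$ columns of the orthonormal basis $\Vm$, so $\|\Vm_{(2,\kappa_0)}\|_F^2=\kappa_0$ exactly, whence $\|\Vm_{(2,\kappa_0)}\|_{\infty,2}^2\le\|\Vm_{(2,\kappa_0)}\|_F^2=\kappa_0$; note that neither identity uses the Type-2 hypothesis, so the resulting lower bounds are in fact graph-agnostic. Substituting into parts (1) and (2) of Theorem~\ref{thm:lower}, and taking in the supremum a signal $\x$ whose signal-to-noise ratio $\|\x\|_2^2/\sigma^2$ is of constant order (admissible, since $\BLT_{\Adj}(K,\beta,\mu)$ contains such signals), the uniform lower bound becomes $\max_{K\le\kappa_0\le N}c_1\mu\,\kappa_0^{-2\beta}\bigl(1-(c/N)\,m\,\kappa_0^{-(2\beta+1)}\bigr)$ and the experimentally designed one becomes $\max_{K\le\kappa_0\le N}c_1\mu\,\kappa_0^{-2\beta}\bigl(1-c\,m\,\kappa_0^{-(2\beta+1)}\bigr)$. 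In each maximum, pick $\kappa_0$ of order $m^{1/(2\beta+1)}$ so that the parenthesised factor is at least $1/2$; this $\kappa_0$ lies in $[K,N]$ for $m\ge m_0$, provided $m$ is not so large that $m\gg N$ (the regime where, as noted after Corollary~\ref{cor:type1_upp}, the rate is instead the faster $m^{-1}$). Back-substituting, each bound is of order $\mu\,\kappa_0^{-2\beta}\gtrsim m^{-2\beta/(2\beta+1)}$ — the uniform case even picking up a factor $N^{2\beta/(2\beta+1)}\ge 1$ which, $N$ being fixed, is absorbed into the constant. This gives the left-hand inequalities, and comparison with the upper bounds shows that for experimentally designed sampling the two rates coincide, so $m^{-2\beta/(2\beta+1)}$ is there the optimal rate.

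\textbf{Main obstacle.} With Theorem~\ref{thm:lower} and Corollary~\ref{cor:type2_upp} in hand no step is deep; the actual work is careful bookkeeping — tracking how $c$ and $C$ absorb the dependence on $\mu$, $\sigma$, $\|\x\|_2$ and $N$, and verifying that the optimising bandwidth stays in $[K,N]$, which is exactly what fixes $m_0$ and the upper end of the range of $m$ for which the bounds are meaningful. The one point that deserves to be made explicitly is the asymmetry between the two conclusions: the lower-bound argument yields the same $m^{-2\beta/(2\beta+1)}$ for uniform and for experimentally designed sampling, whereas Corollary~\ref{cor:type2_upp} shows that under uniform sampling the sampled projection estimator cannot do better than the strictly slower $m^{-2\beta/(2\beta+\gamma)}$ on a Type-2 graph while under experimentally designed sampling it does attain $m^{-2\beta/(2\beta+1)}$; it is this residual gap, together with the matched pair of bounds in the experimentally designed case, that supports the paper's claim that experimentally designed sampling fundamentally outperforms uniform sampling on Type-2 graphs.
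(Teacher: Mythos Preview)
Your argument is correct and shares the paper's scaffolding: upper bounds via Corollary~\ref{cor:type2_upp}, lower bounds via Theorem~\ref{thm:lower}. The genuine difference is in how you specialise Theorem~\ref{thm:lower}. You use only the graph-agnostic facts $\|\Vm_{(2,\kappa_0)}\|_F^2=\kappa_0$ and $\|\Vm_{(2,\kappa_0)}\|_{\infty,2}^2\le\|\Vm_{(2,\kappa_0)}\|_F^2=\kappa_0$, fix the norm of the Fano-construction signals so that $\|\x\|_2^2/\sigma^2=O(1)$, and pick $\kappa_0\asymp m^{1/(2\beta+1)}$; this yields the stated $c\,m^{-2\beta/(2\beta+1)}$ in both regimes with no appeal to Definition~\ref{df:irregular}. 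The paper, by contrast, invokes the Type-2 hypothesis twice in the lower-bound step: once via Definition~\ref{df:irregular} to get $\|\Vm_{(2,\kappa_0)}\|_{\infty,2}^2=O(1)$, and once via the structural consequence $\|\x\|_2^2\asymp\kappa_0\max_i x_i^2$ to handle the signal norm inside the parenthesis. For uniform sampling this leads the paper (Appendix~D) to choose $\kappa_0\asymp m^{1/(2\beta+\gamma)}$ and reach the sharper conclusion $\gtrsim m^{-2\beta/(2\beta+\gamma)}$, which in fact matches the upper bound and pins down the uniform-sampling rate exactly---more than the corollary asserts. Your route is cleaner and makes explicit that the $m^{-2\beta/(2\beta+1)}$ lower bound holds on any graph; the paper's route costs the Type-2 assumption but buys the tight uniform-sampling lower bound.
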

We merge the proofs of Corollaries~\ref{cor:type2_upp}
and~\ref{cor:type2_opt} in Appendix D in the supporting document.

We see that under both uniform and experimentally
designed sampling, the lower and upper bounds have
the same rate of convergence, which achieves the optimum. However,
experimentally designed sampling has a larger optimal rate of
convergence, and we can thus conclude
that experimentally designed sampling exhibits asymptotically better
performance than uniform sampling for a Type-2 graph. The sampled
projection estimator attains the optimal rate of convergence.

\section{Experimental Results}
\label{sec:experiments}
In this section, we validate the proposed recovery strategy on five
specific graphs: a ring graph, an Erd\H{o}s-R\'enyi graph, a random
geometric graph, a small-world graph and a power-law graph. Based on
the graph structure, we roughly label each as a Type-1 or Type-2, and
then, for each, we compare the empirical performance of the proposed
recovery strategy based on uniform and experimentally designed
sampling. For experimentally designed sampling, we use both the
leverage score of $\Vm_{(K)}$ (approximately optimal in the noiseless
case) and the square root of the leverage score of $\Vm_{(K)}$
(approximately optimal in the noisy case). In the experiments, the
graph Fourier basis is the eigenvector matrix of the
adjacency matrix. We find similar results when the graph Fourier
basis is the eigenvector matrix of the graph Laplacian
matrix.

\subsection{Simulated Graphs}
\label{sec:simulatedgraphs}
We now introduce the five graphs, each with $10,000$ nodes, used
in our experiments.

\mypar{Ring graph with $k$-nearest neighbors} A ring graph is a graph
where each node connects to its $k$-nearest neighbors. We use a ring
graph where each node connects to exactly four nearest neighbors. The
eigenvectors are similar to the discrete cosine transform and the
energy evenly spreads to each element in
$\Vm$~\cite{SandryhailaKP:11a}; this is thus a Type-1
  graph. Figure~\ref{fig:Ring} illustrates some properties of
the ring graph:    Figure~\ref{fig:Ring}(a) shows the graph plot (for easier
visualization, only $20$ nodes are shown and with enough zoom, one can clearly
  see that each node connects to exactly four nearest
  neighbors;  Figure~\ref{fig:Ring}(b)
shows the histogram of the degrees that
concentrate on 4, as expected;  and
Figure~\ref{fig:Ring}(c) shows the histogram
of the leverage scores of $\Vm_{(20)}$, which are the optimal sampling
scores when the SNR is large. Note that we set the bandwidth $K = 20$
to show the low-frequency band of the graph Fourier transform
matrix. We see that the leverage scores concentrate
around $10^{-4}$; this means that each node has the same probability
to be chosen and uniform sampling is approximately the optimal
sampling strategy.

\begin{figure*}[htb]
  \begin{center}
    \begin{tabular}{ccc}
      \includegraphics[width=0.35\columnwidth]{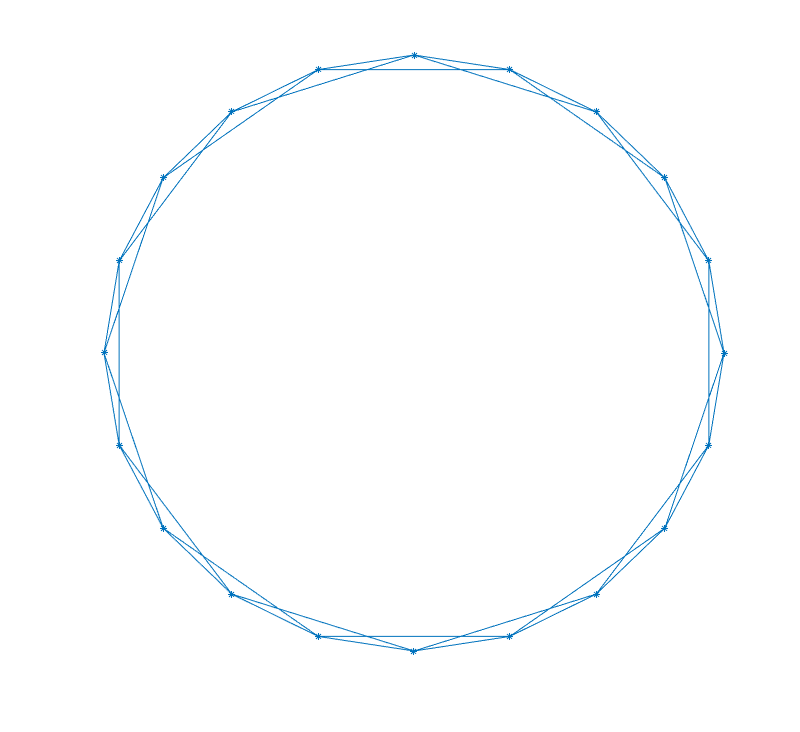}  & \includegraphics[width=0.35\columnwidth]{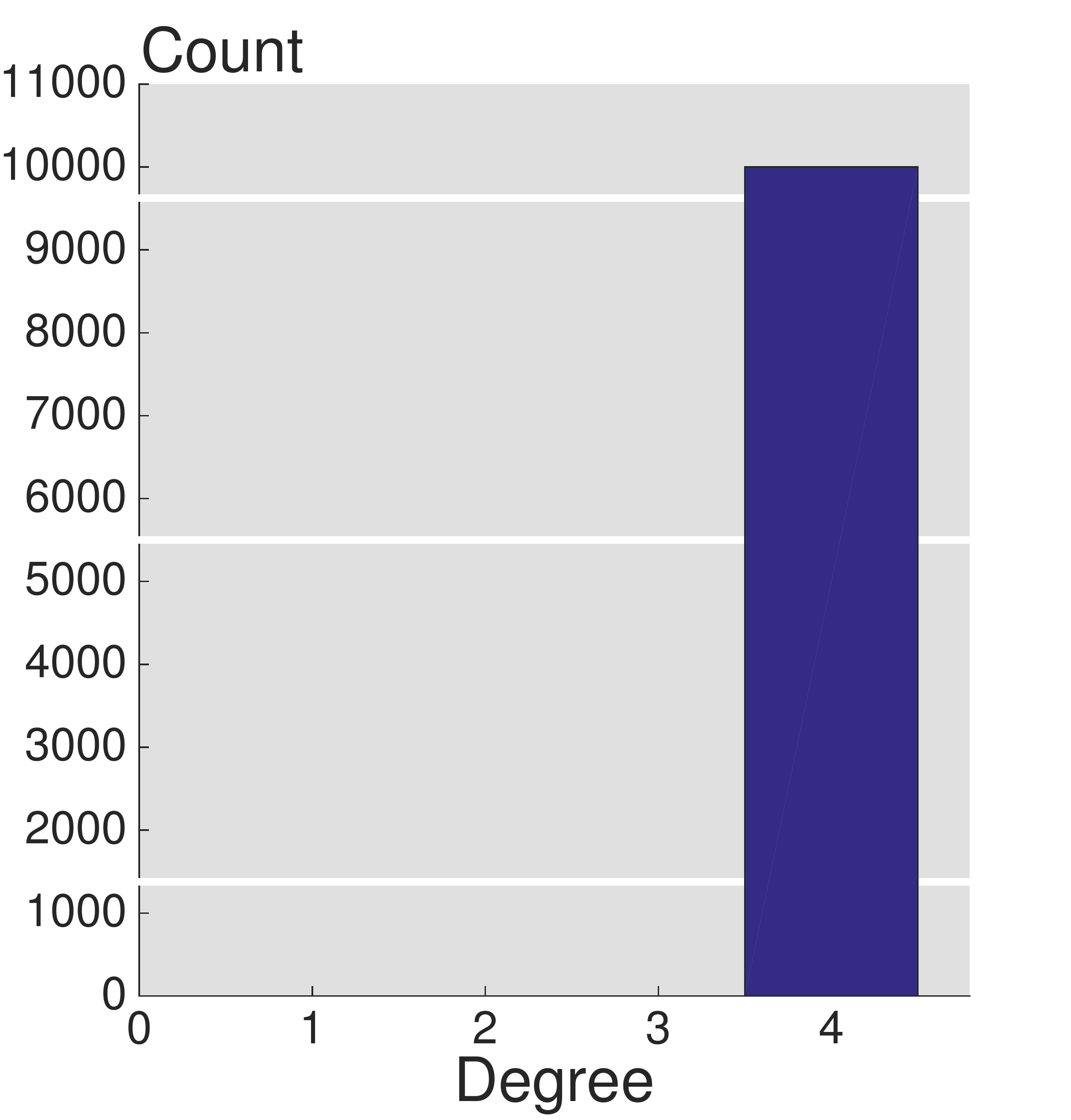}  &
      \includegraphics[width=0.35\columnwidth]{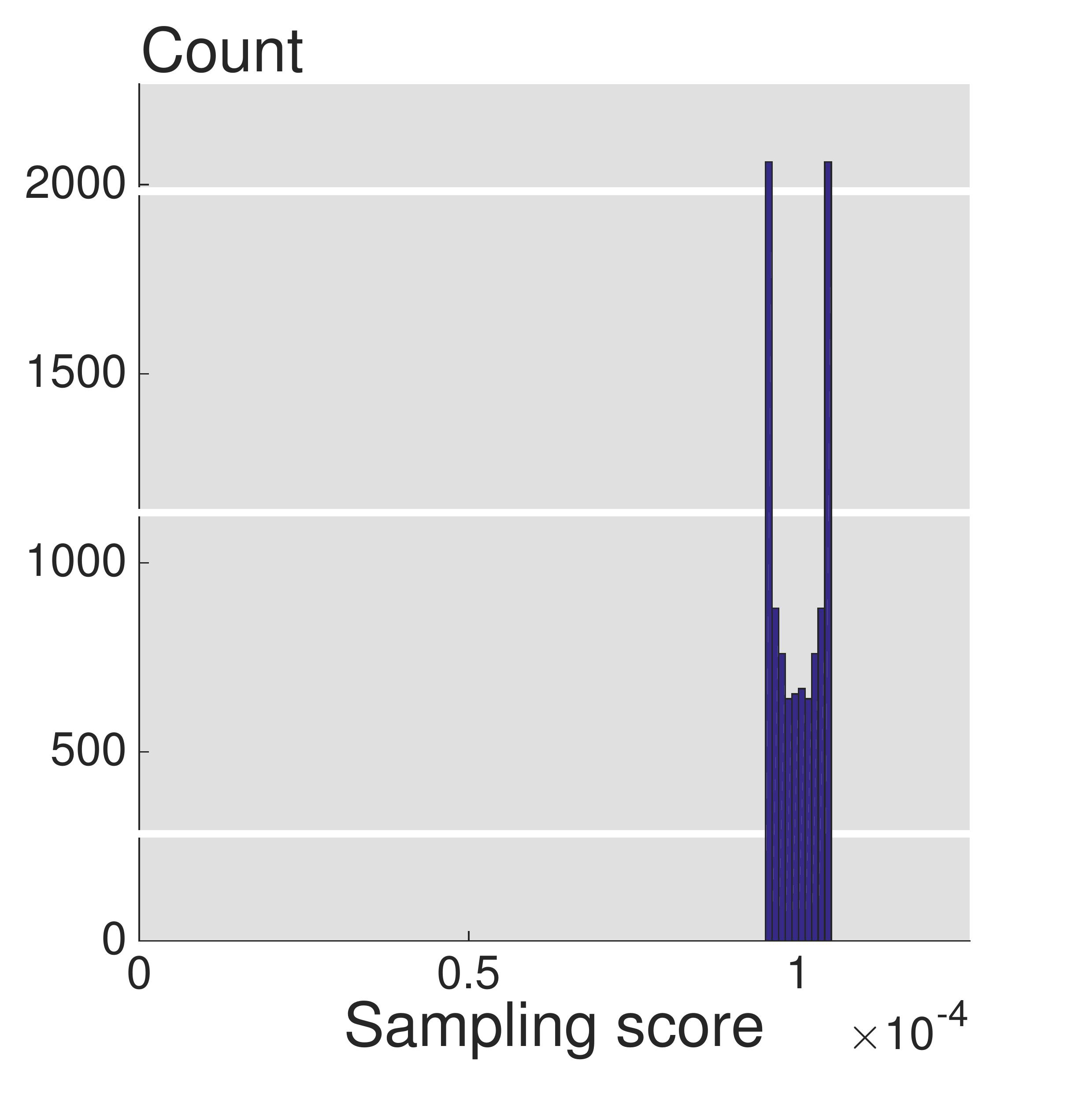} 
      \\
      {\small (a) Graph plot.} & {\small (b) Histogram of the degrees.} &  {\small (c) Histogram of the leverage scores.}
      \\  
    \end{tabular}
  \end{center}
  \caption{\label{fig:Ring} Properties of a ring graph. Plot
      (c) shows the histogram of the leverage scores, which are optimal sampling scores when the SNR is large.}
\end{figure*}

\mypar{Erd\H{o}s-R\'enyi graph} An Erd\H{o}s-R\'enyi graph is a random
graph where each pair of nodes is connected with some
probability~\cite{Newman:10}.  We use a graph where each pair of nodes
is connected with probability of $0.01$, that is, each node has 100
neighbors in expectation. Figure~\ref{fig:ER}
  illustrates some properties of the Erd\H{o}s-R\'enyi
graph.   Figure~\ref{fig:ER}(a) shows the graph plot (for easier
  visualization, only 100 nodes are shown); Figure~\ref{fig:ER}(b)
  shows the histogram of the degrees that concentrate around 100, as
  expected; and Figure~\ref{fig:ER}(c) shows the histogram of the
  leverage scores of $\Vm_{(20)}$, which are the optimal sampling scores
  when the SNR is large.  We see that the
leverage scores  concentrate around $10^{-4}$;
  this means that each node has the same probability to be chosen and
  uniform sampling is approximately the optimal sampling strategy,
  similarly to the ring graph.  Based on the above, an
  Erd\H{o}s-R\'enyi graph is approximately a Type-1 graph.

\begin{figure*}[htb]
  \begin{center}
    \begin{tabular}{ccc}
      \includegraphics[width=0.35\columnwidth]{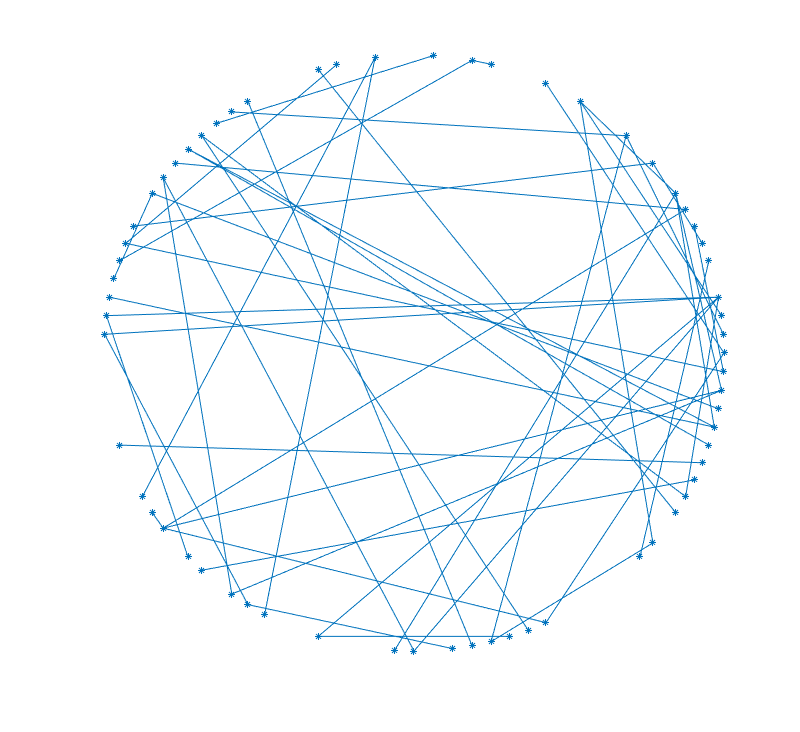}  & \includegraphics[width=0.35\columnwidth]{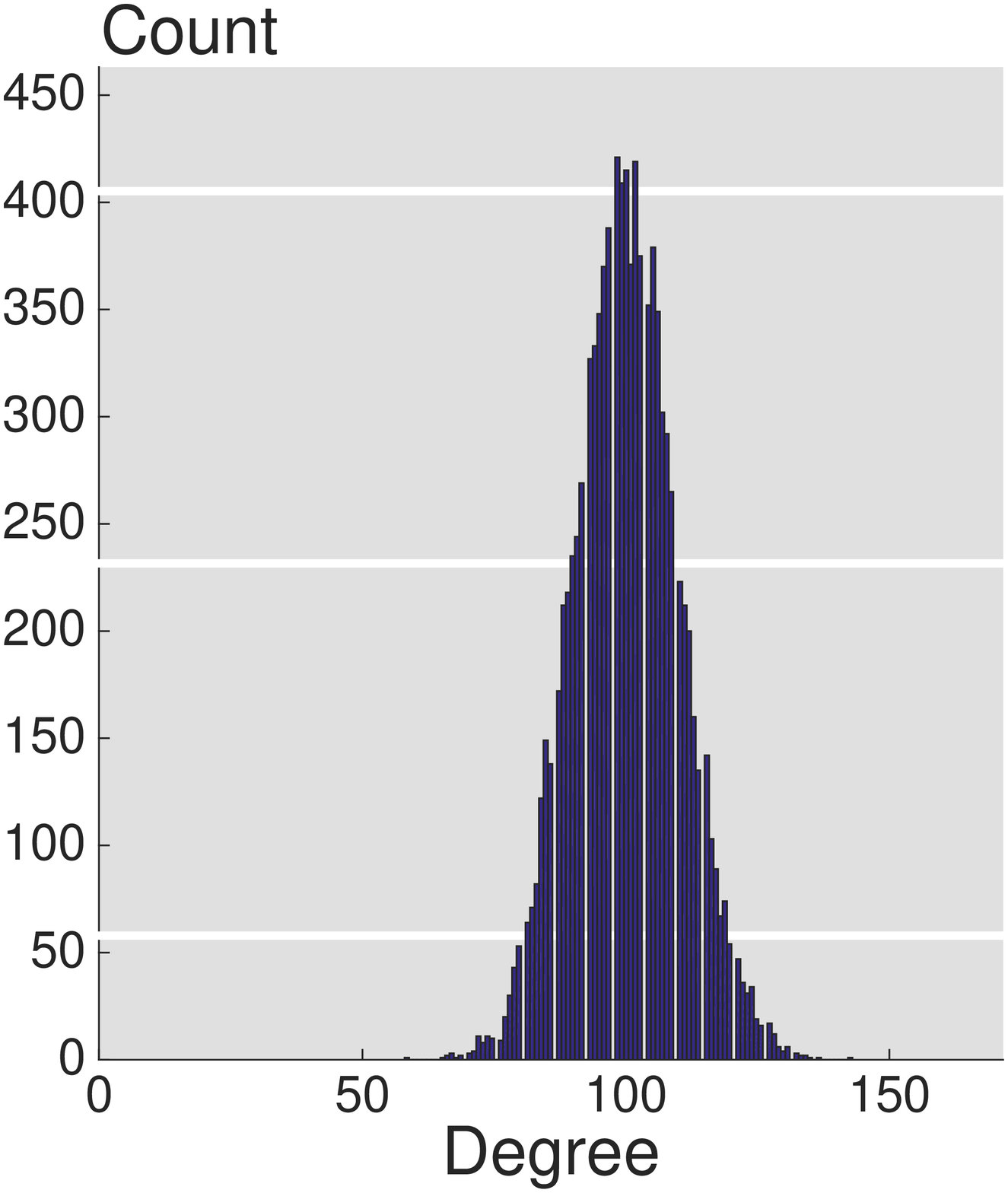}  &
      \includegraphics[width=0.35\columnwidth]{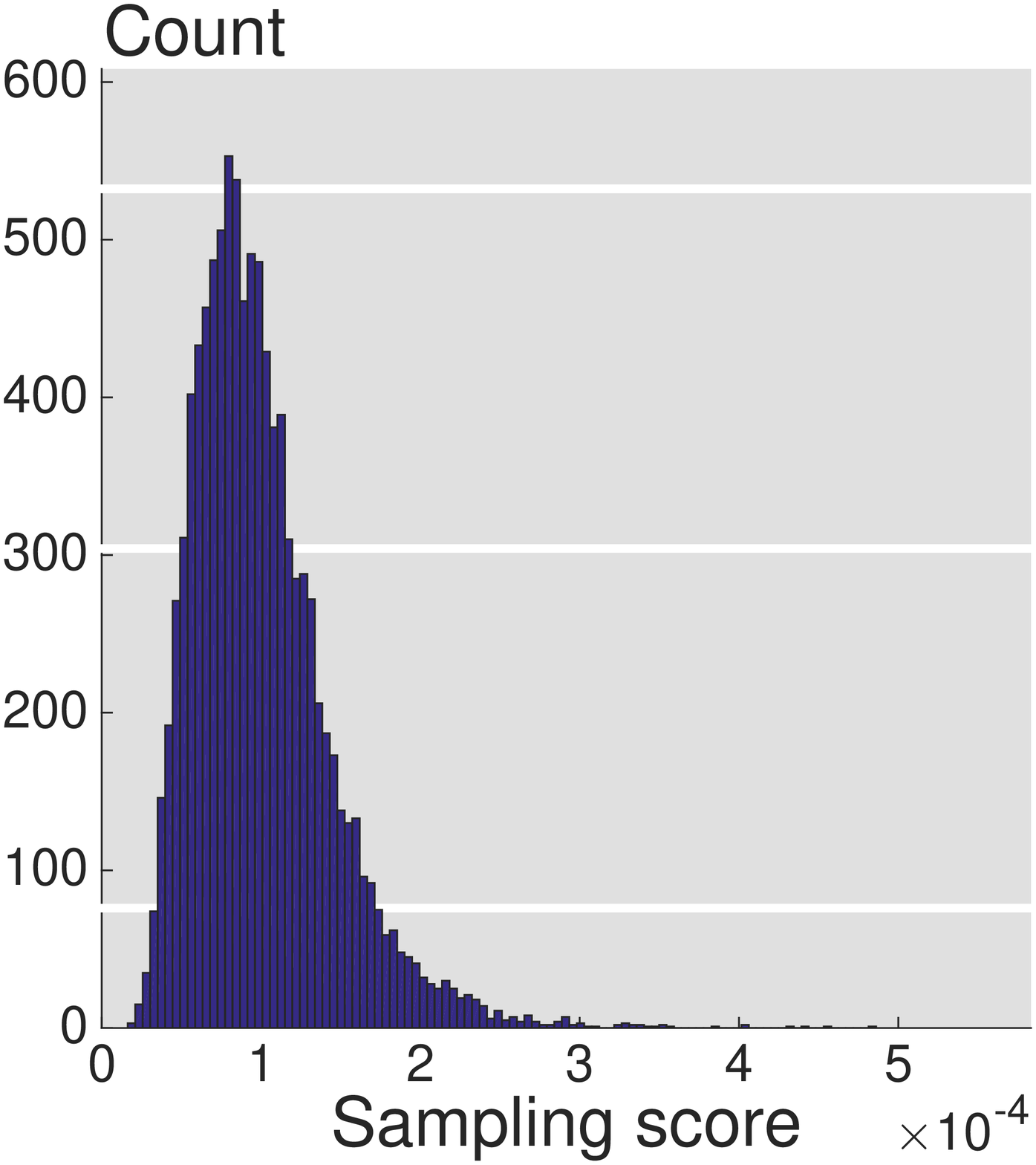} 
      \\
      {\small (a)  Graph plot.} & {\small (b) Histogram of the degrees.} &  {\small (c) Histogram of the leverage scores.}
      \\  
    \end{tabular}
  \end{center}
  \caption{\label{fig:ER} Properties of an
    Erd\H{o}s-R\'enyi graph. Plot (c) shows the histogram of the  leverage scores, which are the optimal sampling scores when the SNR  is large. }
\end{figure*}

\mypar{Random geometric graph} A random geometric graph is a spatial
graph where each of the nodes is assigned random coordinates in the
box $[0, 1]^d$ and an edge appears when the distance between two nodes
is in a given range~\cite{DallC:02}. We used a graph lying in the box
$[0, 1]^2$, and two nodes are connected when the Euclidean distance is
less than $0.03$.  Figure~\ref{fig:Geo} illustrates some
  properties of the random geometric graph: Figure~\ref{fig:Geo}(a)
  shows the graph plot; Figure~\ref{fig:Geo}(b) shows the histogram of the degrees
  that concentrate around 30, as expected (this matches
  previous assertion that given proper parameters, the degree
  distribution of a random geometric graph is the same as that of an
  Erd\H{o}s-R\'enyi graph~\cite{DallC:02}); Figure~\ref{fig:Geo}(c)
  shows the histogram of the leverage scores of $\Vm_{(20)}$, which
  are the optimal sampling scores when the SNR is large; and
  Figure~\ref{fig:Geo}(d) shows the histogram of the leverage score
  on a log-scale.   We see that the
histogram of the leverage scores is skewed; this means that a few
nodes are more important than other
nodes during sampling and have much higher probabilities to be
  chosen. In~\cite{DallC:02}, the authors show that the cluster
properties are different for a random geometric
graph and an Erd\H{o}s-R\'enyi graph. The proposed sampling scores
capture these cluster properties through  the decomposition of the graph adjacency~matrix. Based on the
  above, a random geometric graph is approximately a Type-2 graph.

\begin{figure*}[htb]
  \begin{center}
    \begin{tabular}{cccc}
      \includegraphics[width=0.35\columnwidth]{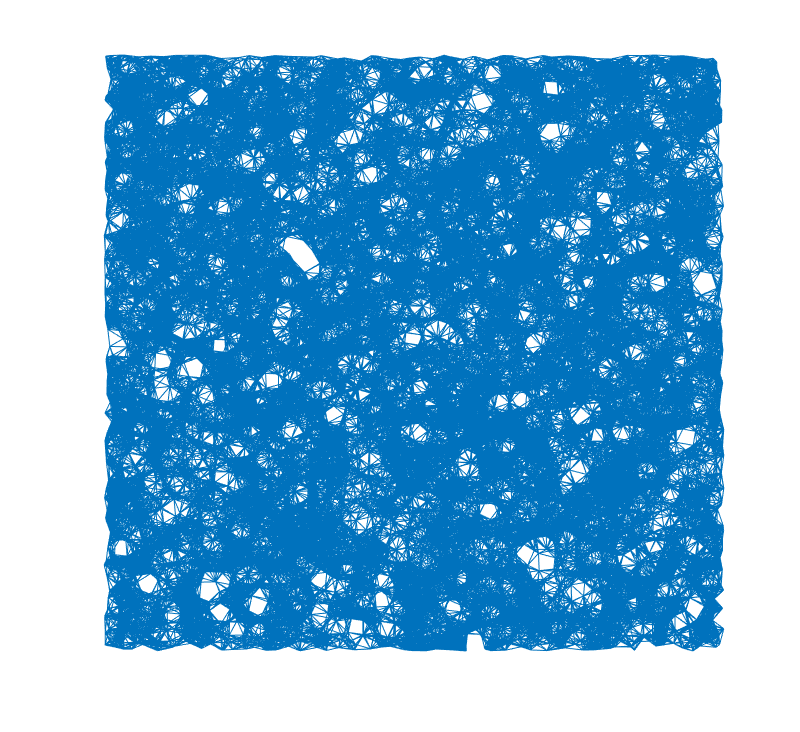}  & \includegraphics[width=0.35\columnwidth]{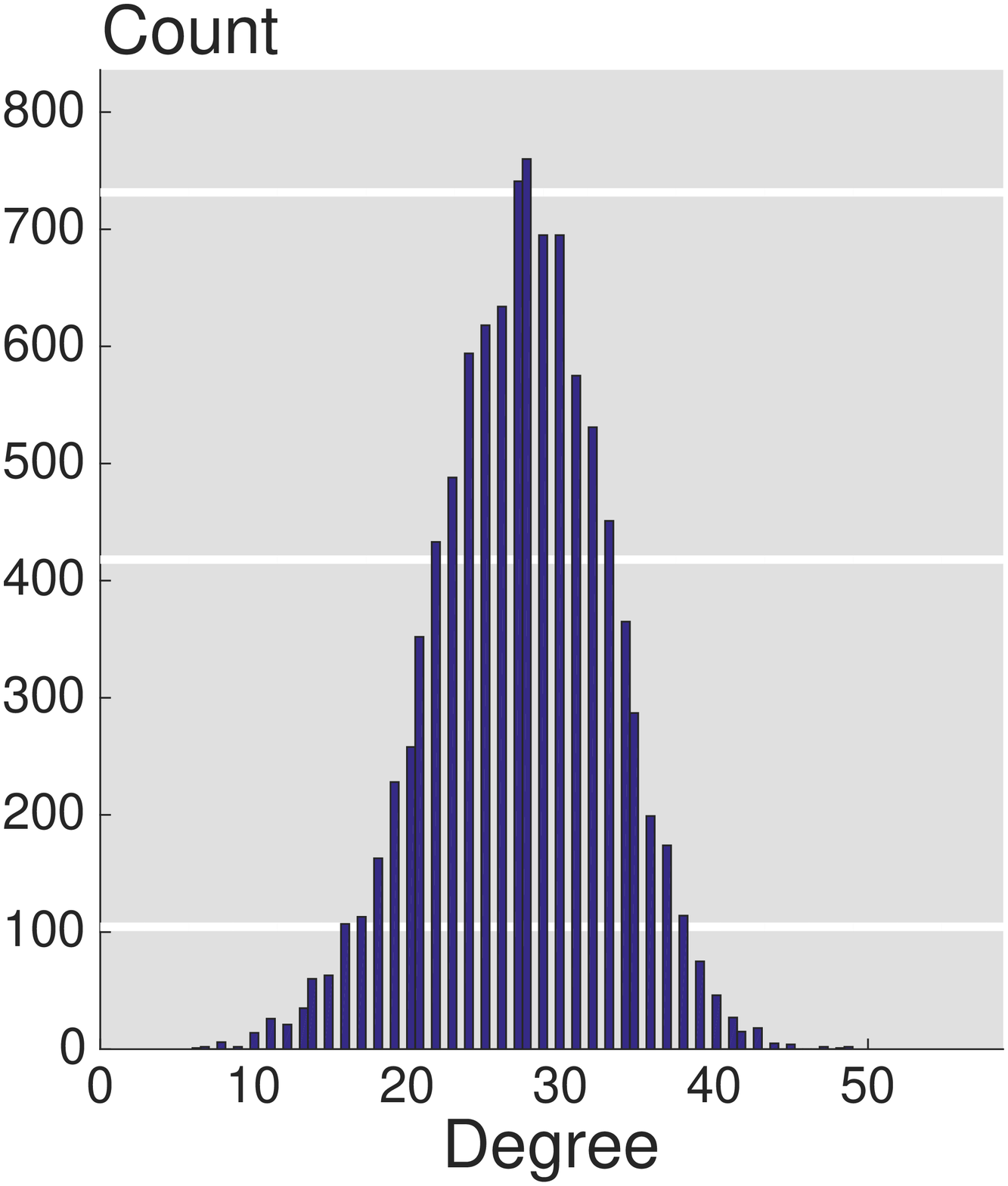}  &
      \includegraphics[width=0.35\columnwidth]{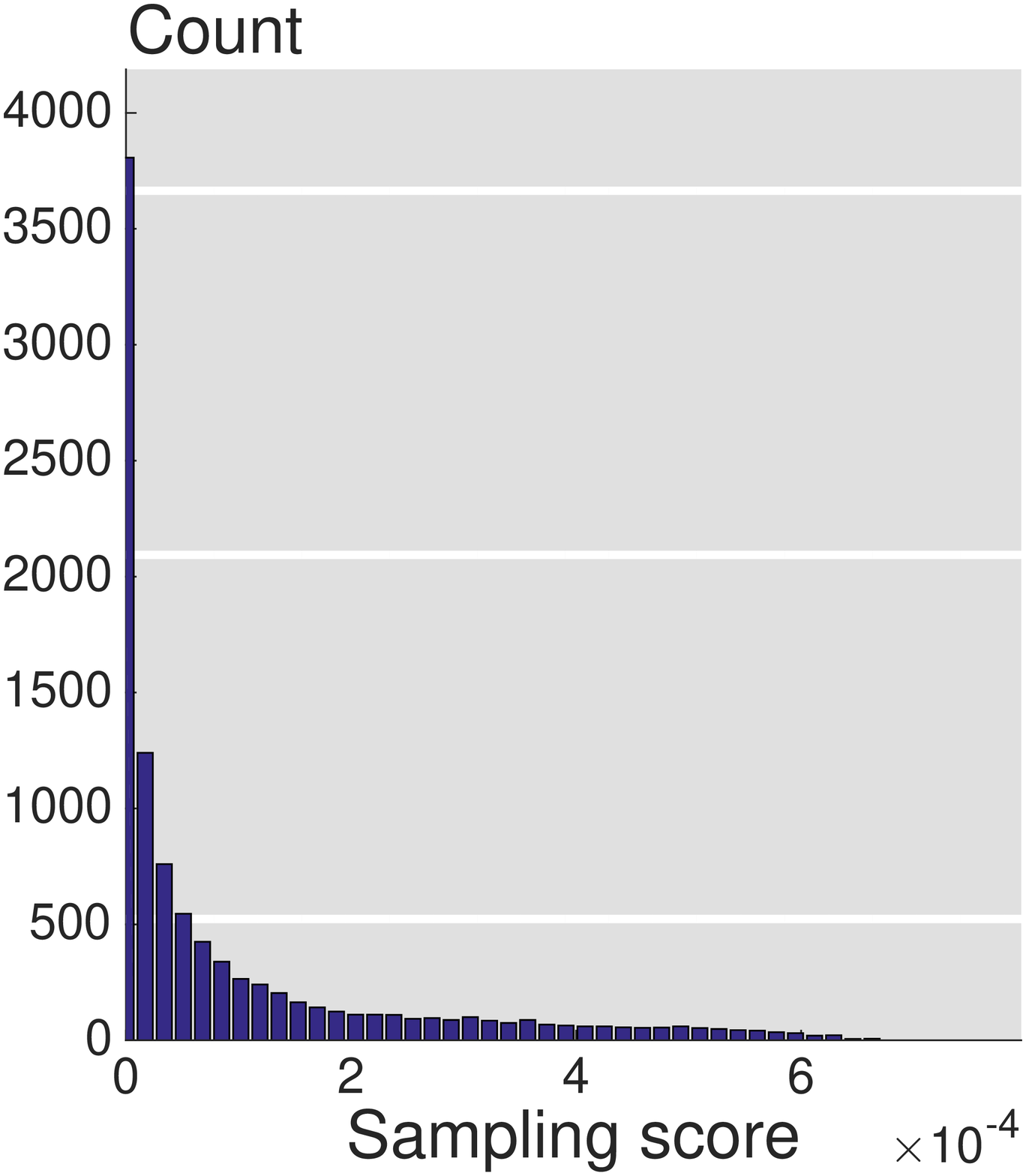}  &
      \includegraphics[width=0.35\columnwidth]{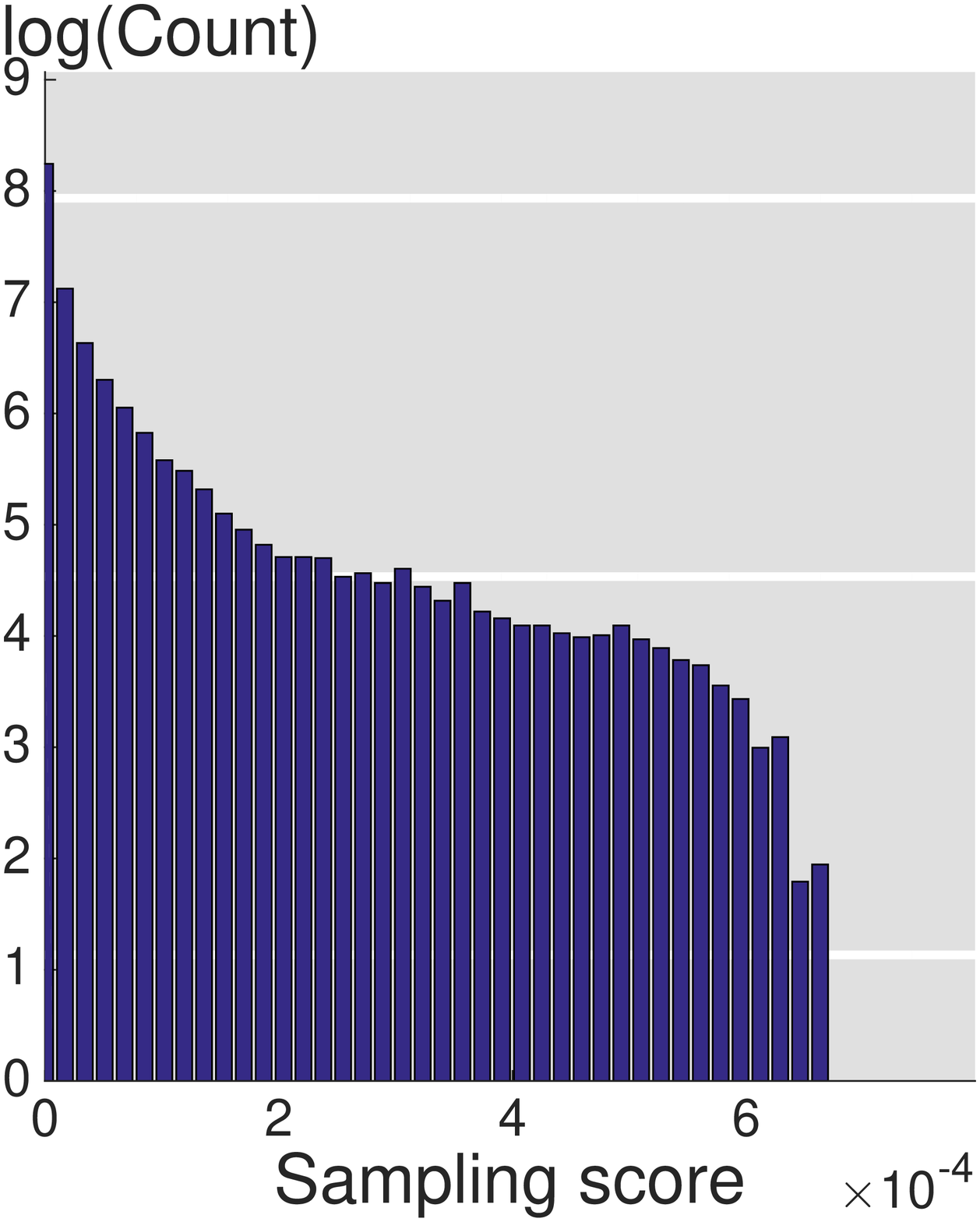}
      \\
      {\small (a)  Graph plot. } & {\small (b)  Histogram of the degrees.} &  {\small (c) Histogram of the leverage scores.} &   {\small (d) Histogram of the leverage scores }
      \\
      & & & {\small (log-scale). }
      \\  
    \end{tabular}
  \end{center}
  \caption{\label{fig:Geo} Properties of a random
    geometric graph. Plot (c) shows the histogram of the leverages
      score, which are the optimal sampling scores when the SNR is
      large. Plot (d) shows the
      log-scale histogram of the leverage scores, which confirms that the leverage scores approximately follow a power-law distribution. }
\end{figure*}

\mypar{Small-world graph} A small-world graph is a graph where most
nodes are not neighbors of one another, but can be reached from any
other node by a small number of hops
(steps)~\cite{Jackson:08,Newman:10}. It is well known that many
real-world graphs, including social networks, the connectivity of the
Internet, and gene networks show small-world graph characteristics. We
use a graph generated from the Watts–Strogatz model, where a ring
graph is first built, followed by the rewiring of the edges with probability $0.01\%$. 
  Figure~\ref{fig:SW} illustrates some properties of the small-world
  graph: Figure~\ref{fig:SW}(a) shows the graph plot;
  Figure~\ref{fig:SW}(b) shows the histogram of the degrees that
  concentrate around 2 (a few nodes have 6 neighbors because of
  rewiring); Figure~\ref{fig:SW}(c) shows the histogram of the
  leverage scores of $\Vm_{(20)}$, which are the optimal sampling scores
  when the SNR is large; and
  Figure~\ref{fig:SW}(d) shows the histogram of the leverage scores
  on a log-scale. We see that the
histogram of the leverage scores is skewed; this means that a
  few nodes are more important than others during sampling and have
  much higher probabilities to be chosen, similarly to the random
  geometric graph. Based on the above, a small-world graph is approximately a Type-2 graph.

\begin{figure*}[htb]
  \begin{center}
    \begin{tabular}{cccc}
      \includegraphics[width=0.35\columnwidth]{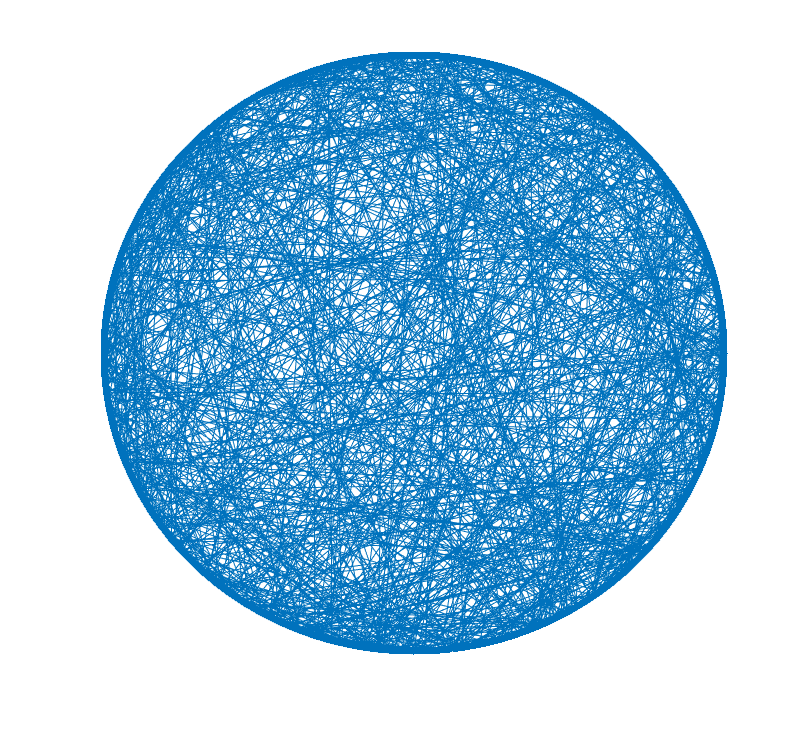}  & \includegraphics[width=0.35\columnwidth]{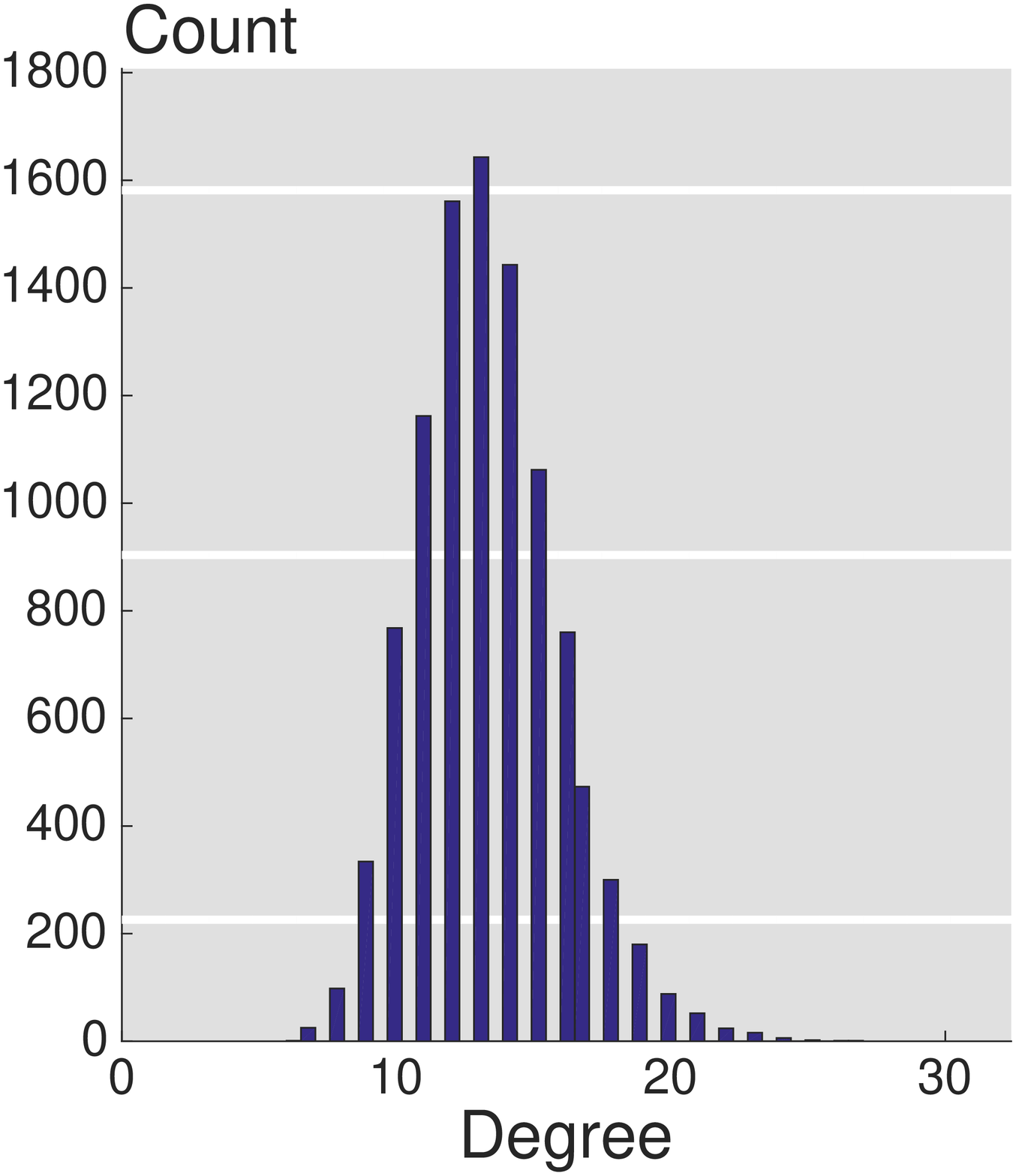}  &
      \includegraphics[width=0.35\columnwidth]{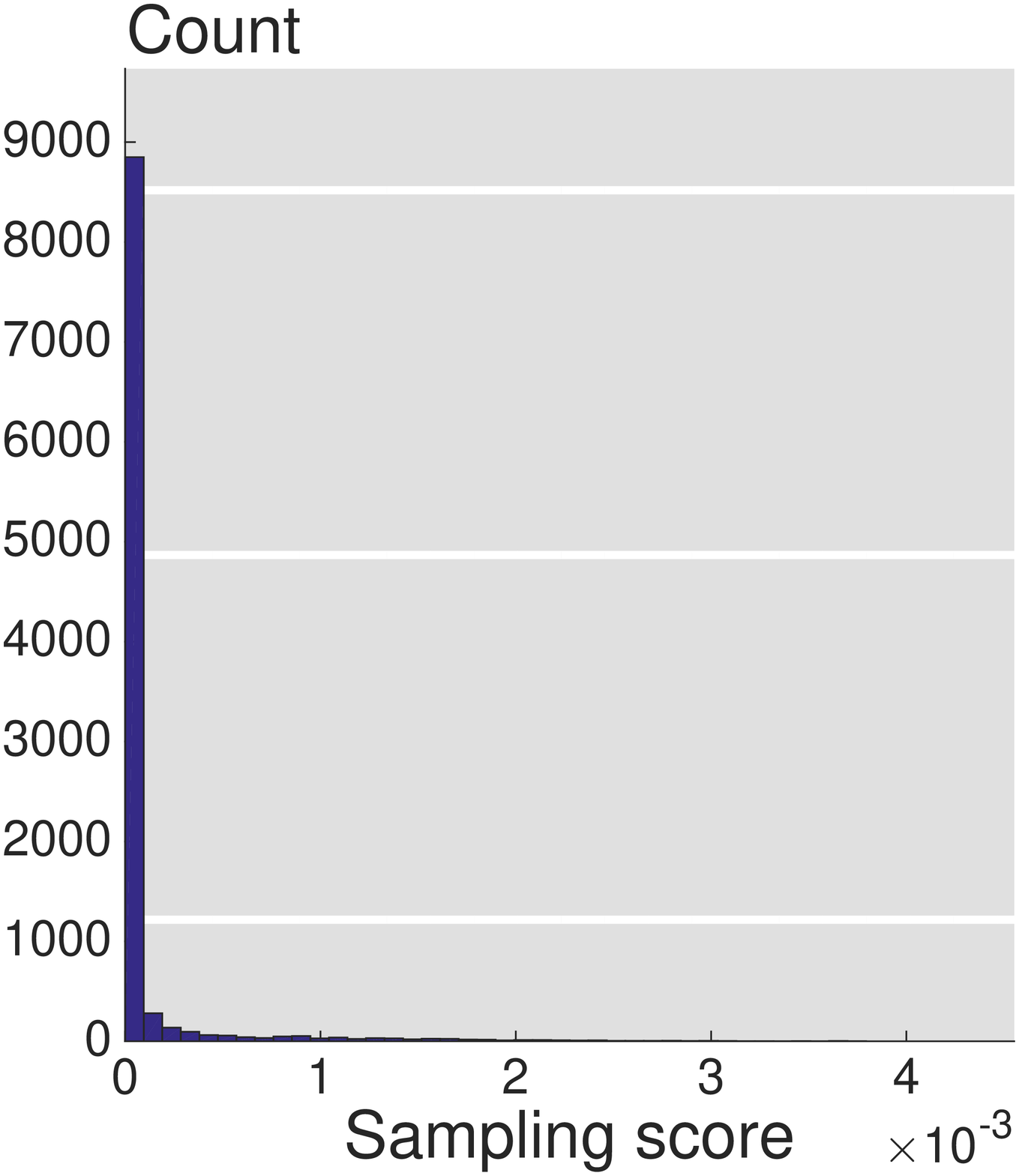} &
      \includegraphics[width=0.35\columnwidth]{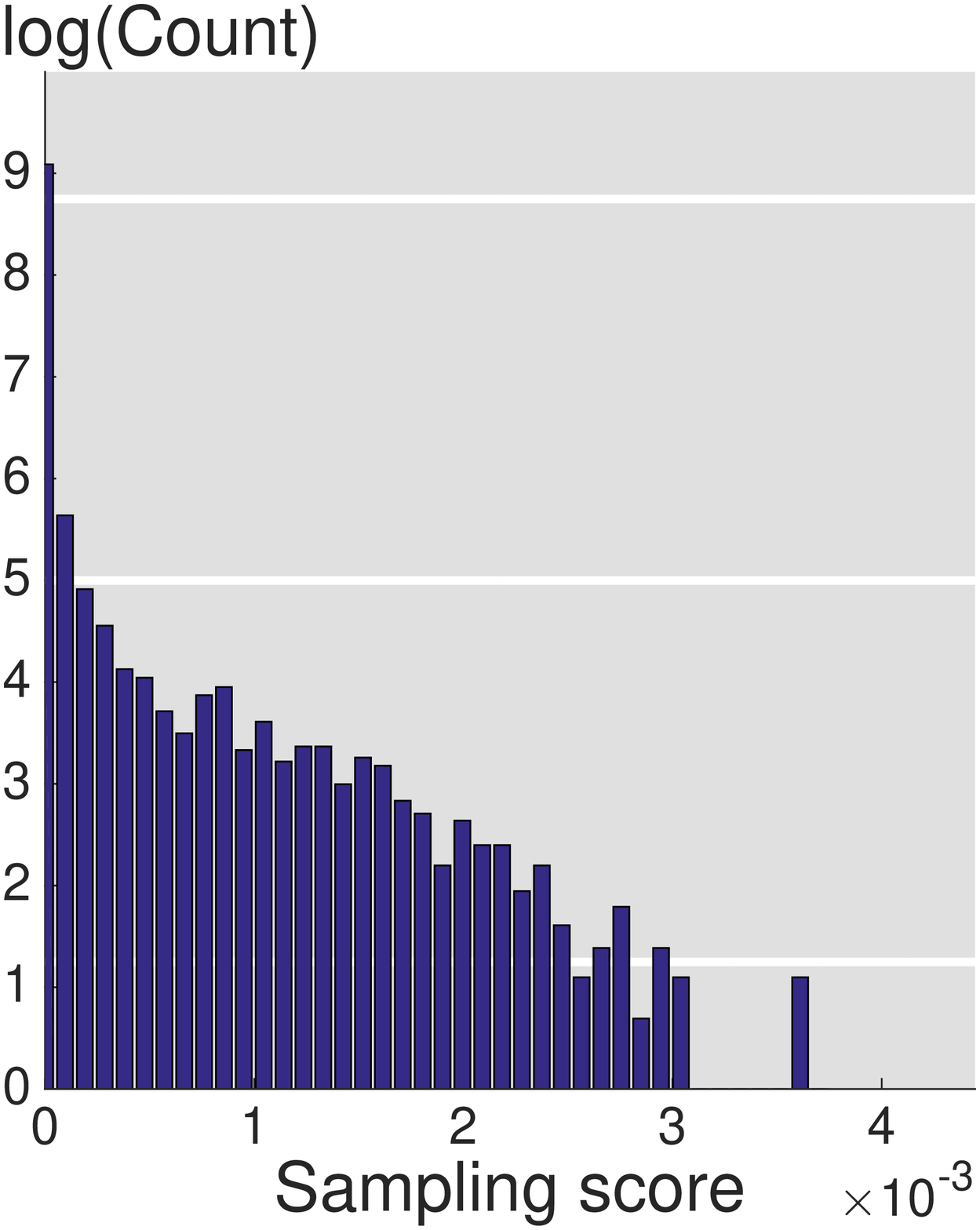} 
      \\
      {\small (a)  Graph plot. } & {\small (b)  Histogram of the degrees. } &  {\small (c)  Histogram of the leverage scores.} & 
      {\small (d)  Histogram of the leverage scores}
       \\
      & & & {\small (log-scale).}
      \\  
    \end{tabular}
  \end{center}
  \caption{\label{fig:SW} Properties of a small-world
    graph. Plot (c) shows the histogram of the leverage score,
      which is the optimal sampling score when the SNR is
      large. Plot (d) shows the
      log-scale histogram of the leverage scores, which confirms that the leverage scores approximately follow a power-law distribution. }
\end{figure*}

\mypar{Power-law graph} A power-law graph is a graph where the more
connected a node is, the more likely it is to receive new links, known
as a preferential attachment graph~\cite{Jackson:08,Newman:10}. It is
well known that the degree distribution of a preferential attachment
graph follows the power law. We use a graph generated from the
Barab\'{a}si-Albert model, where new nodes are added to the network
one at a time. Each new node is connected to one existing node with a
probability that is proportional to the number of links that the
existing nodes already have.  Figure~\ref{fig:PL} illustrates
  some properties of the small-world graph: Figure~\ref{fig:PL}(a)
  shows the graph plot; Figure~\ref{fig:PL}(b) shows the histogram of the degrees that is skewed, which clearly follows the power law;
  Figure~\ref{fig:PL}(c) shows the histogram of the leverage scores
  of $\Vm_{(20)}$, which are the optimal sampling scores when the SNR is
  large; and Figure~\ref{fig:PL}(d) shows the histogram of the
  leverage scores on a log-scale. We see that the histogram of the leverage
scores is skewed; this means that a few nodes
  are more important than others during sampling and have much higher
  probabilities to be chosen, similarly to the random geometric graph
  and the small-world graph. Based on the above, a
  power-law graph is approximately a Type-2 graph.
 
\begin{figure*}[htb]
  \begin{center}
    \begin{tabular}{cccc}
      \includegraphics[width=0.35\columnwidth]{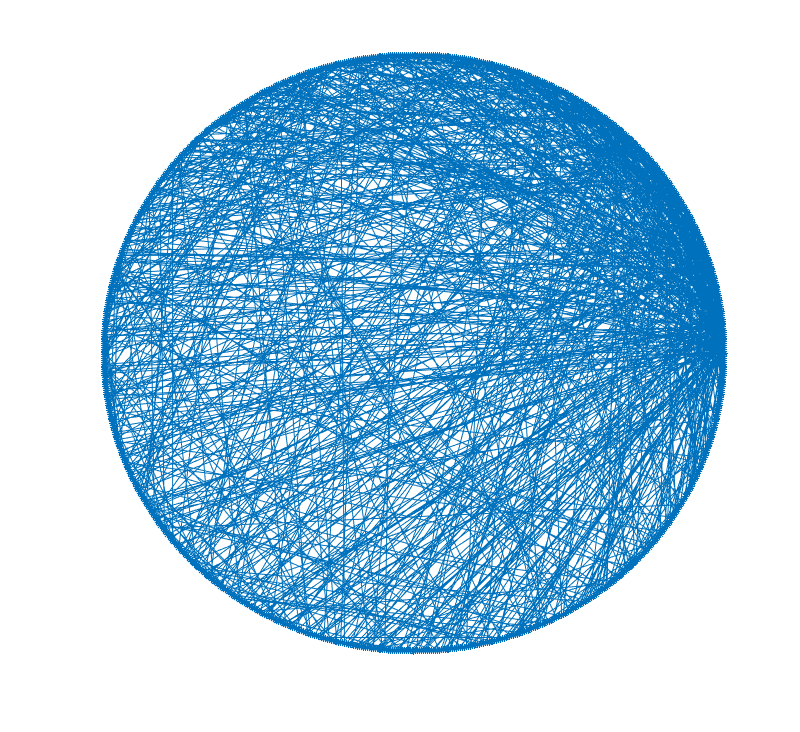}  & \includegraphics[width=0.35\columnwidth]{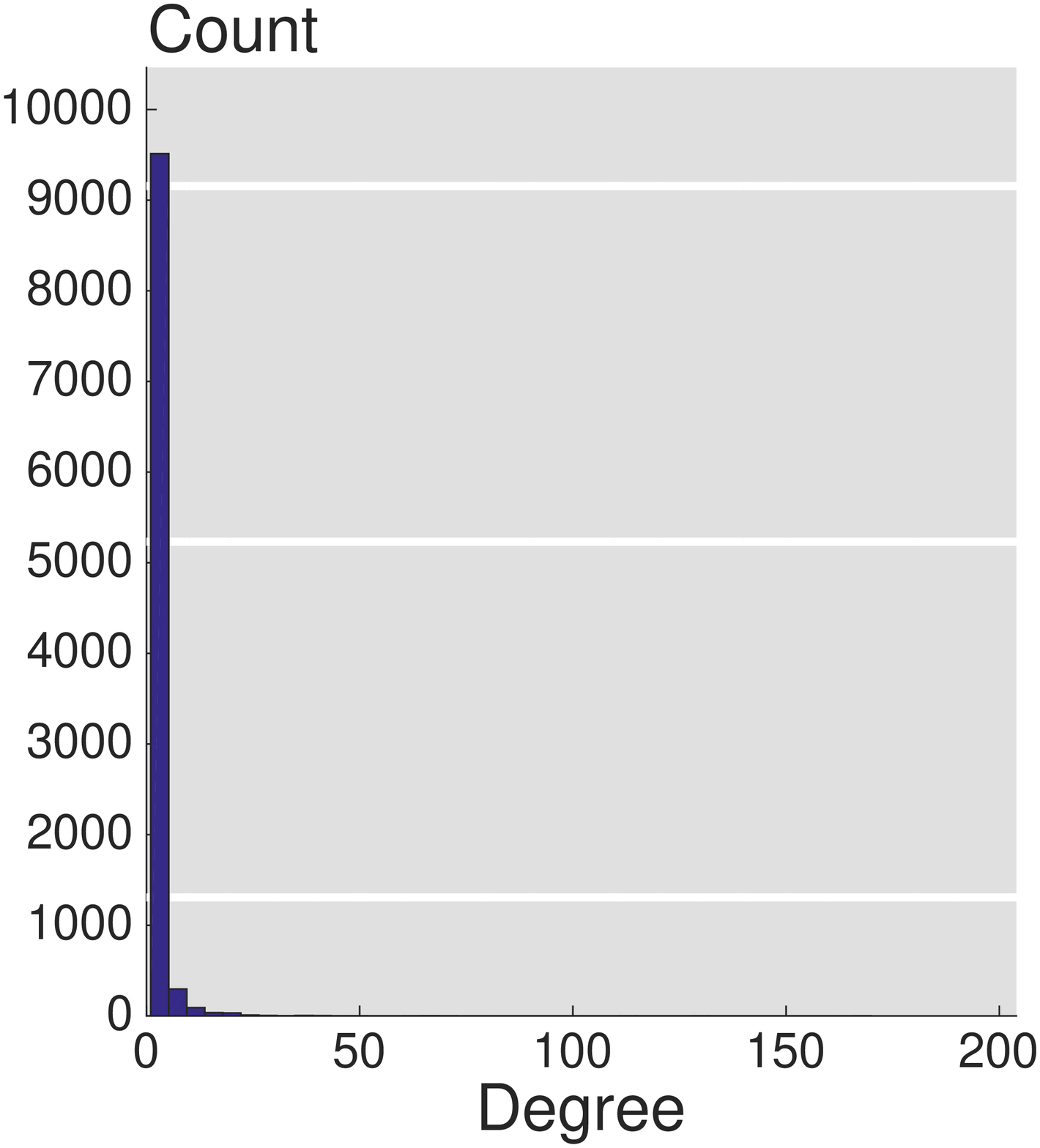}  &
      \includegraphics[width=0.35\columnwidth]{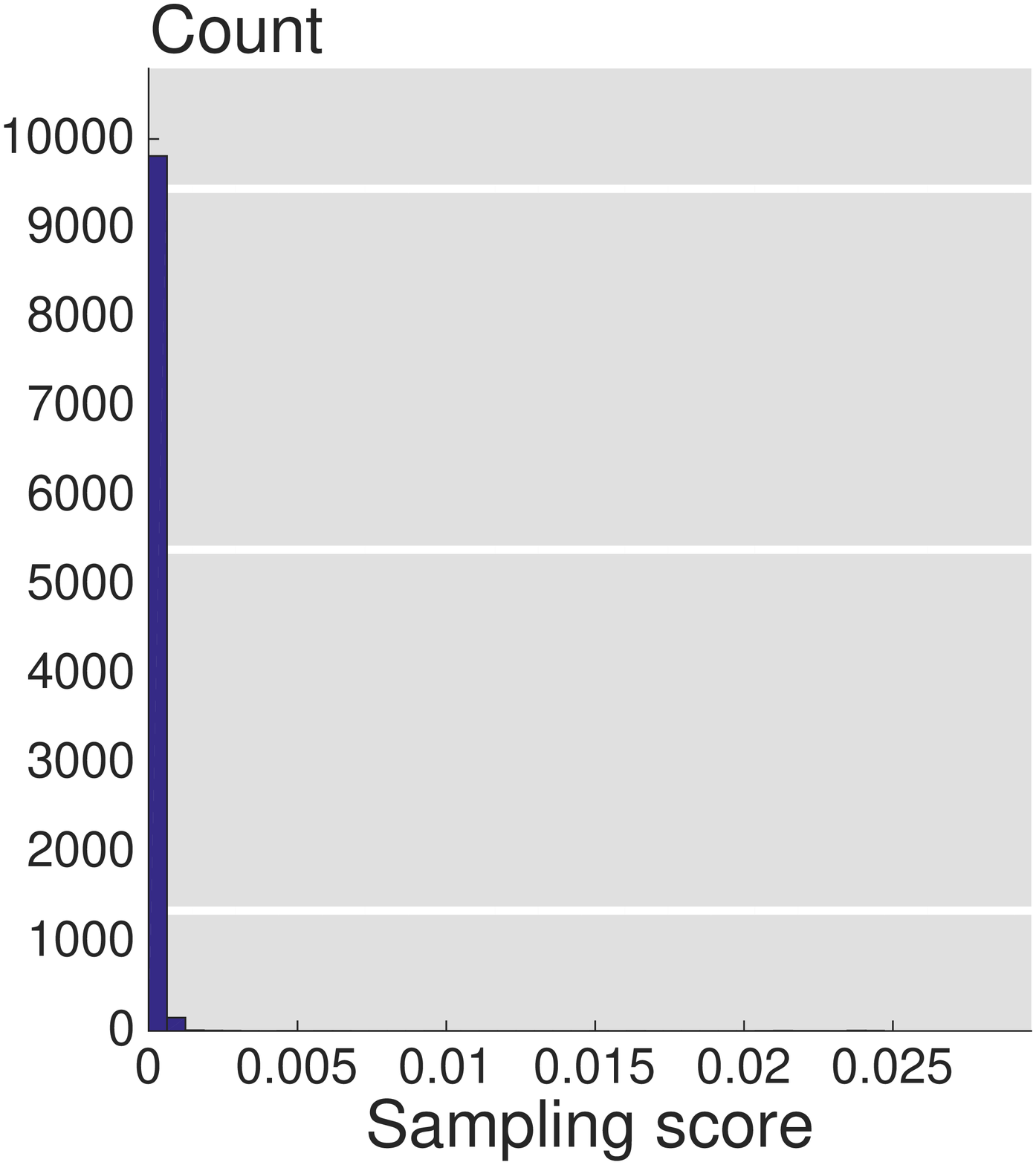} &
      \includegraphics[width=0.35\columnwidth]{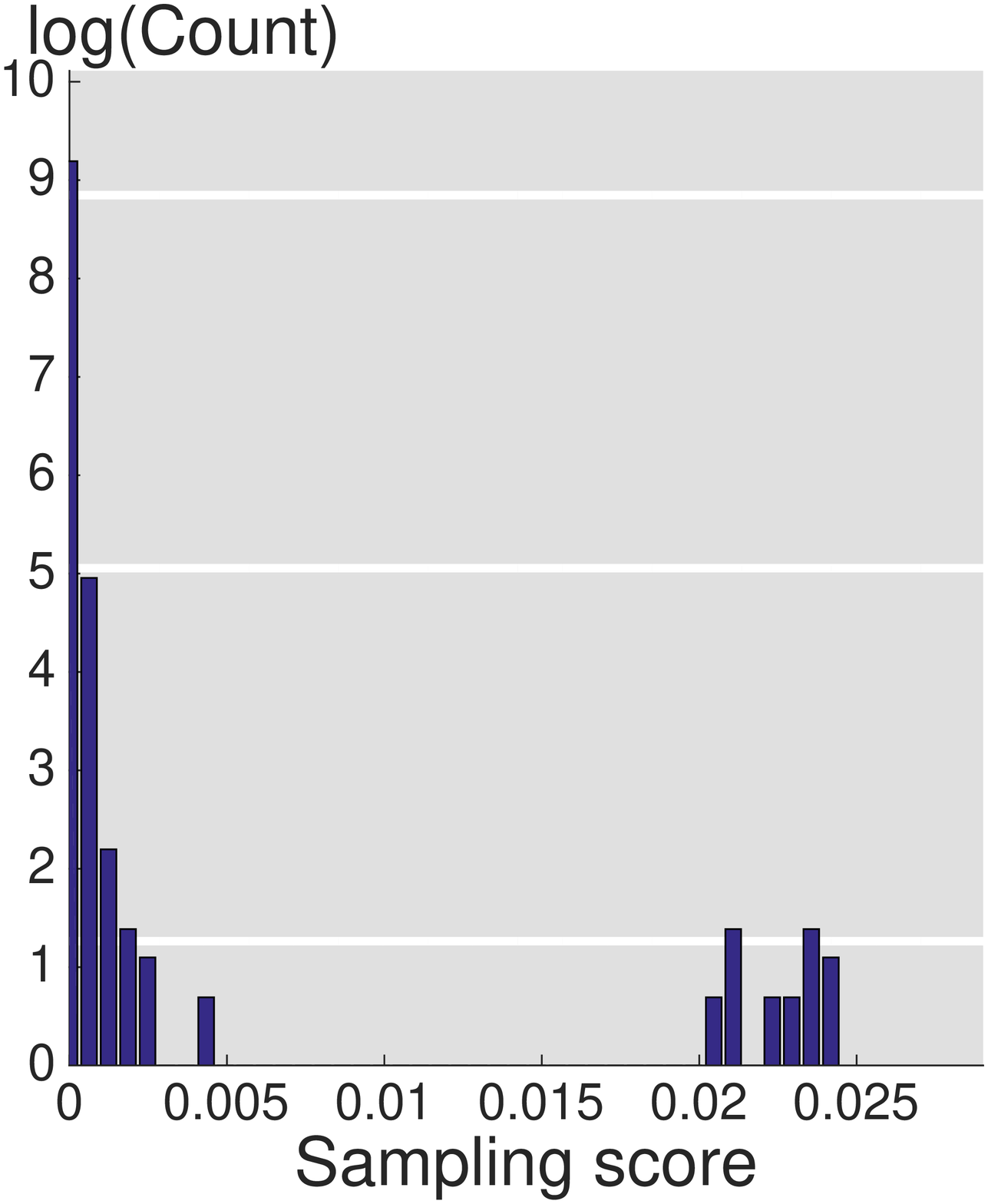} 
      \\
      {\small (a)  Graph plot. } & {\small (b) Histogram of the degrees.} &  {\small (c)  Histogram of the leverage scores.} & 
      {\small (d) Histogram of the leverage scores}
       \\
      & & & {\small (log-scale).}
      \\  
    \end{tabular}
  \end{center}
  \caption{\label{fig:PL} Properties of a power-law
    graph. Plot (c) shows the histogram of the leverage score,
      which is the optimal sampling score when the SNR is
      large. Plot (d) shows the
      log-scale histogram of the leverage scores, which confirms that the leverage scores approximately follow a power-law distribution.}
\end{figure*}

\mypar{Types} Based on our observation of the graph
Fourier transform matrices of each graph, the ring graph and the
Erd\H{o}s-R\'enyi graph approximately satisfy the requirement to be
the Type-1 graph, while the random geometric graph,
the small-world graph and the power-law graph approximately
satisfy the requirement to be the Type-2 graph. We thus expect that
the experimentally designed sampling has similar performance to uniform sampling for the ring graph and the
Erd\H{o}s-R\'enyi graph, while it outperforms uniform sampling
for the random geometric graph, the small-world graph and the power-law
graph.

\subsection{Simulated Graph Signals}
\label{sec:simulatedgraphsignals}
For each graph $\Adj$, we generate 1,000 graph signals through the
following two steps: We first generate the graph frequency components
as
\begin{eqnarray}
  \label{eq:simulation}
  \widehat{x}_k
  \left\{
    \begin{array}{ll}
      \sim \mathcal{N}(1, 0.5^2) & \mbox{when } k < K; \\
      =  {K^{2\beta}}/{k^{2\beta}} & \mbox{when } k \geq K.
    \end{array}
  \right.
\end{eqnarray}
We then normalize $\xhat$ to have unit norm, and
obtain $\x = \Vm \xhat$. It is clear that $\x \in \BLT_{\Adj } (K,
\beta, \mu)$, where $K = 10$ and $\beta$ varies as $0.5$ and
$1$. During sampling, we simulate noise $\epsilon \sim
\mathcal{N}(0, \sigma^2)$, vary the sample size $m$ from $1,000$ to $20,000$, and vary $\sigma^2$ from low noise level
$10^{-4}$ ($ \left\| \x \right\|_2/\left\| \epsilon \right\|_2 =
100$) to high noise level $0.02$ ($\left\| \x \right\|_2  / \left\| \epsilon \right\|_2 = 0.5 $). During recovery, we set the bandwidth $K =
\max( 10, m^{1/ 2 \beta+1} )$ as suggested in
Corollaries~\ref{cor:type1_opt} and~\ref{cor:type2_opt}.

\subsection{Results}
We compare four sampling strategies, including uniform sampling (in
blue), leverage score based sampling (in orange), square root of the
leverage score based sampling (in purple), and degree based sampling
(in red). Note that the last three sampling strategies all belong
  to experimentally designed sampling because they are designed
  based on the structure of the graph.  As shown in Section~\ref{sec:statistical}, leverage score based sampling is approximately optimal when the SNR is large,
square root of the leverage score based sampling is approximately the
optimal when the SNR is small. We also use the degrees
as the sampling scores because previous works show that the largest
eigenvectors of adjacency matrices often have most of their mass
localized on high-degree nodes~\cite{GohKK:01, CucuringuM:11}, which
implies the high correlation between degree and leverage score. We
evaluate the recovery performance by using the MSE,
\begin{eqnarray*} 
{\rm MSE} \ = \ \left\| \x^* - \x \right\|_2^2,
\end{eqnarray*}
where $\x^*$ is the recovered graph signal and $\x$ is the original
graph signal. The simulation results for the ring graph, the
Erd\H{o}s-R\'enyi graph, the random geometric graph, the small-world
graph and the power-law graph are shown in
Figures~\ref{fig:Ring_Err},~\ref{fig:ER_Err},~\ref{fig:Geo_Err},~\ref{fig:SW_Err}
and~\ref{fig:PL_Err}, respectively. We summarize the important points
below.

\begin{itemize}
\item All of the sampling strategies perform similarly on the ring
  graph and the Erd\H{o}s-R\'enyi graph, matching
  Corollary~\ref{cor:type1_opt}.

\item Experimentally designed sampling outperforms uniform sampling on
  the random geometric graph, the small-world graph and the power-law
  graph, matching Corollary~\ref{cor:type2_opt}. Especially for the
  small-world graph and the power-law graph, uniform sampling is much
  worse than experimentally designed sampling.

\item Leverage score based sampling outperforms all other sampling
  strategies when the noise level is low.

\item Square root of the leverage score based sampling outperforms all
  other sampling strategies when the noise level is high.

\item Degree based sampling outperforms uniform sampling because of
  its correlation to the leverage score based sampling, but for the
  small-world graph, degree based sampling is still much worse than
  leverage score based sampling.

\item When $\beta$ is larger, the recovery performance is better
  because less energy is concentrated in the high-frequency band for
  approximately bandlimited graph signals.

\item When $\sigma^2$ is smaller, the recovery performance is better
  because of less noise.

\item  The degree distribution is not a reliable indicator of when
    experimentally designed sampling outperforms uniform sampling.
    The degree distributions of the Erd\H{o}s-R\'enyi graph and the
    random geometric graph are similar, but experimentally designed
    sampling only outperforms uniform sampling on the random geometric
    graph. This implies that the first-order information provided by
    the degree is not sufficient in designing samples.
\end{itemize}

\begin{figure}[htb]
  \begin{center}
    \begin{tabular}{cc}
\includegraphics[width=0.45\columnwidth]{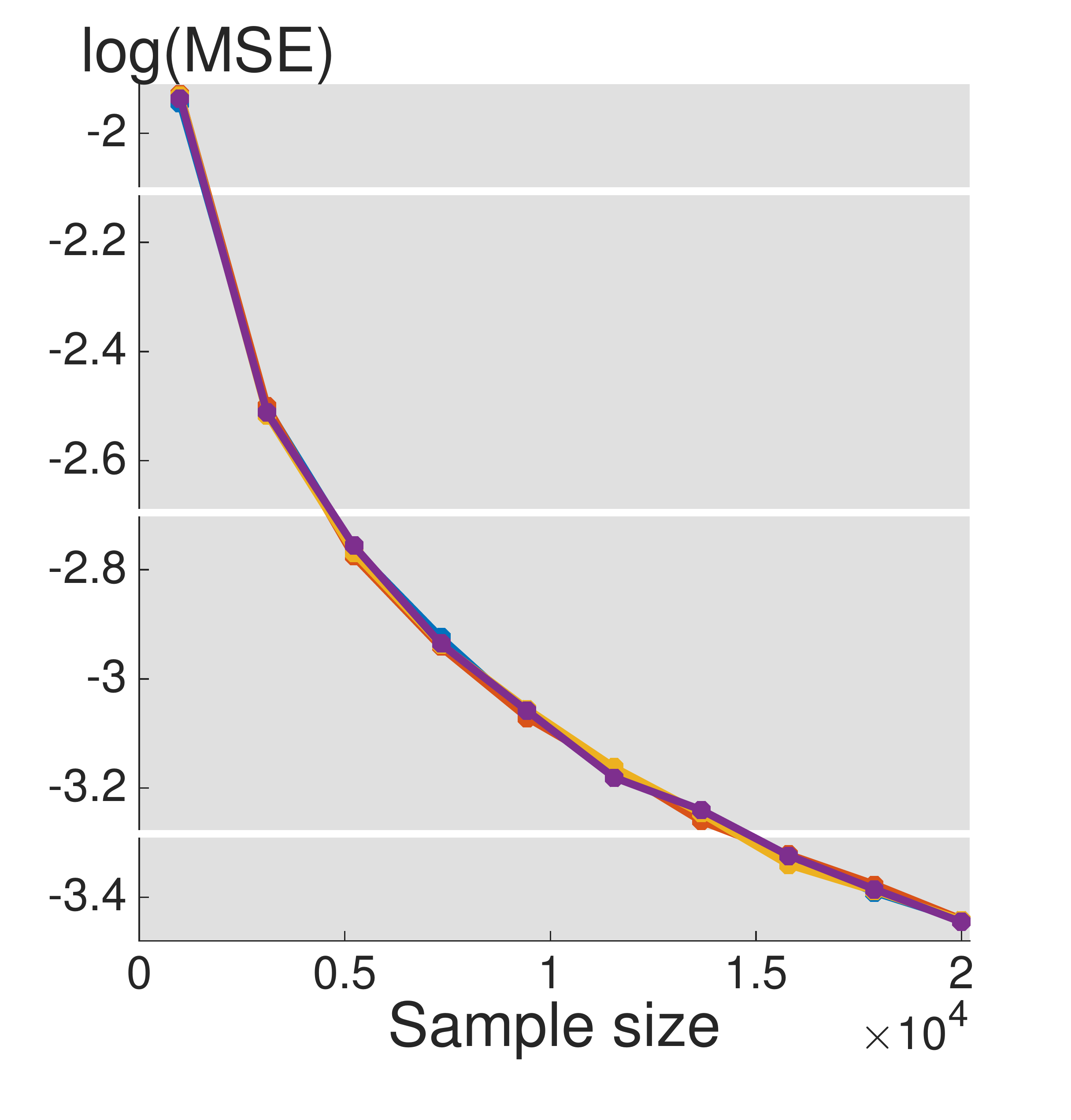}  & \includegraphics[width=0.45\columnwidth]{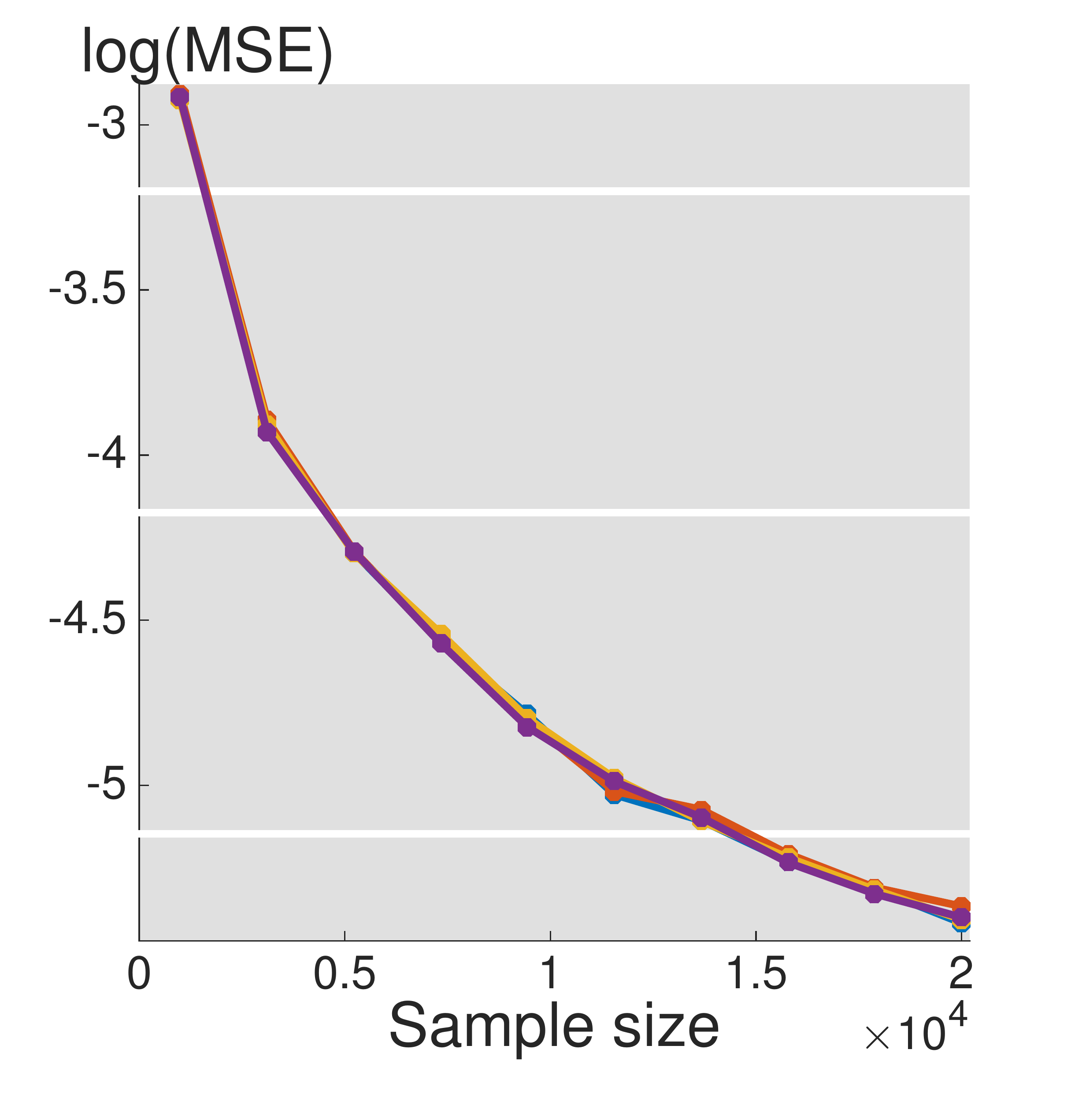} 
\\
      {\small (a)  $\beta = 0.5, \sigma^2 = 10^{-4}$ } & {\small (b) $\beta = 1, \sigma^2 = 10^{-4}$} 
\\
\includegraphics[width=0.45\columnwidth]{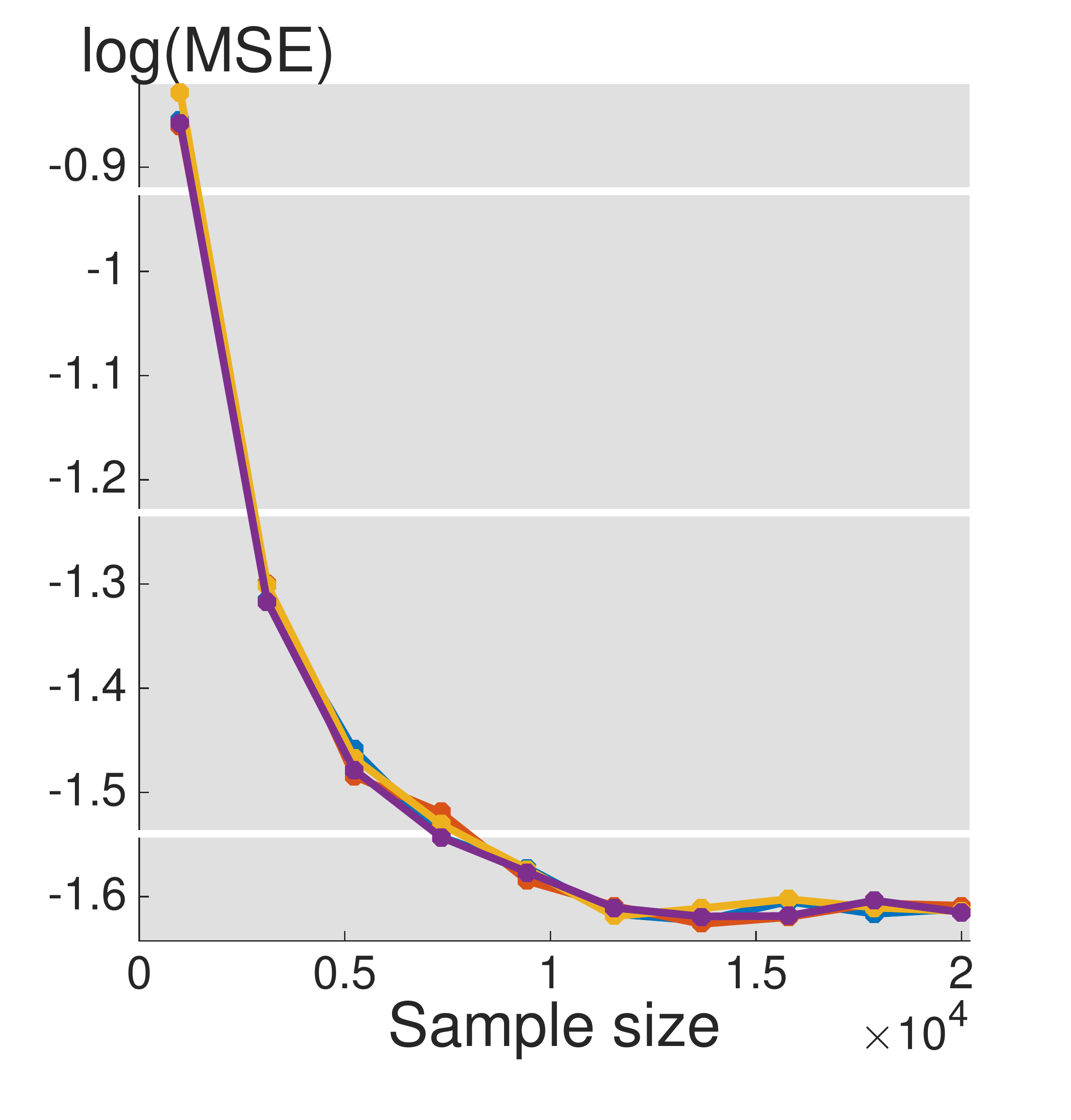}  & \includegraphics[width=0.45\columnwidth]{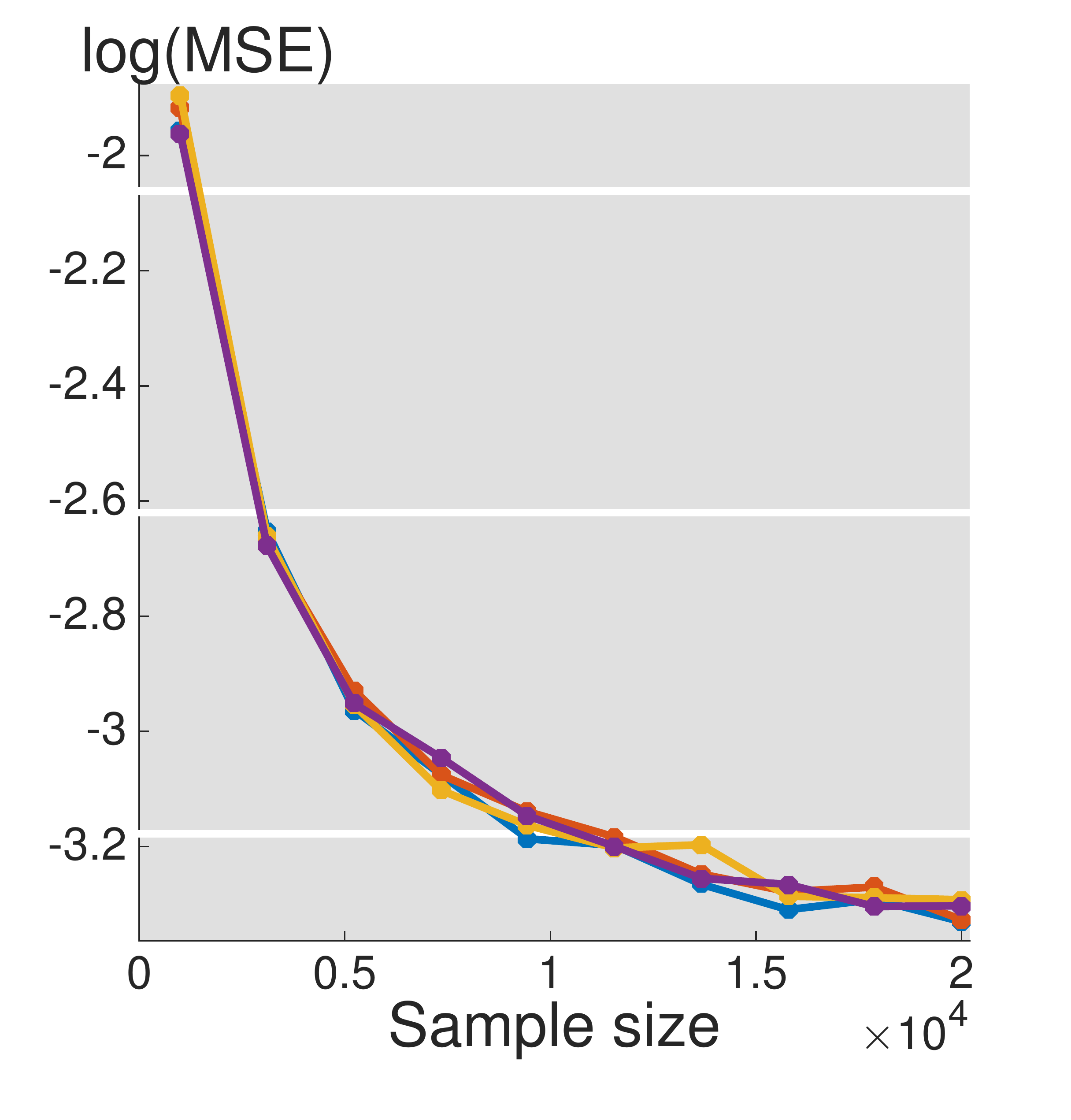} 
\\
      {\small (d)  $\beta = 0.5, \sigma^2 = 2 \times 10^{-2}$ } & {\small (e) $\beta = 1, \sigma^2 = 2 \times 10^{-2}$}
 \\  
\end{tabular}
  \end{center}
  \caption{\label{fig:Ring_Err} MSE comparison for the ring graph for
    uniform sampling (in blue), leverage score based sampling (in
    orange), square root of the leverage score based sampling (in
    purple) and degree based sampling (in red). }
\end{figure}

\begin{figure}[htb]
  \begin{center}
    \begin{tabular}{cc}
\includegraphics[width=0.45\columnwidth]{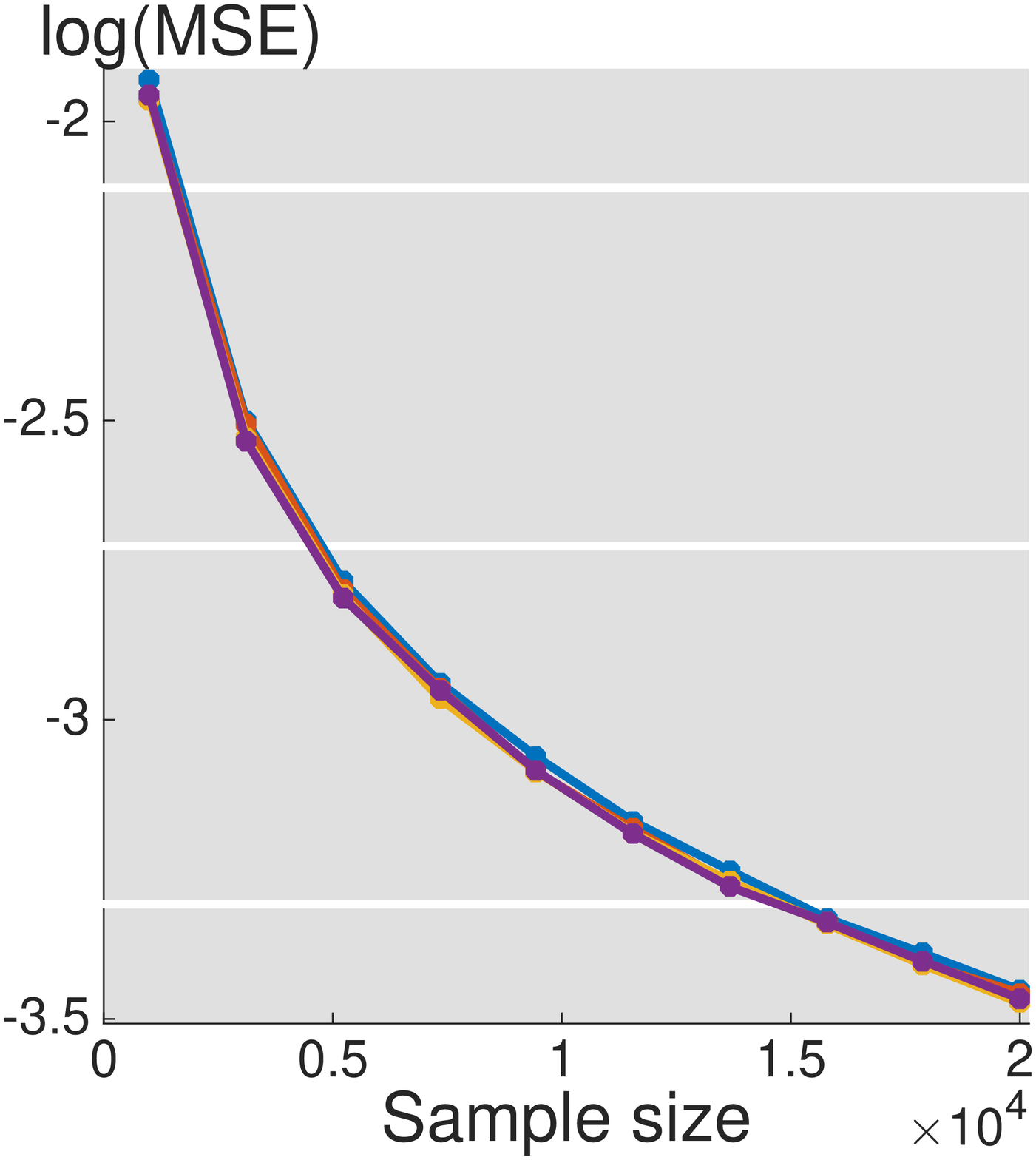}  & \includegraphics[width=0.45\columnwidth]{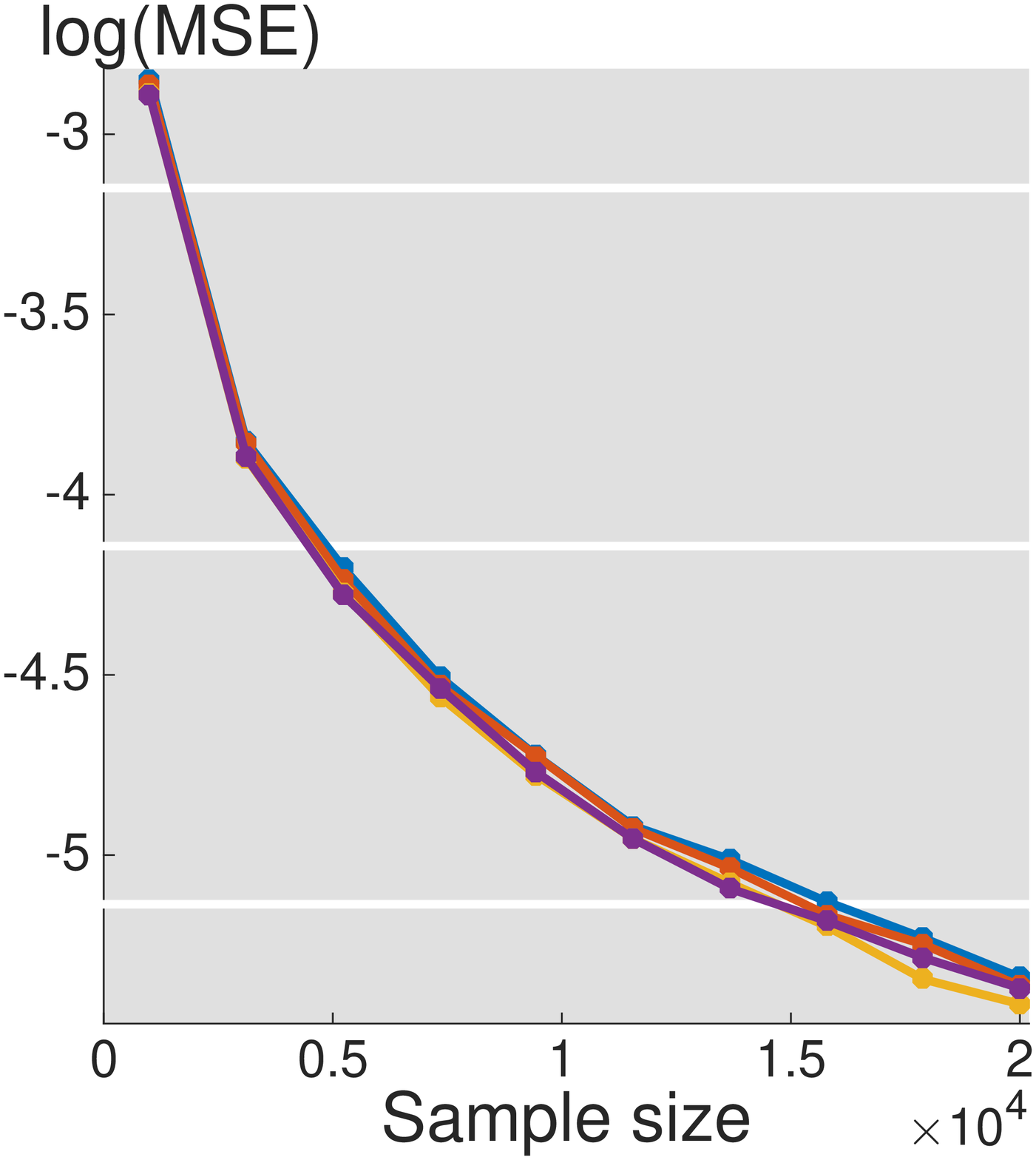} 
\\
      {\small (a)  $\beta = 0.5, \sigma^2 = 10^{-4}$. } & {\small (b) $\beta = 1, \sigma^2 = 10^{-4}$. } 
\\
\includegraphics[width=0.45\columnwidth]{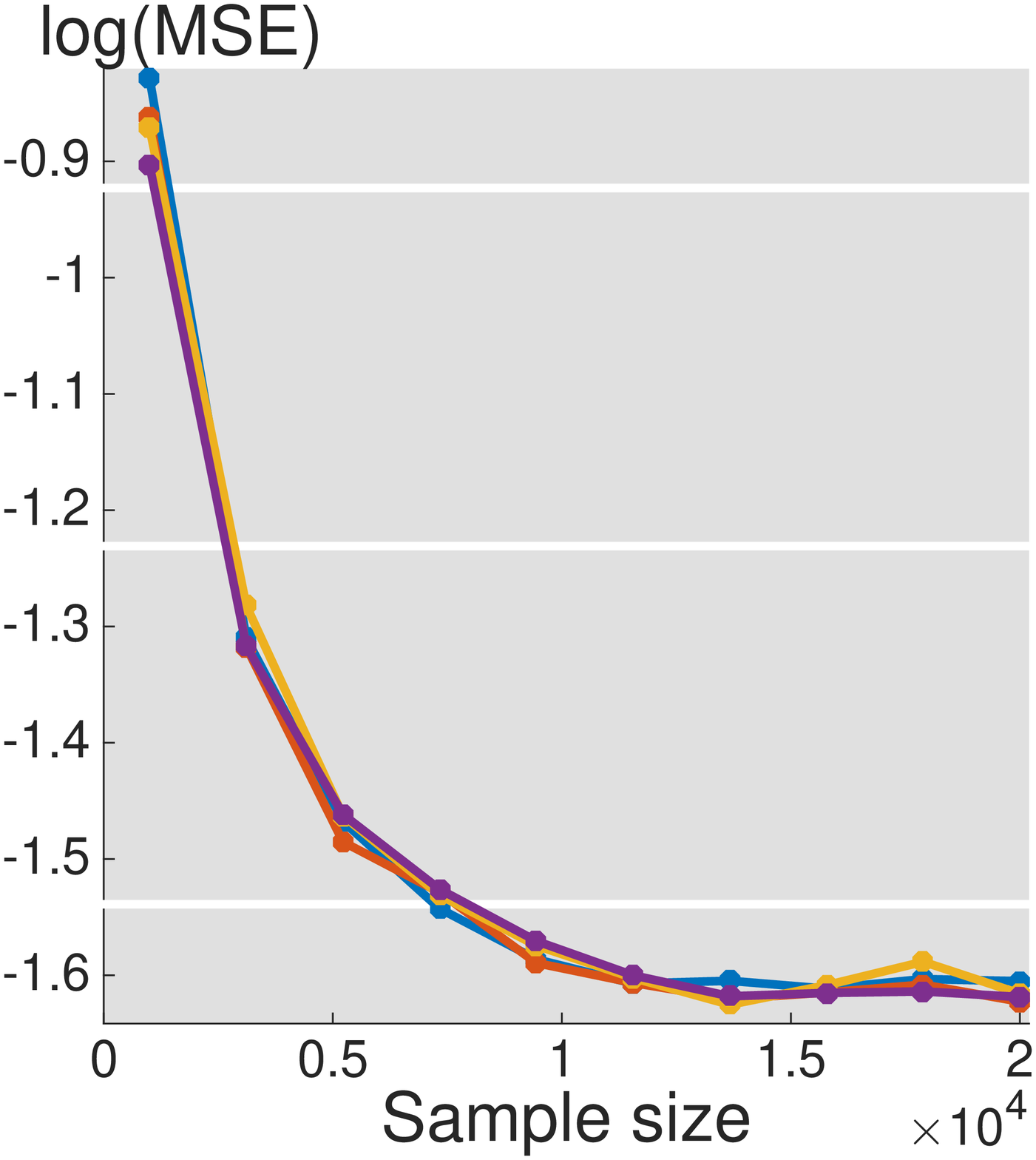}  & \includegraphics[width=0.45\columnwidth]{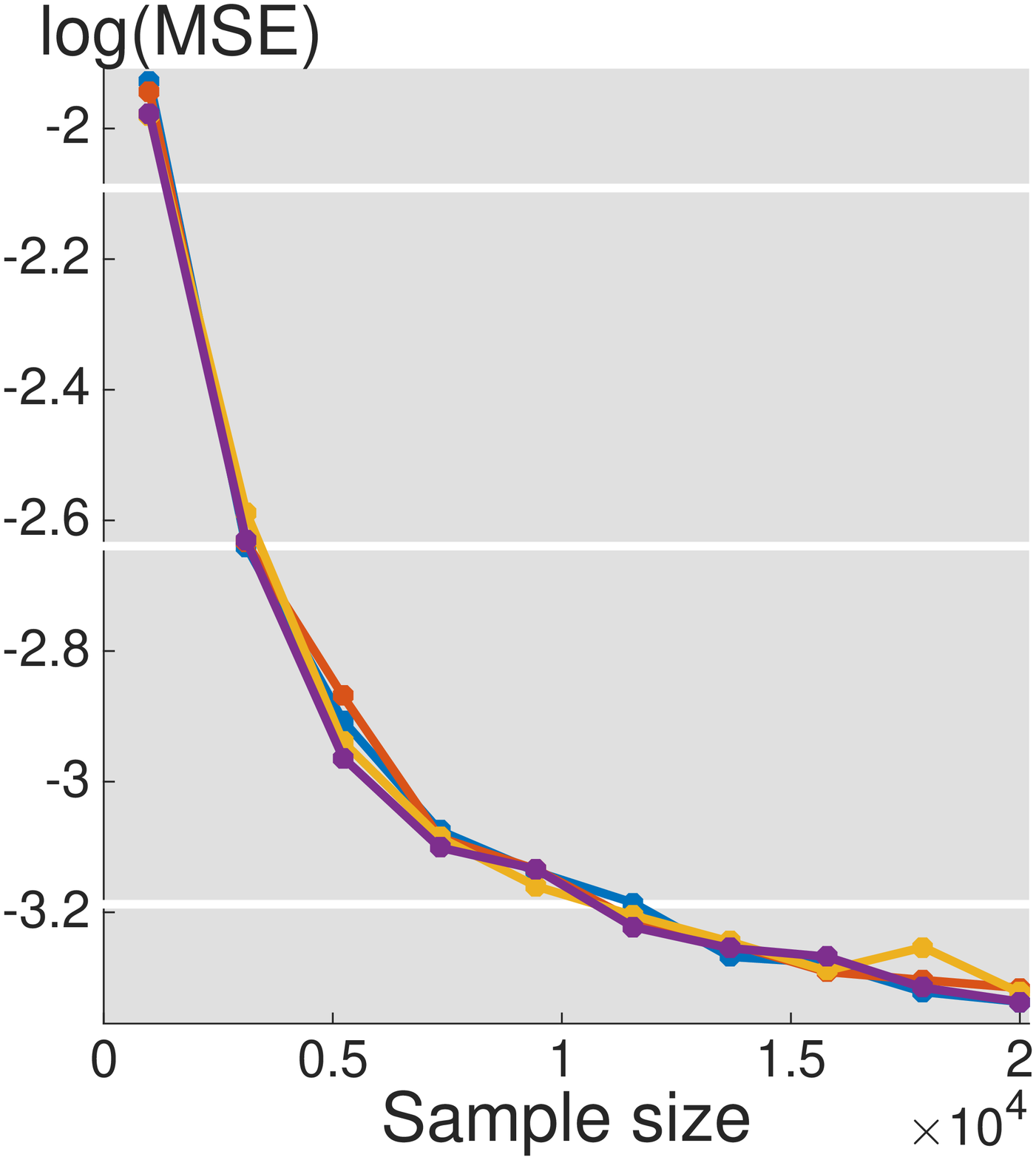} 
\\
      {\small (c)  $\beta = 0.5, \sigma^2 = 2 \times 10^{-2}$. } & {\small (d) $\beta = 1, \sigma^2 = 2 \times 10^{-2}$.}
 \\  
\end{tabular}
  \end{center}
  \caption{\label{fig:ER_Err} MSE comparison for the Erd\H{o}s-R\'enyi
    graph for uniform sampling (in blue), leverage score based
    sampling (in orange), square root of the leverage score based
    sampling (in purple) and degree based sampling (in red). }
\end{figure}

\begin{figure}[htb]
  \begin{center}
    \begin{tabular}{cc}
\includegraphics[width=0.45\columnwidth]{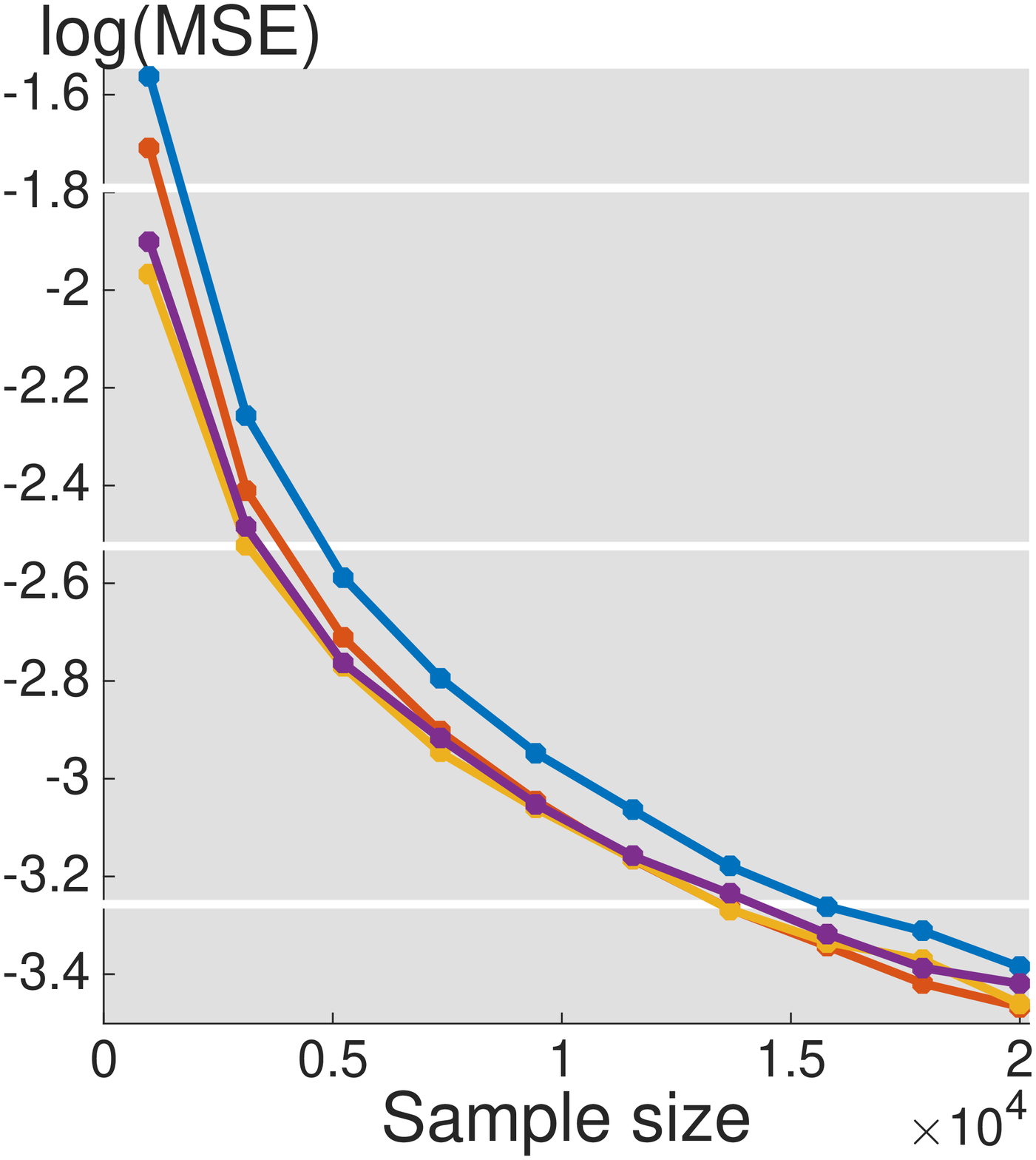}  & \includegraphics[width=0.45\columnwidth]{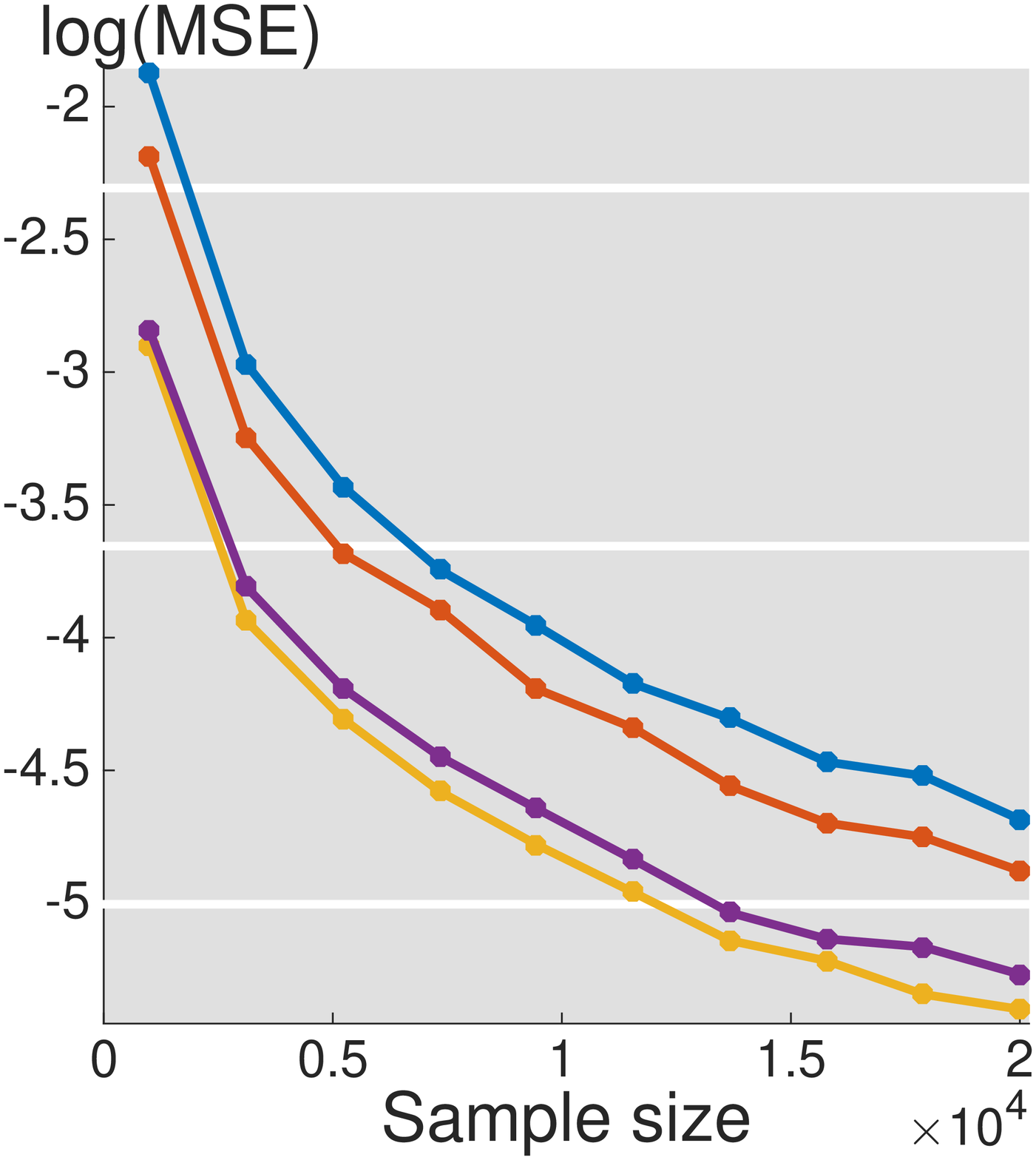} 
\\
      {\small (a)  $\beta = 0.5, \sigma^2 = 10^{-4}$. } & {\small (b) $\beta = 1, \sigma^2 = 10^{-4}$. } 
\\
\includegraphics[width=0.45\columnwidth]{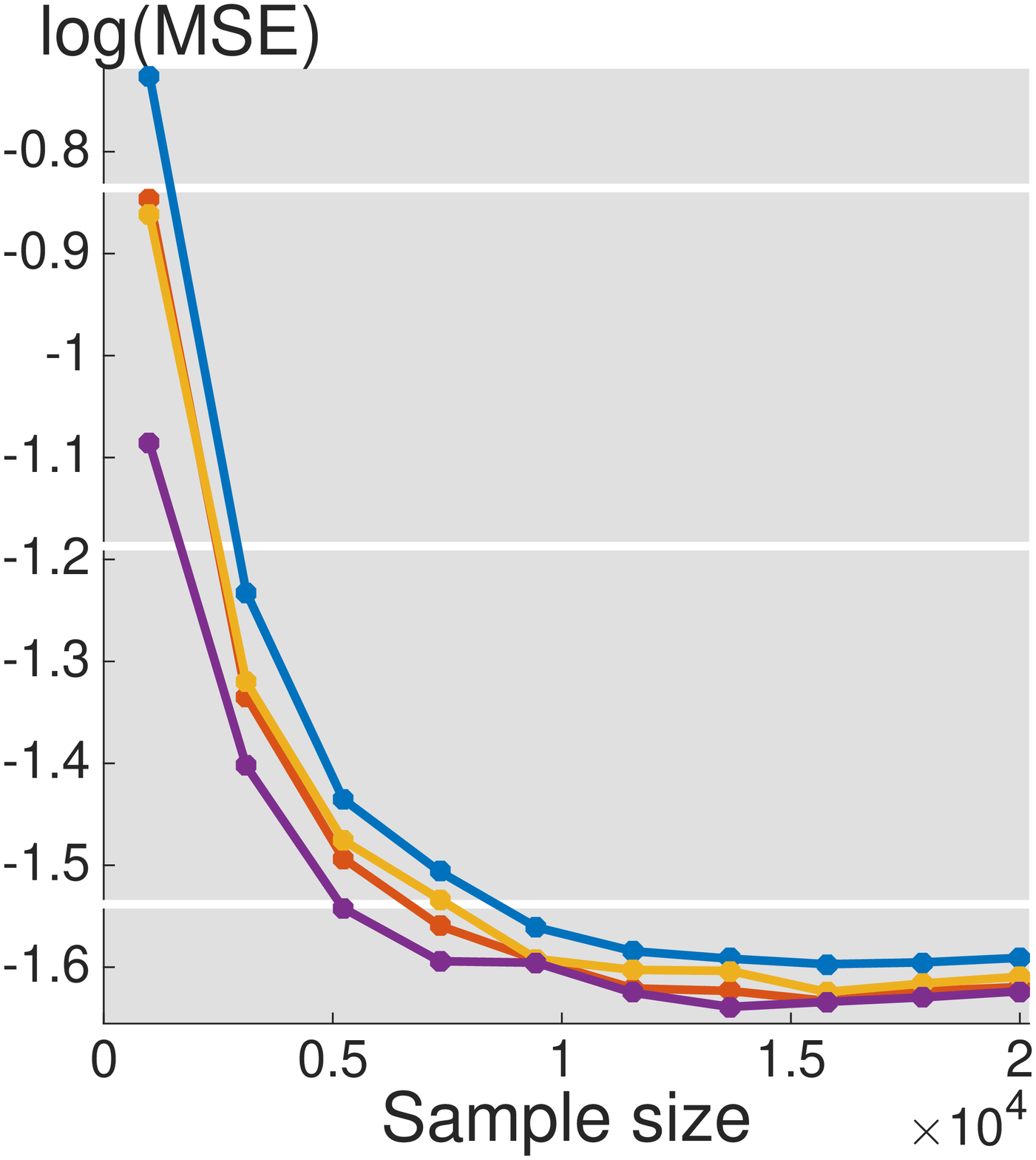}  & \includegraphics[width=0.45\columnwidth]{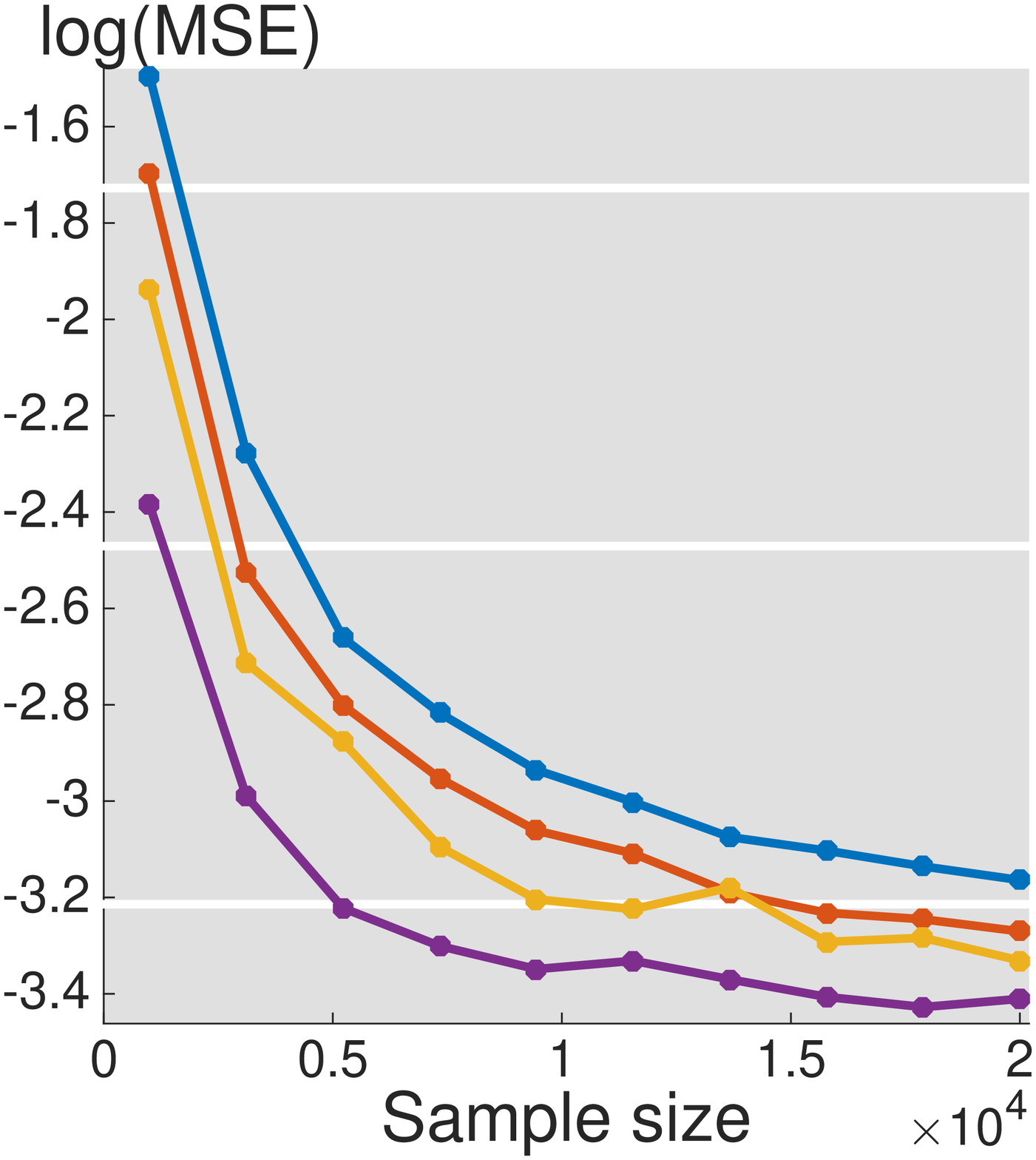} 
\\
      {\small (c)  $\beta = 0.5, \sigma^2 = 2 \times 10^{-2}$. } & {\small (d) $\beta = 1, \sigma^2 = 2 \times 10^{-2}$. }
 \\  
\end{tabular}
  \end{center}
  \caption{\label{fig:Geo_Err} MSE comparison for the random geometric
    graph for uniform sampling (in blue), leverage score based
    sampling (in orange), square root of the leverage score based
    sampling (in purple) and degree based sampling (in red).}
  \vspace{-5mm}
\end{figure}

\begin{figure}[htb]
  \begin{center}
    \begin{tabular}{cc}
\includegraphics[width=0.45\columnwidth]{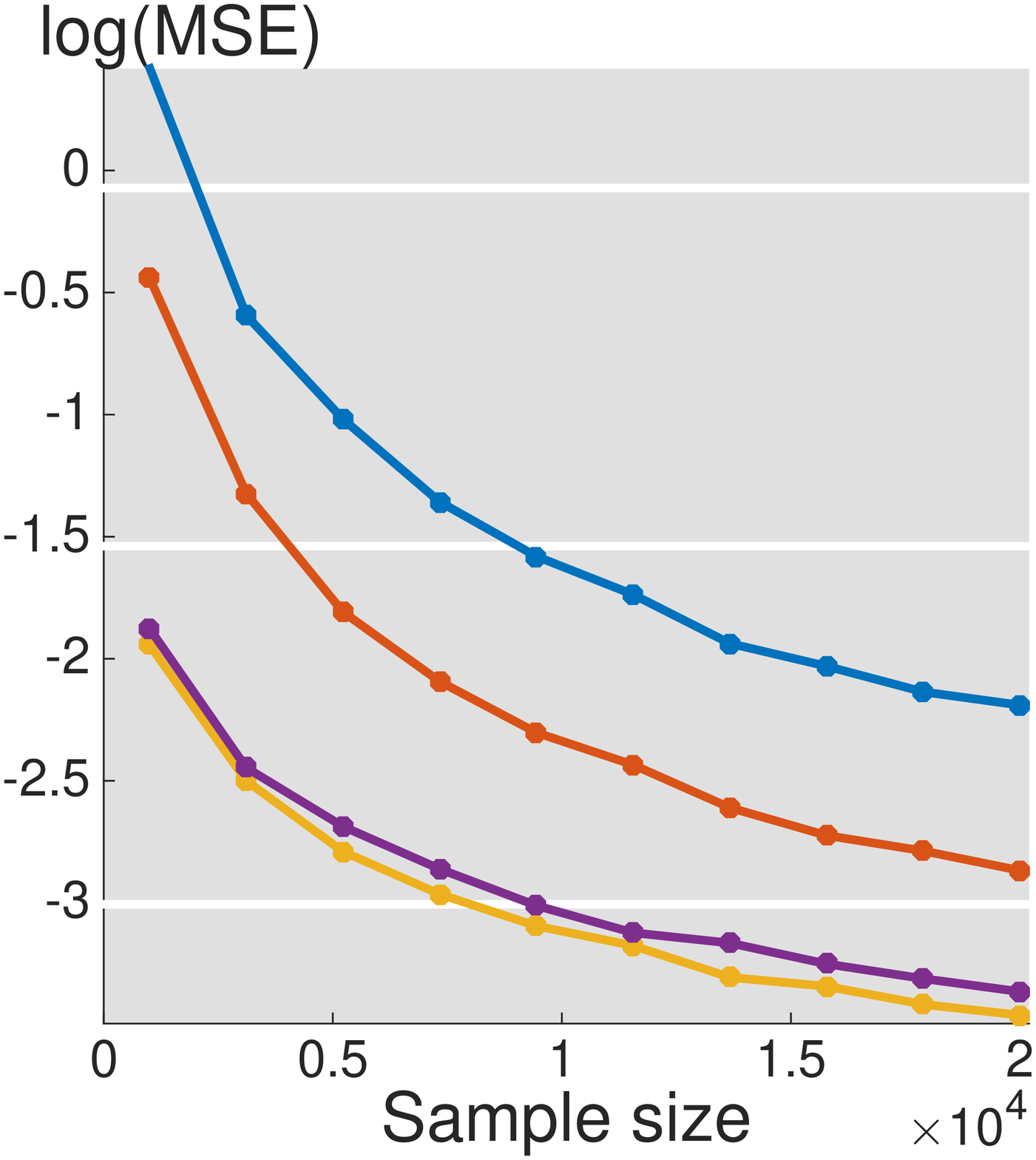}  & \includegraphics[width=0.45\columnwidth]{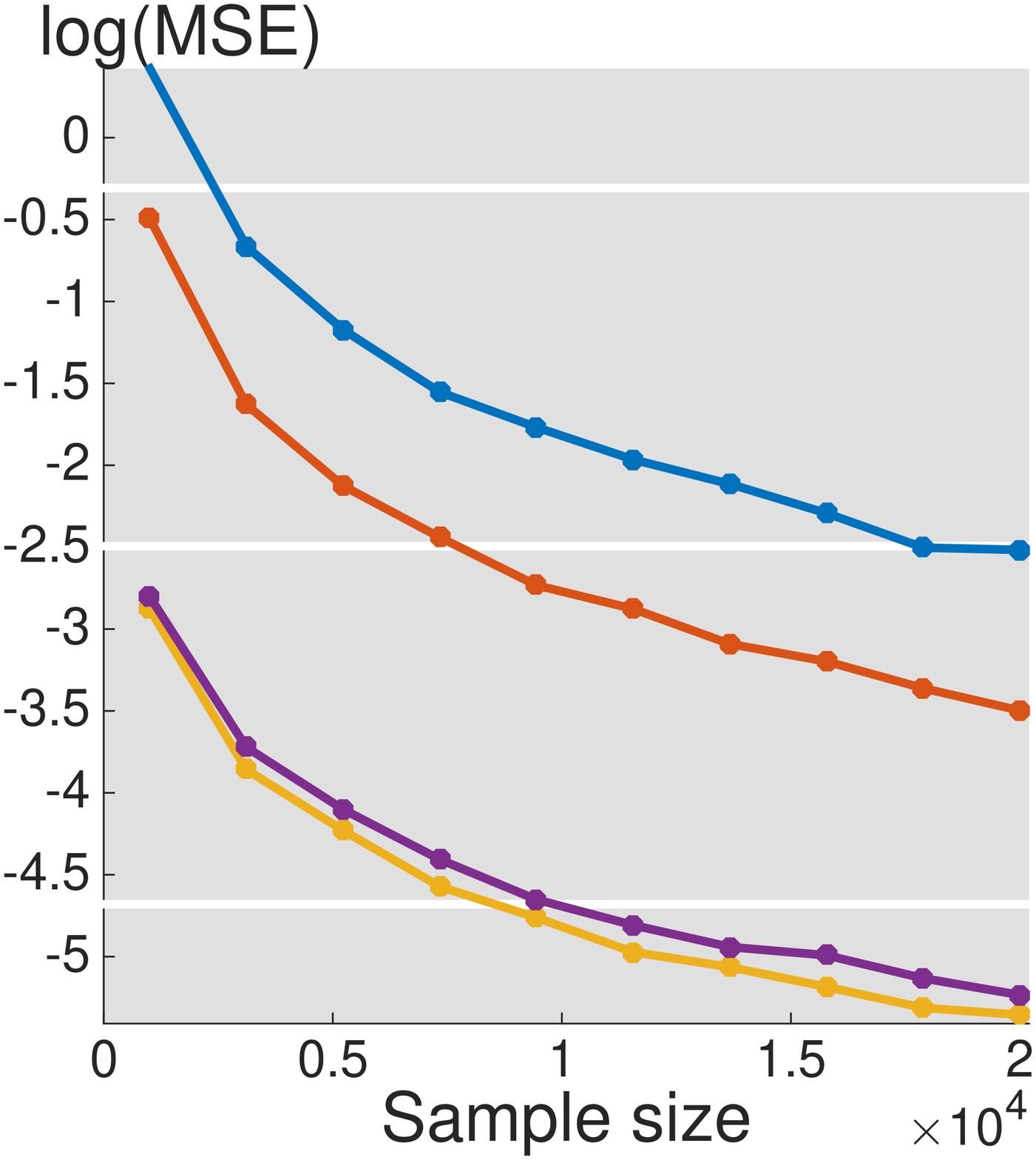} 
\\
      {\small (a)  $\beta = 0.5, \sigma^2 = 10^{-4}$. } & {\small (b) $\beta = 1, \sigma^2 = 10^{-4}$. } 
\\
\includegraphics[width=0.45\columnwidth]{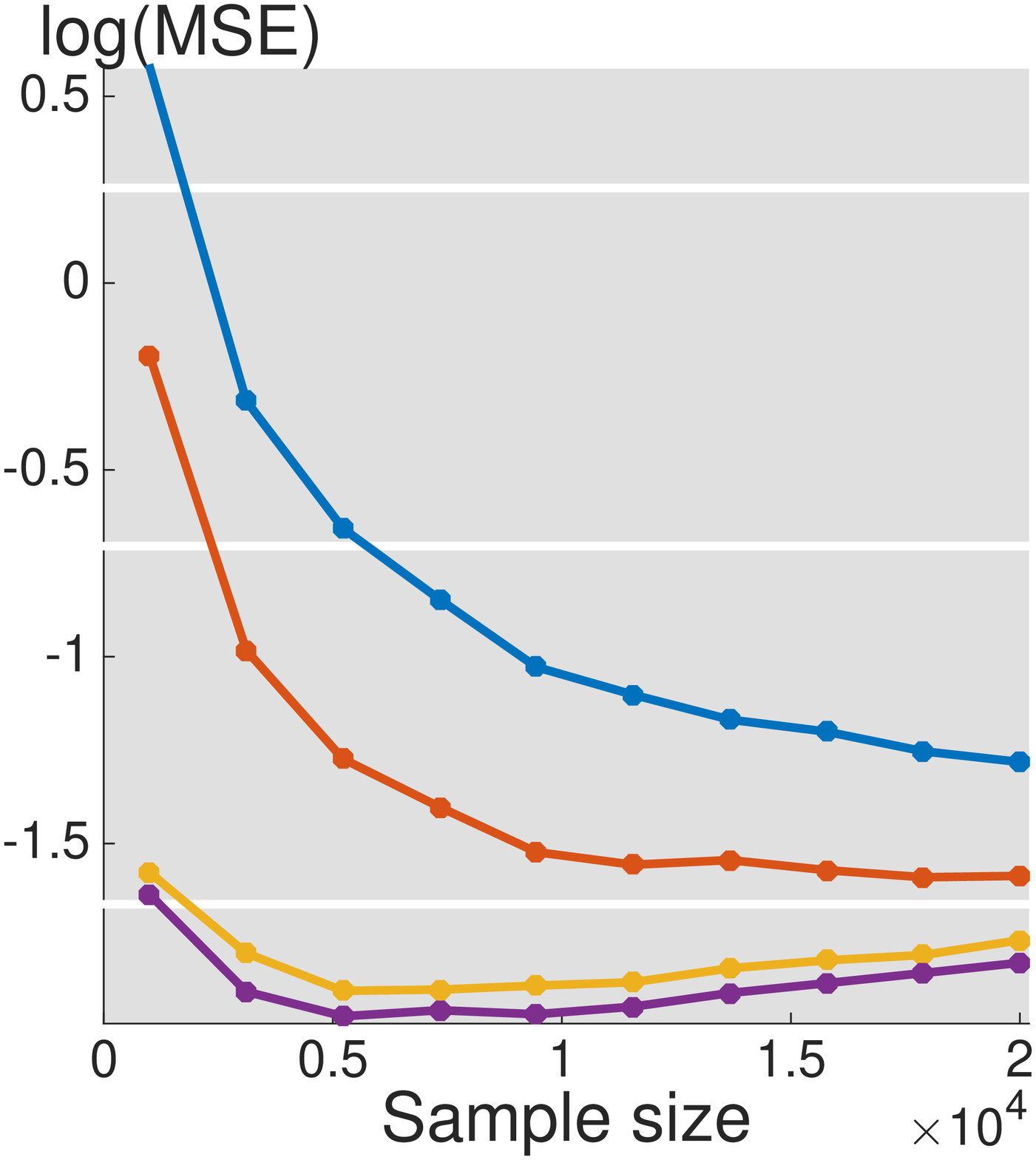}  & \includegraphics[width=0.45\columnwidth]{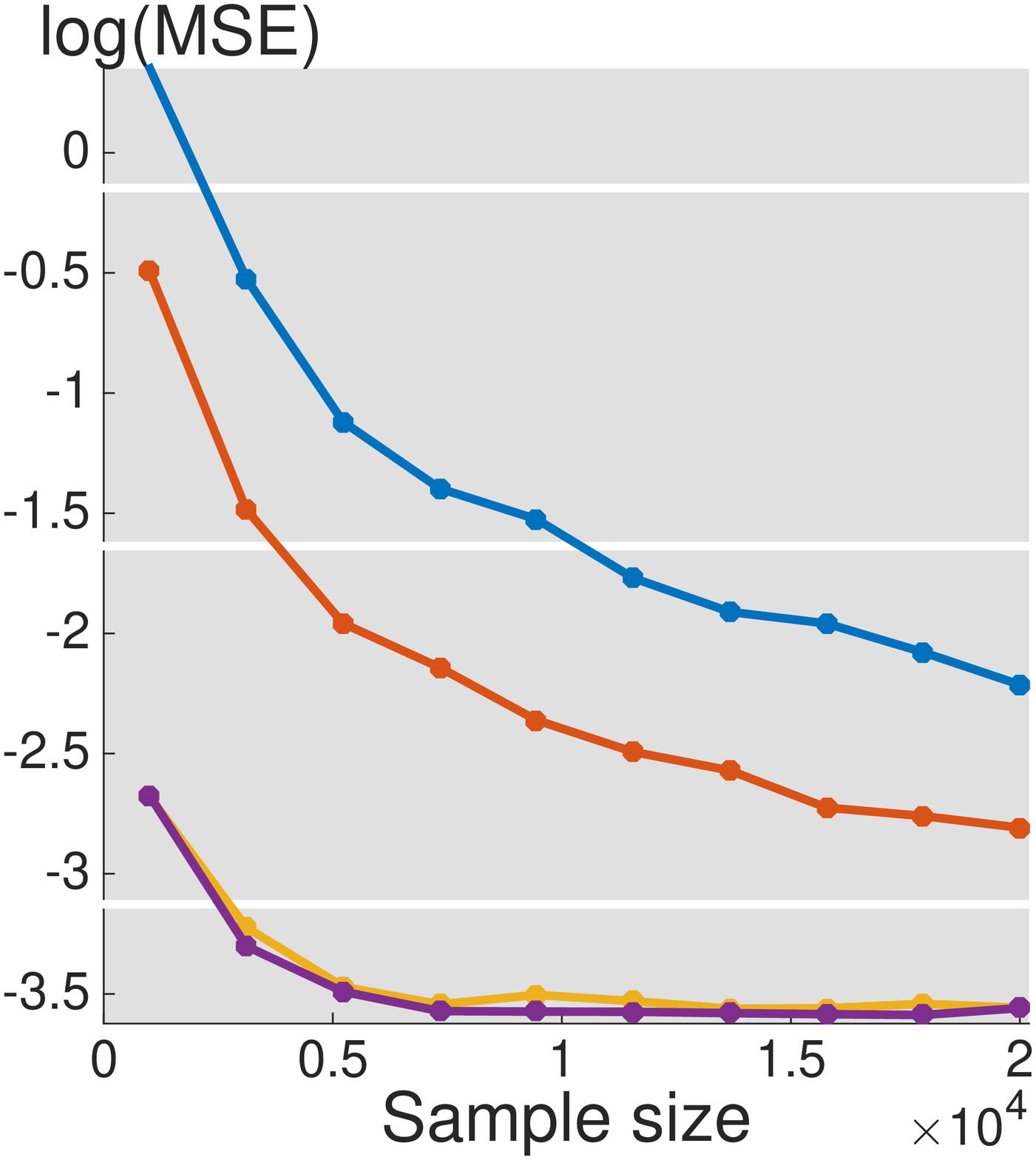} 
\\
      {\small (c)  $\beta = 0.5, \sigma^2 = 2 \times 10^{-2}$. } & {\small (d) $\beta = 1, \sigma^2 = 2 \times 10^{-2}$. }
 \\  
\end{tabular}
  \end{center}
  \caption{\label{fig:SW_Err} MSE comparison for the small-world graph
    for uniform sampling (in blue), leverage score based sampling (in
    orange), square root of the leverage score based sampling (in
    purple) and degree based sampling (in red).}
\end{figure}

\begin{figure}[htb]
  \begin{center}
    \begin{tabular}{cc}
\includegraphics[width=0.45\columnwidth]{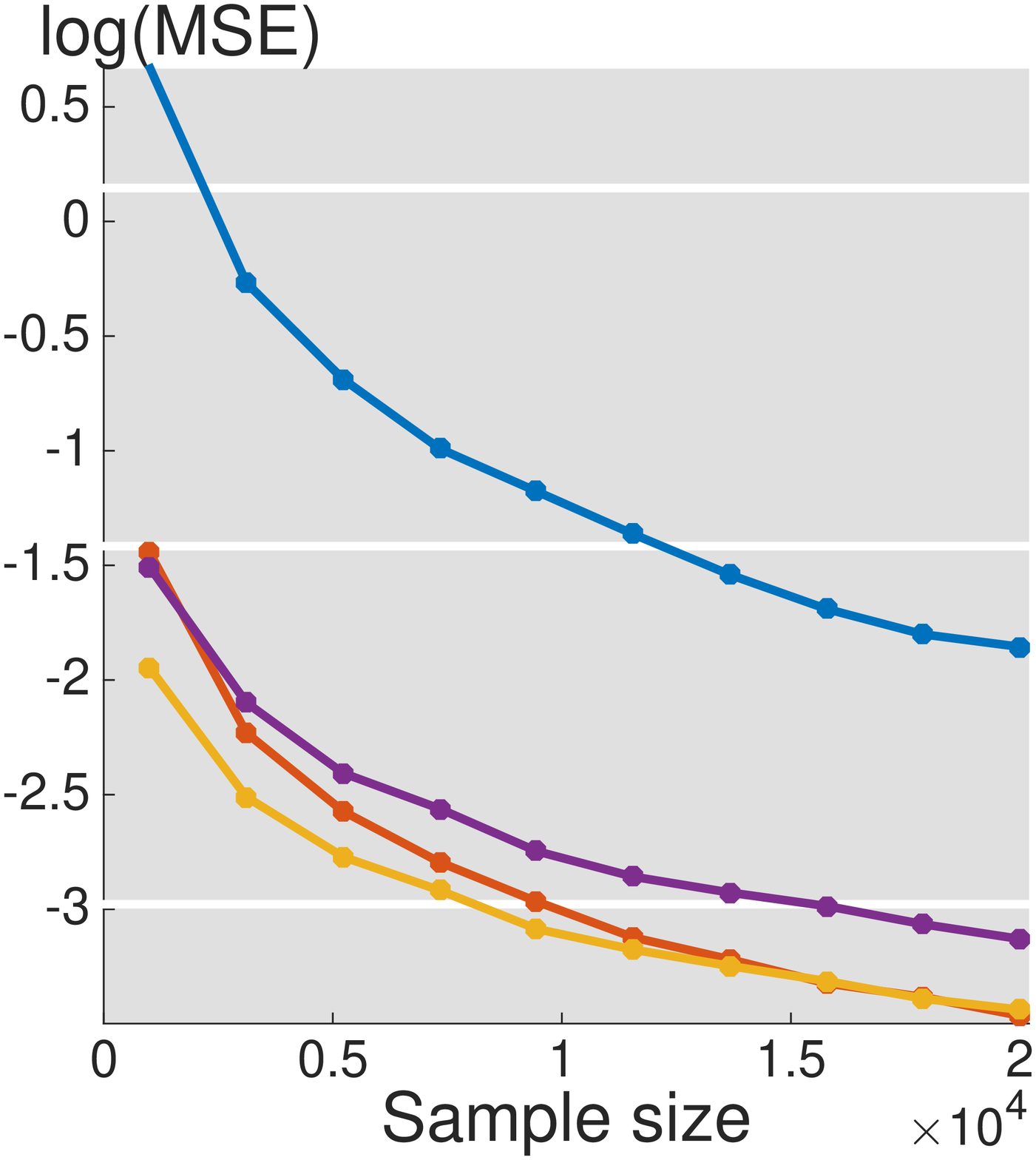}  & \includegraphics[width=0.45\columnwidth]{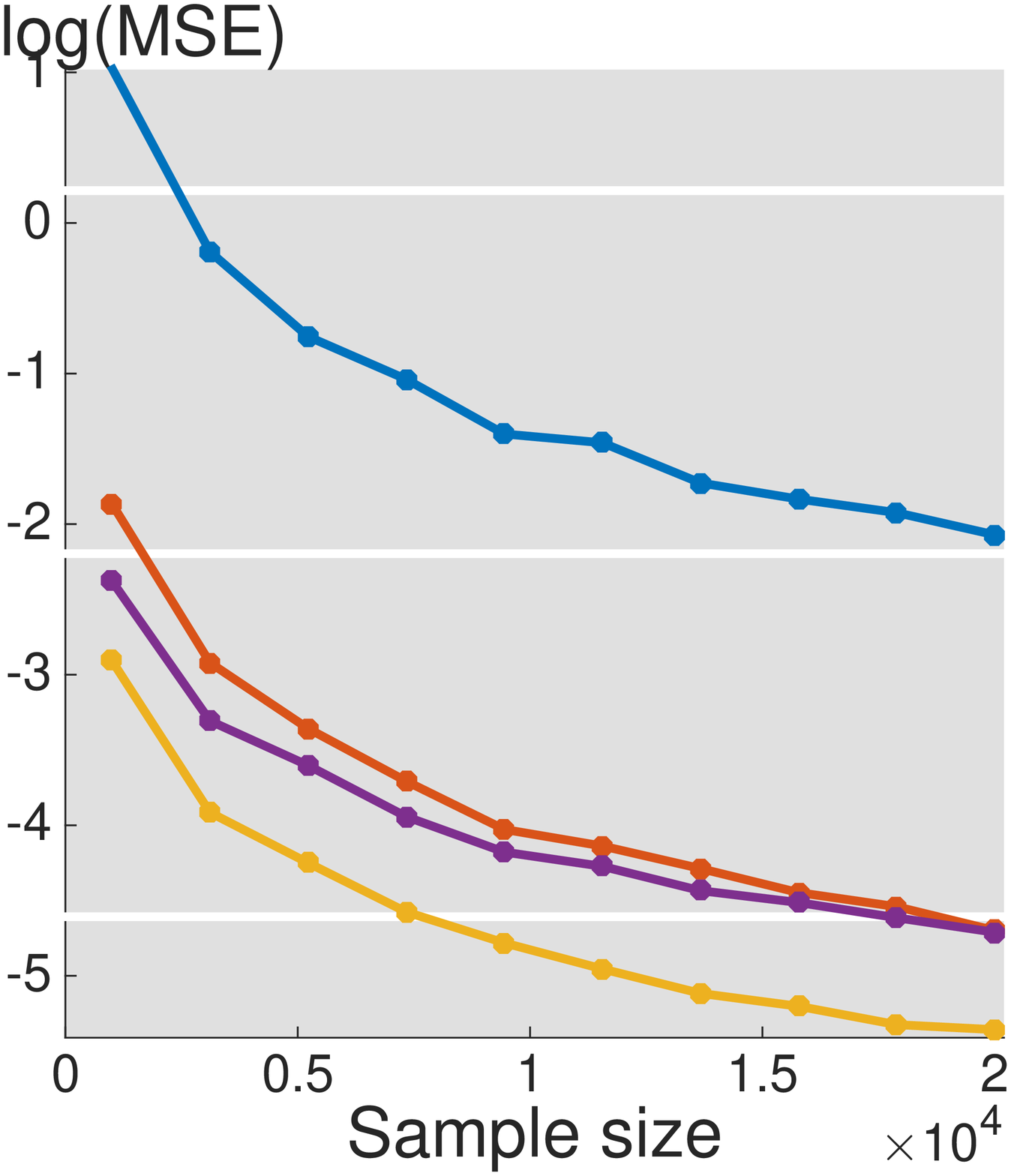} 
\\
      {\small (a)  $\beta = 0.5, \sigma^2 = 10^{-4}$. } & {\small (b) $\beta = 1, \sigma^2 = 10^{-4}$. } 
\\
\includegraphics[width=0.45\columnwidth]{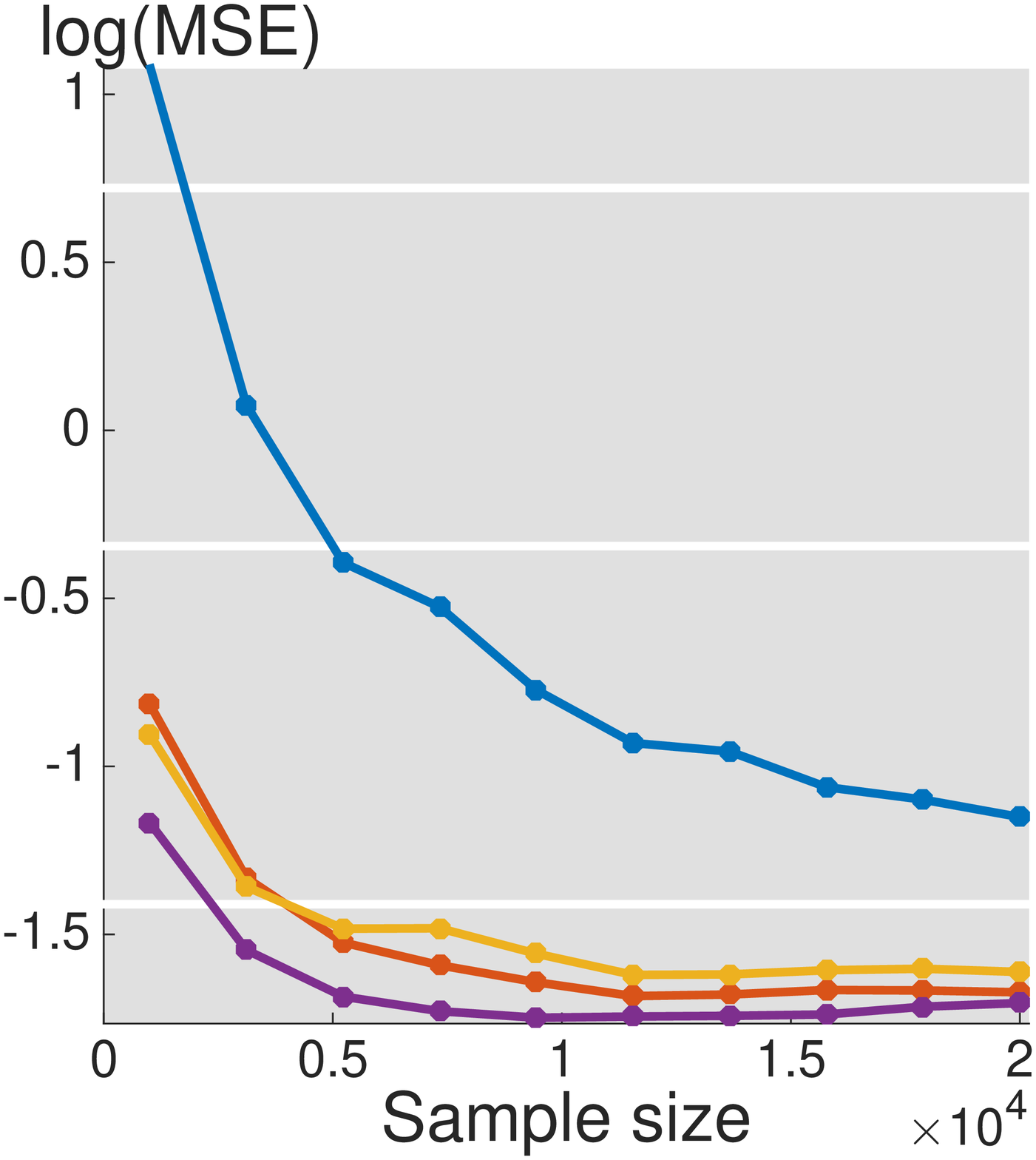}  & \includegraphics[width=0.45\columnwidth]{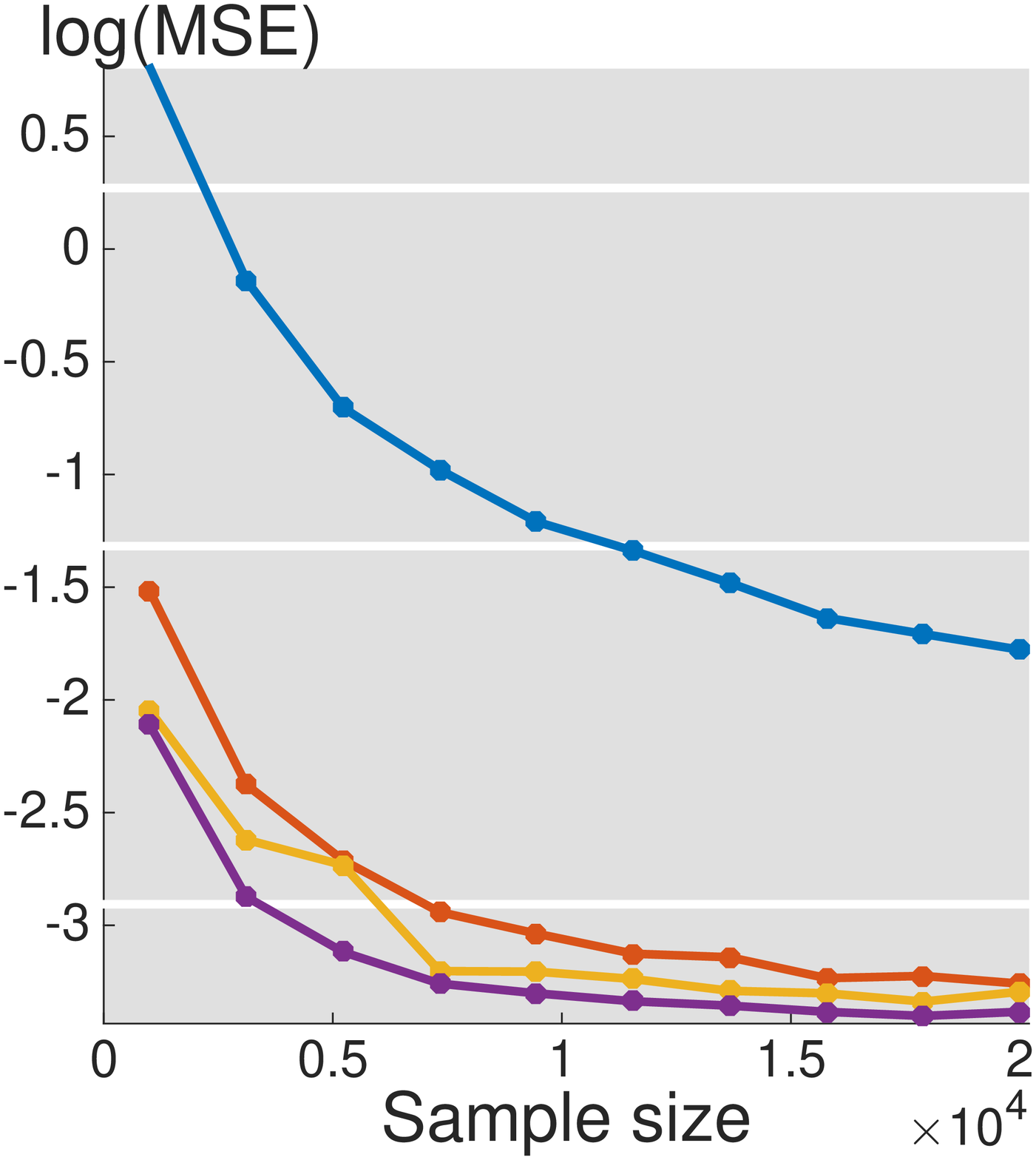} 
\\
      {\small (c)  $\beta = 0.5, \sigma^2 = 2 \times 10^{-2}$. } & {\small (d) $\beta = 1, \sigma^2 = 2 \times 10^{-2}$. }
 \\  
\end{tabular}
  \end{center}
  \caption{\label{fig:PL_Err} MSE comparison for the power-law graph
    for uniform sampling (in blue), leverage score based sampling (in
    orange), square root of the leverage score based sampling (in
    purple) and degree based sampling (in red).}
\end{figure}

\subsection{Discussion}

Graph Fourier basis is critical for understanding the graph
  structure. For example, given the graph Fourier basis,
  Theorem~\ref{thm:lower} shows that active sampling does not
  fundamentally outperform experimentally designed sampling while
  Corollary~\ref{cor:type2_opt} shows that experimentally designed
  sampling fundamentally outperforms uniform sampling on type-2
  graphs. In other words, graph Fourier basis is critical for
  choosing samples. 

For irregular (Type-2) graphs, using experimentally designed sampling
to choose anchor points can fundamentally aid semi-supervised learning
(not true for regular, Type-1 graphs), a technique for training
classifiers with both labeled and unlabeled data. Semi-supervised
learning assumes that unlabeled data can provide distribution
information to build a stronger classifier~\cite{Zhu:05}. Many
algorithms for semi-supervised learning are based on graphs that are
constructed from a given dataset~\cite{Zhu:05}, often by modeling each
node as a data sample and connecting two nodes by an edge if the
distance between their features is in a given range, which is similar
to the construction of random geometric graphs. Based on the
assumption that adjacent nodes have similar labels, semi-supervised
learning diffuses label probabilities from labeled data to unlabeled
data along the graph structure and classifies unlabeled data according
to those label probabilities. While in some
works~\cite{ZhuLG:03,BelkinN:03}, training data is selected uniformly
and randomly, in others, algorithms are designed adapted to the
structure~\cite{GuH:12, GaddeAO:14}, which is essentially equivalent
to experimentally designed sampling we propose. In other words,
experimentally designed sampling is used implicitly without being able
to articulate when and why it works; this paper, on the other hand,
provides a comprehensive explanation of why experimentally designed
sampling helps semi-supervised learning through showing the lower and
upper bounds of three sampling strategies.

\section{Conclusions}
\label{sec:conclusions}

We build a theoretical foundation for the recovery of a newly proposed class of smooth graph signals,
  approximately bandlimited signals, which generalizes the class of bandlimited graph signals, under uniform sampling,
experimentally designed sampling and active sampling. We show that experimentally designed sampling and
active sampling have the same fundamental limitations, and can
outperform uniform sampling on irregular graphs. We propose a recovery
strategy and analyze its statistical
properties. We show that the proposed recovery strategy attains the
optimal rates of convergence on two specific types of graphs. To
validate the recovery strategy, we test it on five specific types of graphs: a ring graph
with $k$ nearest neighbors, an Erd\H{o}s-R\'enyi graph, a random
geometric graph, a small-world graph and a power-law graph, and show
that experimental results match the proposed theory well. This work also gives a comprehensive explanation for why experimentally
designed sampling works for semi-supervised learning with graphs
and shows the critical role of the graph Fourier basis in
analyzing graph structures.

\section{Acknowledgment}
We gratefully acknowledge support from the NSF through awards 1130616,1421919,  the University Transportation Center grant (DTRT12-GUTC11) from the US Department of Transportation, and the CMU Carnegie Institute of Technology Infrastructure Award. We also thank the editor and the reviewers for comments that led to improvements in the manuscript. Initial parts of this work were presented at SampTA 2015~\cite{ChenVSK:15a}.

\bibliographystyle{IEEEbib}
\bibliography{bibl_jelena}

\pagebreak

\appendix
\section{Appendices}
\subsection{Proof of Theorem 3}
\label{sec:app1}
We aim to construct a typical set of vectors in $\F$, and use the
Fano's method. Let $\vv_k$ be the $k$th column of $\Vm$, $\vv^{(i)}$
be the $i$th row of $\Vm$, $ \mathcal{X}$ be a pruned hypercube and
\begin{displaymath}
  \F' = \{ \x^{(\w)} =  \Vm  \xhat \odot \w  = \sum_{k = \kappa_0}^{2 \kappa_0-1} w_k \psi_k, ~~\w \in \mathcal{X} \},
\end{displaymath}
where $\kappa_0$ is no smaller than the bandwidth $K$,
\begin{displaymath}
  \psi_k = \widehat{x}_k \vv_k  =  (\pm)^k \sqrt{ \frac{c \mu \left\| \x \right\|_2^2 }{ \kappa_0 (1+k^{2\beta} ) }}  \vv_k,
\end{displaymath}
and $0 < c < 1$. It is easy to check that $\F' \subseteq \F $.  Let $d
(\x, \y ) = \left\| \x - \y \right\|_2$; we thus have
\begin{eqnarray}
  d^2(\x^{(\w)} ,  \x^{(\u)}) & = & \left\|  \Vm  \xhat \odot ( \w  - \u )^2 \right\|_2^2
  \nonumber \\  \nonumber
  & = &  \sum_{k = \kappa_0}^{2 \kappa_0-1} (w_k - u_k)^2 \widehat{x}_k^2
  \nonumber \\  \nonumber
  & = &  \sum_{k = \kappa_0}^{2 \kappa_0-1} (w_k - u_k)^2 \frac{c \mu \left\| \x  \right\|_2^2 }{ \kappa_0 (1+k^{2\beta} ) }
  \nonumber \\  \nonumber
  & \stackrel{(a)}{\geq}  &   \sum_{k = \kappa_0}^{2 \kappa_0 -1} (w_k - u_k)^2  \cdot  \frac{c \mu  \left\| \x  \right\|_2^2 }{ \kappa_0 (1+ (2 \kappa_0 )^{2\beta} ) } 
  \nonumber \\  \nonumber
  & \stackrel{(b)}{\geq} &  \frac{  \kappa_0}{8} \cdot \frac{c \mu  \left\| \x  \right\|_2^2}{ \kappa_0 (1+ (2 \kappa_0 )^{2\beta} ) }
  \ \geq \  c_1 \mu \kappa_0^{-2\beta}   \left\| \x  \right\|_2^2,
\end{eqnarray}
where $(a)$ follows from $k \leq 2K$, and $(b)$ from the
Varshamov-Gilbert lemma.  To use the Fanno's method, we need to bound
the Kullback-Leibler divergence,
\begin{eqnarray*}
  && KL(p_{\w} , p_{\w_0} | \M) 
  \\
  & = & \sum_{i \in \M} \mathbb{E}_{\w} \left[  \log \frac{ p(y_i - x^{(\w)}_i) }{ p(y_i - x^{(\w_0)}_i) }   \right]
  \nonumber \\  \nonumber
  & \leq & \sum_{i \in \M} \left[  \frac{1}{2 \sigma^2} ( x^{(\w)}_i - x^{(\w_0)}_i )^2 \right]
  \\
  & = & \sum_{i \in \M} \left[  \frac{1}{2 \sigma^2} \left( {\vv^{(i)}}^* ( \xhat \odot \w ) \right)^2 \right]
  \nonumber \\  \nonumber
  & = & \frac{1}{2 \sigma^2} \sum_{i \in \M} \left(  \sum_{k= \kappa_0}^{2 \kappa_0 -1} \widehat{x}_k w_k \Vm_{ik}  \right)^2
  \nonumber \\  \nonumber
  & = & \frac{1}{2 \sigma^2} \sum_{i \in \M} \bigg(     \sum_{k= \kappa_0}^{2 \kappa_0-1}   \left( \widehat{x}_k w_k \Vm_{ik} \right)^2   
  \\
  && +  \sum_{k, k'= \kappa_0, k \neq k'}^{2 \kappa_0-1}   \widehat{x}_k \widehat{x}_{k'}   w_k w_{k'} \Vm_{ik} \Vm_{ik'}   \bigg)  
  \nonumber \\  \nonumber
  & \leq & \frac{1}{2 \sigma^2} \sum_{i \in \M} \bigg( \sum_{k= \kappa_0}^{2 \kappa_0-1}  \frac{c \mu \left\| \x \right\|_2^2  w_k^2 \Vm_{ik}^2 }{ 2\kappa_0 (1+k^{2\beta} ) }   
\end{eqnarray*}
\begin{eqnarray*}
  && +  \sum_{k, k'= \kappa_0, k \neq k'}^{2 \kappa_0-1}  (-1)^{k+k'} \frac{c \mu \left\| \x \right\|_2^2  w_k w_{k'} \Vm_{ik} \Vm_{ik'}  }{ \kappa_0 \sqrt{ 1+k^{2\beta} }  \sqrt{ 1+k'^{2\beta} }  }   \bigg)  
  \nonumber \\  \nonumber
  & \leq  &  \frac{c' \mu \left\| \x \right\|_2^2 }{\sigma^2} \sum_{i \in \M} \sum_{k= \kappa_0}^{2 \kappa_0-1}   \frac{\Vm_{ik}^2}{ \kappa_0 (1+\kappa_0^{2\beta} ) }    + \delta
  \nonumber \\  \nonumber
  & \asymp  & \frac{c' \mu \left\| \x \right\|_2^2}{ \sigma^2}  \kappa_0^{-(2\beta+1)} \sum_{i \in \M} \sum_{k= \kappa_0}^{2\kappa_0-1} \Vm_{ik}^2,
\end{eqnarray*}
where $\delta = \sum_{k, k'= \kappa_0, k \neq k'}^{2 \kappa_0-1}  (-1)^{k+k'} \frac{c \mu  \left\| \x \right\|_2^2 w_k w_{k'} \Vm_{ik} \Vm_{ik'}  }{ \kappa_0 \sqrt{ 1+k^{2\beta} }  \sqrt{ 1+k'^{2\beta} }} $ is small because of the cross signs.
For uniform sampling, the sampling set $\M$ is chosen randomly, thus, we have
\begin{eqnarray*}
  && KL(p_{\w} , p_{\w_0}  ) 
  \\
  & = &  \mathbb{E}_\M \left[ KL(p_{\w} , p_{\w_0}  | \M) \right]
  \\
  & \leq &  \frac{c \mu \left\| \x \right\|_2^2}{ \sigma^2}  \kappa_0^{-(2\beta+1)} \mathbb{E}_\M \left( \sum_{i \in \M} \sum_{k= \kappa_0}^{2 \kappa_0-1} \Vm_{ik}^2  \right) 
  \\
  & \stackrel{(a)}{=} &  \frac{c \mu \left\| \x \right\|_2^2}{ \sigma^2}  K^{-(2\beta+1)} \mathbb{E}_i \left( m \sum_{k= \kappa_0}^{2\kappa_0-1} \Vm_{ik}^2  \right) 
  \\
  & = & \frac{c \mu \left\| \x \right\|_2^2}{ \sigma^2}  K^{-(2\beta+1)} m \sum_{j=1}^N  \sum_{k= \kappa_0}^{2 \kappa_0-1} \Vm_{jk}^2 \mathbb{P} (j = i) 
  \\
  & = & \frac{c \mu \left\| \x \right\|_2^2}{ \sigma^2}  \kappa_0^{-(2\beta+1)} \frac{ m }{N} \sum_{j=1}^N  \sum_{k= \kappa_0}^{2\kappa_0-1} \Vm_{jk}^2 
  \\
  & \leq & \frac{c \mu \left\| \x \right\|_2^2 }{ \sigma^2 \kappa_0^{2\beta+1} N }   \left\| \Vm_{(2,\kappa_0)} \right\|_F^2  m,
\end{eqnarray*}
where $(a)$ follows from the independence of each sample,
$\mathbb{P}(j = i) $ denotes the probability to sample the $i$th node
that equals $j$, and $\left\| \Vm_{(2,\kappa_0)} \right\|_F^2 =
\sum_{j=1}^N \sum_{k= \kappa_0}^{2\kappa_0-1} \Vm_{jk}^2$.
 
For experimentally designed sampling, we can choose the sampling set
$\M$ to maximize $\sum_{i \in \M} \sum_{k= \kappa_0}^{2\kappa_0-1}
\Vm_{ik}^2$; we thus have
\begin{eqnarray*}
  KL(p_{\w} , p_{\w_0}) 
  & \leq  & \frac{c \mu \left\| \x \right\|_2^2 }{ \sigma^2}  \kappa_0^{-(2\beta+1)} \max_\M \sum_{i \in \M} \sum_{k= \kappa_0}^{2\kappa_0-1} \Vm_{ik}^2 
  \\
  & \leq  &   \frac{c \mu \left\| \x \right\|_2^2}{ \sigma^2}  \kappa_0^{-(2\beta+1)}  \left\| \Vm_{(2,\kappa_0)} \right\|_{2, \infty}^2 m.
\end{eqnarray*}
For active sampling, we cannot get more benefit from signal
coefficients, so the KL divergence is the same of the experimentally
designed sampling.r By Fanno's lemma, we finally have the lower bounds
for three sampling strategies.  \hfill$\blacksquare$

\subsection{Proof of Theorem 4}
\label{sec:app2}
We aim to bound the MSE by splitting to a bias term and a variance
term.
\begin{eqnarray*}
  && \mathbb{E} \left\| \x^* - \x \right\|_2^2
  \\
  & = & \mathbb{E} \left\| \x^* - \mathbb{E} \x^* + \mathbb{E} \x^* - \x \right\|_2^2
  \\
  & = & \mathbb{E} \left(  \left\| \x^* - \mathbb{E} \x^* \right\|_2^2  +  \left\| \mathbb{E} \x^* - \x \right\|_2^2  +  2 ( \x^* - \mathbb{E} \x^* )^T  (\mathbb{E} \x^* -  \x )  \right)
  \\
  & = & \left\| \mathbb{E} \x^* - \x \right\|_2^2  + \mathbb{E} \left\| \x^* - \mathbb{E} \x^* \right\|_2^2,
\end{eqnarray*}
where the first term is bias and the second term is variance.  For
each element in the bias term, we have
\begin{eqnarray*} 
  \mathbb{E}  x^*_i
  & = &   \sum_{k < \kappa} \Vm_{ik}  \mathbb{E}_{\M, \epsilon}  \left( \sum_{\M_j \in \M} \Um_{k \M_j}  \Dd^2_{\M_j,\M_j}  ( x_{\M_j} + \epsilon_{\M_j}) \right)
  \\
  & \stackrel{(a)}{=} &  \sum_{k < \kappa} \Vm_{ik} m  \mathbb{E}_{\ell}  \left( \Um_{k \ell}  \frac{1}{m \pi_{l}}  x_{\ell} \right)
  \\
  & = &  \sum_{k < \kappa} \Vm_{ik} m \sum_{\ell = 1}^N    \left( \Um_{k \ell}  \frac{1}{m \pi_{\ell}} x_{\ell} \right) \pi_{\ell}
  \\
  & = &  \sum_{k < \kappa} \Vm_{ik}  \sum_{\ell = 1}^N  \Um_{k \ell}  x_{\ell} 
  \\
  & = &  \sum_{k < \kappa} \Vm_{ik}  \widehat{x}_k,
\end{eqnarray*}
where $(a)$ follows from the independence of each sample. For all the
elements, we have
\begin{eqnarray*} 
  \mathbb{E}  \x^*  = \Vm_{(\kappa)} \widehat{\x}_{(\kappa)},
\end{eqnarray*}
which leads to Lemma 1.

We next bound the variance term by splitting into two parts, with and
without noise. For each element in the variance term, we have
\begin{eqnarray*} 
  &&  x_i^* - \mathbb{E} x_i^* 
  \\
  & = & \sum_{k < \kappa} \Vm_{ik} \left( \sum_{\M_j \in \M}  \Um_{k \M_j} \Dd^2_{\M_j, \M_j} y_{\M_j} \right) - \sum_{k < \kappa} \Vm_{ik} \widehat{x}_k
  \\
  & = & \sum_{k < \kappa} \Vm_{ik} \sum_{\M_j \in \M}  \Um_{k \M_j} \Dd^2_{\M_j, \M_j} \epsilon_{\M_j}  
  \\  
  && +  \sum_{k < \kappa} \Vm_{ik} \left( \sum_{\M_j \in \M}  \Um_{k \M_j} \Dd^2_{\M_j, \M_j} x_{\M_j}  -  \widehat{x}_k \right)
  \\
  & = & \Delta^{(1)}_i + \Delta^{(2)}_i, 
 \end{eqnarray*}
where  $\Delta^{(1)}$ is the variance from noise and $\Delta^{(2)}$ is the variance from sampling. To bound $\Delta^{(1)}_i $, we have
\begin{eqnarray*}
  && \mathbb{E} || \Delta^{(1)}_i ||^2  
  \\
  & = & \mathbb{E}  \bigg[  \bigg(  \sum_{k < \kappa} \Vm_{ik} \sum_{\M_j \in \M}  \Um_{k \M_j} \W_{\M_j, \M_j} \epsilon_{\M_j}  \bigg)  
  \\
  &&  \bigg(  \sum_{k' < \kappa} \Vm_{ik'} \sum_{\M_{j'} \in \M}  \Um_{k' \M_{j'}} \Dd^2_{\M_{j'}, \M_{j'}} \epsilon_{\M_{j'}}  \bigg)     \bigg]
  \\
  & = & \mathbb{E}  \bigg( \sum_{k, k' < \kappa}  \Vm_{ik} \Vm_{ik'}  \sum_{\M_j, \M_{j'} \in \M}  \Um_{k \M_j}  
  \\
  && \Um_{k' \M_{j'}} \Dd^2_{\M_j,\M_j} \Dd^2_{\M_{j'},\M_{j'}}  \epsilon_{\M_j} \epsilon_{\M_{j'}}  \bigg) 
  \\
  & = &  \sum_{k, k' < \kappa}  \Vm_{ik} \Vm_{ik'} m \mathbb{E}_{\ell, \epsilon} \left( \Um_{k \ell}^2  \frac{1}{m^2 \pi_{\ell}^2 }  \epsilon^2_{\ell} \right) 
  \\
  & = &  \sum_{k, k' < \kappa}  \Vm_{ik} \Vm_{ik'} m \sum_{\ell = 1}^N  \Um_{k \ell} \Um_{k' \ell}  \frac{1}{m^2 \pi_{\ell}^2 }  \mathbb{E}  \epsilon^2_{\ell}    \pi_{\ell}
  \\
  & = &  \sigma^2 \sum_{k, k' < \kappa}  \Vm_{ik} \Vm_{ik'}  \sum_{\ell = 1}^N   \frac{1 }{m \pi_{\ell} } \Um_{k \ell} \Um_{k' \ell}.
 \end{eqnarray*}
 
 To bound $\Delta^{(2)}_i $, we have
\begin{eqnarray*} 
  && \mathbb{E} \left\|   \Delta_i^{(2)}  \right\|_2^2
  \\
  & = &  \mathbb{E}  \Bigg[      \sum_{k < \kappa} \Vm_{ik} \left( \sum_{\M_j \in \M}  \Um_{k \M_j} \Dd^2_{\M_j, \M_j} x_{\M_j}  -  \widehat{x}_k \right)   
  \\
  &&~~~~   \sum_{k' < \kappa} \Vm_{ik'} \left( \sum_{\M_{j'} \in \M}  \Um_{k' \M_{j'}} \Dd^2_{\M_{j'}, \M_{j'}} x_{\M_{j'}}  -  \widehat{x}_{k'} \right)   \Bigg]
  \\
  & = &  \mathbb{E}  \Bigg[    \sum_{k, k' < K}  \Vm_{ik} \Vm_{ik'} \bigg(  \sum_{\M_j, \M_{j'} \in \M}  \Um_{k \M_j} \Um_{k' \M_{j'}}  \Dd^2_{\M_j, \M_j}  
  \\
  && \Dd^2_{\M_{j'}, \M_{j'} }   x_{\M_j} x_{\M_{j'}} -  \widehat{x}_k \widehat{x}_{k'}  \bigg)    \Bigg]
  \\
  & = &    \sum_{k, k' < \kappa}  \Vm_{ik} \Vm_{ik'}  \Bigg[   \mathbb{E}_\M \bigg( \sum_{\M_j \neq \M_{j'}, \M_j, \M_{j'} \in \M} \Um_{k \M_{j}} \Um_{k' \M_{j'}}  
  \\
  &&  x_{\M_{j}} x_{\M_{j'}} \bigg)  + \mathbb{E}_\M \bigg( \sum_{\M_{j} = \M_{j'}, \M_{j}, \M_{j'} \in \M} \Um_{k \M_{j}} \Um_{k' \M_{j'}}  
  \\
  &&   x_{\M_{j'}} x_{\M_{j'}} \bigg) -  \widehat{x}_k \widehat{x}_{k'}  \Bigg]
  \\
  & = &    \sum_{k, k' < \kappa}  \Vm_{ik} \Vm_{ik'}   \Bigg[  ( m^2 - m )\mathbb{E}_{\ell, \ell'}   \left( \frac{\Um_{k \ell} \Um_{k' \ell'} x_{\ell} x_{\ell'}  }{m^2 \pi_{\ell} \pi_{\ell '}   }  \right) 
  \\
  && + m \mathbb{E}_{\ell} \left( \frac{ \Um_{k \ell}  \Um_{k' \ell}  x_{\ell}^2 } { m^2 \pi_{\ell}^2 } \right) -  \widehat{x}_k \widehat{x}_{k'}  \Bigg]
  \\
  & = &    \sum_{k, k' < \kappa}  \Vm_{ik} \Vm_{ik'}   \Bigg[ ( m^2 -  m )  \sum_{\ell, \ell' = 1}^N  \frac{ \Um_{k \ell}  \Um_{k \ell'}   x_\ell x_{\ell'} }{ m^2 \pi_\ell \pi_{\ell'}  }   \pi_\ell \pi_{\ell'} 
\end{eqnarray*}
\begin{eqnarray*}
  && + m \sum_{\ell = 1}^N  \frac{ \Um_{k \ell}  \Um_{k' \ell}  x_{\ell}^2 } { m^2 \pi_{\ell}^2 } \pi_{\ell}  -  \widehat{x}_k \widehat{x}_{k'} \Bigg]
  \\
  & = &   \sum_{k, k' < \kappa}  \left( \Vm_{ik} \Vm_{ik'} \sum_{\ell = 1}^N  \Um_{k \ell}  \Um_{k' \ell}  \frac{ x_{\ell}^2 } { m \pi_{\ell} }  - 
    \frac{1}{m} \widehat{x}_k \widehat{x}_{k'} \right).
 \end{eqnarray*}
 
 We combine the bounds for both $\Delta^{(1)}_i $ and $\Delta^{(2)}_i
 $, and obtain the bounds for the variance term,
\begin{eqnarray*} 
  && \mathbb{E} \left\|     \x^* - \mathbb{E} \x^*  \right\|_2^2 
  \\
  & = & \sum_{i=1}^{N}    \mathbb{E} \left\|   x_i^* -  \mathbb{E}  x_i^*  \right\| 
  \\ 
  & = & \sum_{i=1}^{N}   \left(  \mathbb{E} \left\|   \Delta_i^{(1)}  \right\|_2^2 +  \mathbb{E} \left\|   \Delta_i^{(2)}  \right\|_2^2 \right)
  \\ 
  & = &   \sum_{i=1}^{N}  \sum_{k, k' < \kappa}  \Vm_{ik} \Vm_{ik'}  \sum_{\ell = 1}^N   \Um_{k \ell} \Um_{k \ell'}    \frac{ \sigma^2 + x_{\ell}^2 }{m \pi_{\ell} }  - \frac{1}{m} \sum_{k, k' < \kappa} \widehat{x}_k \widehat{x}_{k'}
  \\
  & = & {\rm Tr} \left( \Vm_{(\kappa)} \Um_{(\kappa)}  \W_C \Um_{(\kappa)}^T \Vm_{(\kappa)}^T \right) - \frac{1}{m} \left\|  \widehat{\x}_{(\kappa)} \right\|_2^2
  \\
  & = & {\rm Tr} \left( \Um_{(\kappa)}  \W_C \Vm_{(\kappa)} \right) - \frac{1}{m} \left\|  \widehat{\x}_{(\kappa)} \right\|_2^2,
  \end{eqnarray*}
  which leads to Lemma 2. Finally, we obtain the MSE by combine the
  bias term and the variance term,
  \begin{eqnarray*}
    && \mathbb{E} \left\| \x^* - \x \right\|_2^2
    \\
    & = & \left\| \mathbb{E} \x^* - \x \right\|_2^2  + \mathbb{E} \left\| \x^* - \mathbb{E} \x^* \right\|_2^2
    \\
    & = &  \left\| \Vm_{(-\kappa)} \widehat{x}_{(-\kappa)} \right\|_2^2 +   {\rm Tr} \left( \Um_{(\kappa)}  \W_C \Vm_{(\kappa)} \right) - \frac{1}{m} \left\|  \widehat{\x}_{(\kappa)} \right\|_2^2
    \\
    & \leq & \frac{1}{1 + \kappa^{2\beta}} \sum_{k \geq \kappa}  \widehat{x}_k^2 (1 + k^{2\beta}) + {\rm Tr} \left( \Um_{(\kappa)}  \W_C \Vm_{(\kappa)} \right)
    \\
    & \leq & \frac{\mu }{1 + \kappa^{2\beta}}  \left\| \x \right\|_2^2 + {\rm Tr} \left( \Um_{(\kappa)}  \W_C \Vm_{(\kappa)} \right).
\end{eqnarray*}
\hfill$\blacksquare$

\subsection{Proof of Corollaries 3 and 4}
\label{sec:app3}
For a Type-1 graph, we assume that each element in an approximately
bandlimited signal has a similar magnitude. Since all the elements in
$\Vm$ have the same magnitude, each element of a graph signal, $x_i =
\vv_i^T \widehat{\x}$, should have a similar magnitude. In other
words, $N\max_i x_i^2$ and $\left\| \x \right\|_2^2$ are of the same
order.

For uniform sampling, based on Corollary 1, we have
\begin{eqnarray*}
  && 
  \frac{\mu }{1 + \kappa^{2\beta}}  \left\| \x \right\|_2^2 + \frac{N}{m} \sum_{k=1}^{\kappa} \sum_{i=1}^N \Um_{k,i}^2  \left( x_i^2 + \sigma^2 \right) 
  \\
  & \leq & \left\| \x \right\|_2^2 \left(   \frac{\mu }{ \kappa^{2\beta}} + \frac{N \left( \max_i x_i^2 + \sigma^2 \right) }{m  \left\| \x \right\|_2^2} 
    \sum_{k=1}^{\kappa} \sum_{i=1}^N \Um_{k,i}^2  \right) 
  \\
  & \stackrel{(a)}{=}  &  \left\| \x \right\|_2^2 \left(   \frac{\mu }{ \kappa^{2\beta}} + \frac{ N (\max_i x_i^2 + \sigma^2) } {\left\| \x \right\|_2^2}  \frac{ \kappa}{m} \right) 
  \\
  & \leq & \left\| \x \right\|_2^2 \left(   \frac{\mu }{ \kappa^{2\beta}} + \frac{ c \kappa}{m}   \right) 
  \\
  & \asymp & C \left\| \x \right\|_2^2 m^{-\frac{2\beta}{2\beta+1}},
\end{eqnarray*}
where $(a)$ follows from $\Um$ being orthornormal, $\kappa$ being of the order of $m^{\frac{1}{2\beta+1}}$, and $C > 0$
some constant.  Since at least the sampled projection estimator
satisfies this rate of convergence, we have
\begin{eqnarray*}
  \inf_{(\x^*, \M) \in \Theta_{\rm u} } \sup_{\x \in \BLT_{\Adj}(K, \beta, \mu)} \mathbb{E}_{\x, \M} \left(  \frac{ \left\| \x^* - \x \right\|_2^2 } {\left\| \x \right\|_2^2  } \right) \leq  C m  ^{-\frac{2\beta}{2\beta+1}}.
\end{eqnarray*}

We next show the lower bound.  Based on Theorem 3, we have
\begin{eqnarray*}
  && \inf_{(\x^*, \M) \in \Theta_{\rm u} } \sup_{\x \in \BLT_{\Adj}(K, \beta, \mu)} \mathbb{E}_{\x, \M} \left(  \left\|  \x^* - \x \right\|_2^2  \right) 
  \\
  & \geq & \frac{c_1 \mu \left\|  \x \right\|_2^2  } {\kappa_0^{2\beta}}  \left( 1 -   \frac{ c \mu  \left\| \x \right\|_2^2 } { \sigma^2 \kappa_0^{2\beta+2} } \left\| \Vm_{(2,\kappa_0)} \right\|_{F}^2 m \right),
  \\
  & = & \frac{c_1 \mu \left\|  \x \right\|_2^2  } {\kappa_0^{2\beta}}  \left( 1 -   \frac{ c \mu  \left\| \x \right\|_2^2 } { \sigma^2 \kappa_0^{2\beta+1} N} m  \right)
  \\
  & \geq & \frac{c_1 \mu \left\|  \x \right\|_2^2  } {\kappa_0^{2\beta}}  \left( 1 -   \frac{ c \mu   \max_i  x_i^2  } { \sigma^2 \kappa_0^{2\beta+1} } m  \right)
  \\
  & \asymp & c  \left\|  \x \right\|_2^2 m^{-\frac{2 \beta}{2\beta+1}},
\end{eqnarray*}
where $\kappa_0$ is  of the order of $m^{\frac{1}{2\beta+1}}$. 

For optimal sampling scores based sampling, based on Corollary 2, we have
  \begin{eqnarray*}
    && \frac{\mu }{1 + \kappa^{2\beta}}  \left\| \x \right\|_2^2 +  \frac{1}{m} \left( \sum_{i=1}^N \sqrt{  \sum_{k=1}^{\kappa} \Um_{k,i}^2  \left( x_i^2 + \sigma^2 \right)  }  \right)^2
    \\
    & \leq &  \left\| \x \right\|_2^2  \left( \frac{\mu }{1 + \kappa^{2\beta}}  +  \frac{1}{m}  \frac{ \max_i x_i^2 + \sigma^2  }{\left\| \x \right\|_2^2}  \left\|  \Um_{(\kappa)} \right\|_{2,1} \right)
    \\
    & \stackrel{(a)}{ \leq }  &  \left\| \x \right\|_2^2  \left( \frac{\mu }{1 + \kappa^{2\beta}}  +   \frac{ \max_i x_i^2 + \sigma^2  }{\left\| \x \right\|_2^2} \frac{ N \kappa}{m}   \right)
    \\
    & \asymp & C \left\| \x \right\|_2^2 m^{-\frac{2\beta}{2\beta+1}},
\end{eqnarray*}
  where $(a)$ follows from that, based on Definition 5, $ \left\|
    \Um_{(\kappa)} \right\|_{2, 1}^2 = \left( N \sqrt{\kappa
      (\frac{c}{\sqrt{N}})^2} \right)^2 = O(N \kappa)$, which is of
  the same order of $N \left\| \Um_{(\kappa)} \right\|_{F}^2$. Since
  at least the sampled projection estimator satisfies this rate of
  convergence, we have
\begin{eqnarray*}
  \inf_{(\x^*, \M) \in \Theta_{\rm e} } \sup_{\x \in \BLT_{\Adj}(K, \beta, \mu)} \mathbb{E}_{\x, \M} \left( \frac{ \left\| \x^* - \x \right\|_2^2 } {\left\| \x \right\|_2^2} \right) \leq  C m^{-\frac{2\beta}{2\beta+1}}.
\end{eqnarray*}

Based on Definition 5, $ \left\| \Vm_{(2,\kappa_0)} \right\|_{\infty,
  2}^2 = \kappa_0 (\frac{c}{\sqrt{N}})^2 = c^2 \kappa_0/N$, which is
of the same order of $\left\| \Vm_{(2,\kappa_0)} \right\|_{F}^2/N$, we
have
\begin{eqnarray*}
  && \inf_{(\x^*, \M) \in \Theta_{\rm e} } \sup_{\x \in \BLT_{\Adj}(K, \beta, \mu)} \mathbb{E}_{\x, \M} \left(  \left\|  \x^* - \x \right\|_2^2  \right) 
  \\
  & \geq & \frac{c_1 \mu \left\|  \x \right\|_2^2  } {\kappa_0^{2\beta}}  \left( 1 -   \frac{ c \mu  \left\| \x \right\|_2^2 } { \sigma^2 \kappa_0^{2\beta+2} } \left\| \Vm_{(2,\kappa_0)} \right\|_{\infty, 2}^2 m \right),
  \\
  & = & \frac{c_1 \mu \left\|  \x \right\|_2^2  } {\kappa_0^{2\beta}}  \left( 1 -   \frac{ c \mu  \left\| \x \right\|_2^2 } { \sigma^2 \kappa_0^{2\beta+1} N} m \right)
  \\
  & \geq & \frac{c_1 \mu \left\|  \x \right\|_2^2  } {\kappa_0^{2\beta}}  \left( 1 -   \frac{ c \mu   \max_i  x_i^2  } { \sigma^2 \kappa_0^{2\beta+1} } m \right)
  \\
  & \asymp & c  \left\|  \x \right\|_2^2  m^{-\frac{2 \beta}{2\beta+1}},
\end{eqnarray*}
where $\kappa_0$ is of the order of $m^{\frac{1}{2\beta+1}}$.
\hfill$\blacksquare$

\subsection{Proof of Corollaries 5 and 6}
\label{sec:app4}
For a Type-2 graph, we assume that a few elements in an approximately
bandlimited signal have much higher magnitudes than the others.  The
intuition is that, since $\Vm$ is sparse, the energy concentrates in
$O(\kappa)$ rows of $\Vm_{(\kappa)}$, thus, $O(\kappa)$ components of
an approximately bandlimited signal, $x_i \approx \vv_{i,(\kappa)}^T
\widehat{\x}_{(\kappa)}$, have much higher magnitudes than the
others. In other words, $\kappa \max_i x_i^2$ and $\left\| \x
\right\|_2^2$ are of the same order.
   
For uniform sampling, based on Corollary 1, we have
\begin{eqnarray*}
    && 
    \frac{\mu }{1 + \kappa^{2\beta}}  \left\| \x \right\|_2^2 + \frac{N}{m} \sum_{k=1}^{\kappa} \sum_{i=1}^N \Um_{k,i}^2  \left( x_i^2 + \sigma^2 \right) 
    \\
    & \leq & \left\| \x \right\|_2^2 \left(   \frac{\mu }{ \kappa^{2\beta}} + \frac{N \left( \max_i x_i^2 + \sigma^2 \right) }{m  \left\| \x \right\|_2^2} 
      \sum_{k=1}^{\kappa} \sum_{i=1}^N \Um_{k,i}^2  \right) 
    \\
    & = &  \left\| \x \right\|_2^2 \left(   \frac{\mu }{ \kappa^{2\beta}} + \frac{ N (\max_i x_i^2 + \sigma^2) } { \kappa \max_i x_i^2}  \frac{ \kappa}{m} \right) 
    \\
    & \leq & \left\| \x \right\|_2^2 \left(   \frac{\mu }{ \kappa^{2\beta}} + \frac{ c N}{m}   \right) 
    \\
    & \asymp & C \left\| \x \right\|_2^2 m^{-\frac{2\beta}{2\beta+\gamma}},
\end{eqnarray*}
where $\gamma$ varies with $\kappa$ to satisfy $\kappa^\gamma \leq N$
($\gamma > 1$) and $\kappa$ is of the order of
$m^{\frac{1}{2\beta+\gamma}}$, and $C > 0 $ is some constant. Since at
least Algorithm 1 satisfies this rate of convergence, we thus have
\begin{eqnarray*}
  \inf_{(\x^*, \M) \in \Theta_{\rm u} } \sup_{\x \in \BLT_{\Adj}(K, \beta, \mu)} \mathbb{E}_{\x, \M} \left(  \left\| \x^* - \x \right\|_2^2 \right) \leq  C m  ^{-\frac{2\beta}{2\beta+\gamma}}.
\end{eqnarray*}

We next show the lower bound.  Based on Theorem 3, we have
\begin{eqnarray*}
  && \inf_{(\x^*, \M) \in \Theta_{\rm u} } \sup_{\x \in \BLT_{\Adj}(K, \beta, \mu)} \mathbb{E}_{\x, \M} \left(  \left\|  \x^* - \x \right\|_2^2  \right) 
  \\
  & \geq & \frac{c_1 \mu \left\|  \x \right\|_2^2  } {\kappa_0^{2\beta}}  \left( 1 -   \frac{ c \mu  \left\| \x \right\|_2^2 } { \sigma^2 \kappa_0^{2\beta+2} N } \left\| \Vm_{(2,\kappa_0)} \right\|_{F}^2 m \right),
  \\
  & = & \frac{c_1 \mu \left\|  \x \right\|_2^2  } {\kappa_0^{2\beta}}  \left( 1 -   \frac{ c \mu  \kappa_0 \max_i x_i^2 } { \sigma^2 \kappa_0^{2\beta+1} N} m  \right)
  \\
  & = & \frac{c_1 \mu \left\|  \x \right\|_2^2  } {\kappa_0^{2\beta}}  \left( 1 -   \frac{ c \mu  \max_i x_i^2 } { \sigma^2 \kappa_0^{2\beta+\gamma} } m  \right)
  \\
  & \asymp & c m^{-\frac{2 \beta}{2\beta+\gamma}},
\end{eqnarray*}
where $\kappa_0$ is of the order of $m^{\frac{1}{2\beta+\gamma}}$. 

For optimal sampling scores based sampling, based on Corollary 2, we have
\begin{eqnarray*}
     && \frac{\mu }{1 + \kappa^{2\beta}}  \left\| \x \right\|_2^2 +  \frac{1}{m} \left( \sum_{i=1}^N \sqrt{  \sum_{k=1}^{\kappa} \Um_{k,i}^2  \left( x_i^2 + \sigma^2 \right)  }  \right)^2
     \\
     & \leq &  \left\| \x \right\|_2^2  \left( \frac{\mu }{1 + \kappa^{2\beta}}  +  \frac{1}{m}  \frac{ \max_i x_i^2 + \sigma^2  }{\left\| \x \right\|_2^2}  \left\|  \Um_{(\kappa)} \right\|_{2,1}^2 \right)
     \\
     & \stackrel{(a)}{\leq}  &  \left\| \x \right\|_2^2  \left( \frac{\mu }{1 + \kappa^{2\beta}}  +   \frac{  \max_i x_i^2 + \sigma^2  }{\left\| \x \right\|_2^2} \frac{\kappa^2}{m}   \right)
     \\
     & = &  \left\| \x \right\|_2^2  \left( \frac{\mu }{1 + \kappa^{2\beta}}  +   \frac{  \max_i x_i^2 + \sigma^2  }{  \kappa \max_i x_i^2 } \frac{\kappa^2}{m}   \right)
     \\
     & \asymp & C \left\| \x \right\|_2^2 m^{-\frac{2\beta}{2\beta+1}},
\end{eqnarray*} 
where $(a)$ follows from the energy concentrating in $O(\kappa)$
columns of $\Um_{(\kappa)}$ as shown in Definition 6 and the upper
bound reaching the minimum when $\kappa$ is of the order of
$m^{\frac{1}{2\beta+1}}$.  Since at least Algorithm 1 satisfies this
rate of convergence, we have
\begin{eqnarray*}
  \inf_{(\x^*, \M) \in \Theta_{\rm e} } \sup_{\x \in \BLT_{\Adj}(K, \beta, \mu)} \mathbb{E}_{\x, \M} \left(  \left\| \x^* - \x \right\|_2^2 \right) \leq  C m  ^{-\frac{2\beta}{2\beta+1}}.
\end{eqnarray*}

Based on Definition 6, $ \left\| \Vm_{(2, \kappa_0)} \right\|_{\infty,
  2}^2 = c $, we have
\begin{eqnarray*}
  && \inf_{(\x^*, \M) \in \Theta_{\rm e} } \sup_{\x \in \BLT_{\Adj}(K, \beta, \mu)} \mathbb{E}_{\x, \M} \left(  \left\|  \x^* - \x \right\|_2^2  \right) 
  \\
  & \geq & \frac{c_1 \mu \left\|  \x \right\|_2^2  } {\kappa_0^{2\beta}}  \left( 1 -   \frac{ c \mu  \left\| \x \right\|_2^2 } { \sigma^2 \kappa_0^{2\beta+2} } \left\| \Vm_{(2, \kappa_0)} \right\|_{\infty, 2}^2 m \right),
  \\
  & = & \frac{c_1 \mu \left\|  \x \right\|_2^2  } {\kappa_0^{2\beta}}  \left( 1 -   \frac{ c \mu \kappa_0  \max_i x_i^2 } { \sigma^2 \kappa_0^{2\beta+2} } m  \right)
  \\
  & \asymp & c m^{-\frac{2 \beta}{2\beta+ 1}}.
\end{eqnarray*}
\hfill$\blacksquare$

\end{document}